\documentclass[nofootinbib,twocolumn,pra,letterpaper,longbibliography]{revtex4-2}
\usepackage{latexsym,amsmath,amssymb,amsfonts,graphicx,color,amsthm}
\usepackage{enumerate}
\usepackage{enumitem}
\usepackage{braket}
\usepackage{adjustbox}
\usepackage{footnote}
\usepackage[dvipsnames]{xcolor}
\usepackage{tipa}
\usepackage{textcomp}
\usepackage{fontenc}
\usepackage{mathtools}
\usepackage{mathrsfs}
\usepackage{color}
\usepackage{colortbl}
\usepackage{graphics,graphicx}
\usepackage{txfonts}
\usepackage{lipsum, babel}
\usepackage{bbm}
\usepackage{amsmath}
\usepackage{amssymb}

\usepackage[unicode=true,pdfusetitle, bookmarks=true,bookmarksnumbered=false,bookmarksopen=false, breaklinks=false,pdfborder={0 0 0},backref=false,colorlinks=false] {hyperref}
\hypersetup{ colorlinks,linkcolor=myurlcolor,citecolor=myurlcolor,urlcolor=myurlcolor}

\definecolor{myurlcolor}{rgb}{0,0,0.7}

\usepackage{float}
\usepackage{hyperref}

\newcommand{\cB}{\mathcal{B}}
\newcommand{\cC}{\mathcal{C}}
\newcommand{\cD}{\mathcal{D}}

\newcommand{\cF}{\mathcal{F}}
\newcommand{\cG}{\mathcal{G}}
\newcommand{\cH}{\mathcal{H}}
\newcommand{\cI}{\mathcal{I}}
\newcommand{\cJ}{\mathcal{J}}
\newcommand{\cK}{\mathcal{K}}
\newcommand{\cL}{\mathcal{L}}
\newcommand{\cM}{\mathcal{M}}
\newcommand{\cN}{\mathcal{N}}
\newcommand{\cO}{\mathcal{O}}
\newcommand{\cP}{\mathcal{P}}

\newcommand{\cR}{\mathcal{R}}
\newcommand{\cS}{\mathcal{S}}
\newcommand{\cT}{\mathcal{T}}

\newcommand{\cV}{\mathcal{V}}
\newcommand{\cW}{\mathcal{W}}
\newcommand{\cX}{\mathcal{X}}
\newcommand{\cY}{\mathcal{Y}}

\newcommand{\Id}{\mathbbm{1}}
\newcommand{\tr}{Tr}

\newtheorem{theorem}{Theorem}
\newtheorem{proposition}{Proposition}
\newtheorem{lemma}{Lemma}
\newtheorem{corollary}{Corollary}
\newtheorem{definition}{Definition}
\newtheorem{example}{Example}

\newtheorem{remark}{Remark}

\begin{document}

\title{Instrument-based quantum resources: quantification, hierarchies and constructing resource theories}

\author{Jatin Ghai}

\email{jghai98@gmail.com}

\affiliation{Optics \& Quantum Information Group, The Institute of Mathematical Sciences, CIT Campus, Taramani, Chennai 600113, India.}
\affiliation{Homi Bhabha National Institute, Training School Complex, Anushakti Nagar, Mumbai 400085, India.}

\author{Arindam Mitra}
\email{am56@iitbbs.ac.in}
\email{arindammitra143@gmail.com}
 \affiliation{Department of Physics, School of Basic Sciences, Indian Institute of Technology Bhubaneswar, Odisha 752050, India.}
\affiliation{Korea Research Institute of Standards and Science, Daejeon 34113, South Korea.}

\date{\today}

\begin{abstract}
Quantum resources are certain features of the quantum world that provide advantages in certain information-theoretic, thermodynamic, or other useful operational tasks that are outside the realm of what classical theories can achieve. Quantum resource theories provide us with an elegant framework for studying these resources quantitatively and rigorously. While numerous state-based quantum resource theories have already been investigated, and to some extent, measurement-based resource theories have also been explored, instrument-based resource theories remain largely unexplored, with only a few notable exceptions. As quantum instruments are devices that provide both the classical outcomes of induced measurements and the post-measurement quantum states, they are quite important, especially for scenarios where multiple parties sequentially act on a quantum system. In this work, we study several instrument-based resource theories, namely (1) the resource theory of information preservability, (2) the resource theory of (strong) entanglement preservability, (3) the resource theory of (strong) incompatibility preservability, (4) the resource theory of traditional incompatibility, and (5) the resource theory of parallel incompatibility. Furthermore, we outline the hierarchies of these instrument-based resources and provide measures to quantify them. We then also established a relationship between our resource measure and the advantage in an information-theoretic task. In short, we provide \emph{a detailed framework} for a wide variety of instrument-based quantum resource theories.
\end{abstract}

\maketitle
\section{Introduction}

In quantum theory, there are certain features that do not have any classical analogues \textit{e.g.}~entanglement, coherence, incompatibility, etc. Such elements can be considered as quantum resources, as they provide advantages in certain information-theoretic \cite{Skrzypczyk_incomp_state_disc, bennet_teleportation, Uola_q_res_exclus,Carmeli_incomp_meas_qrac,Mori_incomp_chan_state_disc,debasish_incom_random_acces_code} and thermodynamical tasks \cite{kwon_coh_heat_engine,bresque_eng_ent,Wang_exp_eng_ent,shi_coh_heat_eng} beyond the scope of classical physics. Thus, it is important to quantify exactly how useful these resources are in various such tasks. A natural approach to accomplish this is the framework of quantum resource theories \cite{Chitambar_QRT_review,Gour_QRT_book}. A plethora of resource theories have been developed for a variety of quantum resources. Some of such examples would be resource theories of entanglement \cite{Horodecki_review_entang}, coherence \cite{Baumgratz_coh_RT, Winter_coh_RT, Bischof_coh_RT}, incompatibility of measurements \cite{Buscemi_meas_incomp}, measurement coherence \cite{baek_quantifying}, measurement sharpness \cite{Mitra_sharp_meas,Buscemi_sharp_meas}, incompatibility of channels \cite{Mori_incomp_chan_state_disc}, traditional incompatibility of  instruments \cite{chitambar_PID}, etc. Some of these resources are inherent properties of the quantum states (e.g., entanglement, coherence etc.), while others are properties of individual quantum measurements (e.g., measurement coherence, measurement sharpness etc.) or a set of measurements (e.g., measurement incompatibility) or, going a step further, even of the quantum instruments (e.g., traditional incompatibility of instruments).

As discussed above, numerous quantum state-based resource theories have already been widely explored in the literature, and measurement-based resource theories have also been studied to some extent. But except for a very few cases, quantum instrument-based resource theories have not been explored much to the best of our knowledge. Quantum instruments are the devices that provide both the classical outcomes of individual measurements and post-measurement states. These are the essential elements of quantum measurement theory\cite{Ozawa_1984_inst,Heinossari_2018_Qubit} and are useful devices in sequential or multiparty scenarios (such as nonlocality sharing\cite{Colbeck_2020_NonLocal}, steerability sharing\cite{Sasmal_2018_Steering}, advantage in quantum random access codes\cite{Mohan_2019} etc.) where for an example, first party may perform an instrument on their quantum state and the second party may perform an operation on the post-measurement state, depending on the classical outcome. Thus, there exist many properties of quantum instruments that can be considered as resources and, therefore, there is a potential to explore a large number of instrument-based resource theories that are useful for quantum information technologies. For example, the traditional incompatibility that is a property of a set of instruments has already been shown to be a resource for programmable quantum instruments in Ref. \cite{chitambar_PID}. However, there exist many other instrument-based quantum resources that have not yet been explored in detail. Here, our \emph{motivation} is to study several instrument-based quantum resources in a resource-theoretic framework.

In this work, we try to construct and study several instrument-based quantum resource theories, namely (1) the resource theory of information preservability, (2) the resource theory of (strong) entanglement preservability, (3) the resource theory of (strong) incompatibility preservability, (4) the resource theory of traditional incompatibility (already constructed in Ref. \cite{chitambar_PID} and therefore, here we provide more insight), and (5) the resource theory of parallel incompatibility. We also study the hierarchies of these instrument-based quantum resources and provide resource measures to quantify them in an elegant way. In short, we try to provide \emph{a complete framework} for several instrument-based quantum resource theories. Our work provides deep insight into all the above-said instrument-based resources and raises the scope of important future research directions (e.g., resource conversion, catalysis, etc., for all of the above-said resource theories). To the best of our knowledge, a detailed resource-theoretic characterisation of such a variety of resources for quantum instruments has not been done in the literature previously. For more details on the importance and acope of our work, we refer the readers to Sec. \ref{Sec:conc}.

The rest of the paper is organized as follows. In Sec. \ref{Sec:Prelim}, we discuss the preliminaries. More specifically, in Sec. \ref{Subsec:Prelim:QM_QC_QI}, we discuss the basic concepts of quantum measurements, quantum channels, and quantum instruments. In Sec. \ref{Subsec:Prelim:DM}, we discuss a distance measure for quantum measurements and quantum channels using the diamond norm. In Sec. \ref{Subsec:Prelim:QRT}, we discuss the resource-theoretic framework for a generic instrument-based resource. From Sec. \ref{Sec:Main}, we start presenting our main results. More specifically, in Sec. \ref{Subsec:Main:gen_QI_resource}, we study the quantification and a distance measure for a generic instrument-based quantum resource. In Sec. \ref{Sec:SDP_res_meas}, we try to provide a method to compute our distance measure and resource measures using SDP. In Sec. \ref{Subsec:Main:QI_QRTs}, we construct and explore several instrument-based resource theories and study their hierarchy. In Sec. \ref{Sec:inf_theo_task}, we establish a relationship between our resource measure and advantage in an information-theoretic task. In Sec. \ref{Sec:conc}, we summarize our work and discuss future research directions.

\section{Preliminaries}
\label{Sec:Prelim}
\subsection{Quantum Measurements, Quantum Channels and Quantum Instruments}
\label{Subsec:Prelim:QM_QC_QI}

A set $M=\{M(x)\in\cL(\cH)\}_{x\in\Omega_{M}}$ of positive semidefinite operators acting on a Hilbert space $\cH$ is said to constitute a quantum measurement if $\sum_{x\in\Omega_M}M(x)=\mathbbm{1}_{\cH}$ where $\mathbbm{1}_{\cH}$ is the identity matrix on Hilbert space $\cH$ \cite{Heinosaari_book_QF}. Here, $\Omega_M$ is the set of outcomes for the measurement $M$ and $\cL(\cH)$ is the set of all linear operators on $\cH$. Each $M(x)$ is termed as a POVM element of the measurement $M$ for given $x$. If we have $M^2(x)=M(x)~\forall x\in\Omega_M$, then $M$ is known to be a projective measurement. From now on, we will assume that all measurements have a finite number of outcomes and act on finite-dimensional Hilbert spaces. If the measurement M is performed on a system with arbitrary quantum state $\rho\in\cS(\cH)$, where $\cS(\cH)$ is the set of density matrices on $\cH$, then $\tr[\rho M(x)]$ gives the probability of obtaining the outcome $x$. We will refer to the set of all the measurements acting on the Hilbert space $\cH$ as $\mathscr{M}(\cH)$. Next, we denote the one-outcome trivial measurement acting on the Hilbert space $\cK$ as $\cT_\cK$ i.e., $\cT_\cK=\{\mathbbm{1}_\cK\}$. Operationally, this is equivalent to performing ``no measurement". If we have two quantum measurements $M\in\mathscr{M}(\cH)$ and $N\in\mathscr{M}(\cK)$ then $M\otimes N=\{M(x)\otimes N(y)\}_{x\in\Omega_M,y\in\Omega_N}\in\mathscr{M}(\cH\otimes\cK)$ with the outcome set being $\Omega_\cH\times\Omega_\cK$. A trivially enlarged version of $M$ can be defined as $\widehat{M}_{\cH\otimes\cK}=\{\widehat{M}_{\cH\otimes\cK}(x)=M(x)\otimes\mathbbm{1}_{\cK}\}\in\mathscr{M}(\cH\otimes\cK)$.

Two arbitrary measurements $M\in\mathscr{M}(\cH)$ and $N\in\mathscr{M}(\cH)$ with outcome sets $\Omega_M$ and $\Omega_N$ are said to be compatible if there exists a measurement $G=\{G(x,y)\in\mathscr{M}(\cH)\}$ with outcome set $\Omega_M\times\Omega_N$ such that \cite{Heinosaari_incomp_review}
\begin{align}
    M(x)&=\sum_{y\in\Omega_N}G(x,y)\qquad\forall x\in\Omega_M,\\
    N(y)&=\sum_{x\in\Omega_M}G(x,y)\qquad\forall y\in\Omega_N.
\end{align}
The measurement $G$ is known as the joint measurement of the pair of measurements $M$ and $N$. Thus, by performing the measurement $G$, we can simultaneously implement both the measurements $M$ and $N$. This definition of compatibility is generalised to an arbitrary number of measurements in a similar way as above. A set of measurements that does not have a joint measurement is called incompatible \cite{Heinosaari_incomp_review}.

A quantum channel transforms an arbitrary density matrix to another density matrix. Mathematically, it is represented by a completely positive and trace-preserving linear map (CPTP) $\Lambda:\cL(\cH)\rightarrow\cL(\cK)$ \cite{Heinosaari_book_QF}. Equivalently, the action of the dual map $\Lambda^{\dagger}:\cL(\cK)\rightarrow\cL(\cH)$ in the Heisenberg picture can be defined by the equation
\begin{equation}
    \tr[\Lambda(A)B]=\tr[A\Lambda^{\dagger}(B)],\label{Eq:Duality}
\end{equation}
where $A\in\cL(\cH)$ and $B\in\cL(\cK)$ \cite{Heinosaari_book_QF}. The set of all quantum channels with input Hilbert space $\cL(\cH)$ and output Hilbert space $\cL(\cK)$ is denoted by $\mathscr{C}(\cH,\cK)$. Composition of two quantum channels $\Lambda_1\in\mathscr{C}(\cH,\cH_1)$ and $\Lambda_2\in\mathscr{C}(\cH_1,\cK)$ is defined as $\Lambda(\rho):=\Lambda_1\circ\Lambda_2(\rho):=\Lambda_1(\Lambda_2(\rho))$. Evidently, $\Lambda\in\mathscr{C}(\cH,\cK)$. In the literature, people have studied a wide variety  of quantum channels due to their unique actions on the quantum states.

For example, we have a depolarising quantum channel $\Gamma_d^t:\cL(\cH)\rightarrow\cL(\cH)$ which probabilistically add white noise to any quantum state. The action of it on an arbitrary quantum state $\rho\in\cS(\cH)$ is mathematically represented as
\begin{align}
    \Gamma_d^t(\rho)=t\rho+(1-t)\frac{\mathbbm{1}_{\cH}}{d},\label{depolarising}
\end{align}
where $\frac{-1}{d^2-1}\leq t\leq1$ and $d$ is the dimension of the Hilbert space $\cH$. From the above definition, it is clear that the depolarizing channels are unital. For $d=2$ \textit{i.e.} for qubits the Krauss operators of this depolarising channel are given by $\sqrt{\frac{1+3t}{4}}\mathbbm{1}_{2\times 2},\sqrt{\frac{1-t}{4}}\sigma_x,\sqrt{\frac{1-t}{4}}\sigma_y$ and $\sqrt{\frac{1-t}{4}}\sigma_z$. We will use some properties of depolarsing channels to study the properties of some classes of quantum channels.

Another important class of quantum channels is the class of channels that break the entanglement of any bipartite state when it is acted on one side of that bipartite state. These are called entanglement-breaking channels (EBC). Formally, a quantum operation (i.e., a completely positive trace non-increasing map) $\Lambda:\cL(\cH_A)\rightarrow\cL(\cK)$ is entanglement-breaking if for all $\rho_{AB}\in\cS(\cH_A\otimes\cH_B)$, $\Lambda\otimes\mathbbm{I}_{\cH_{\cB}}(\rho_{AB})$ is a separable sub-normalised state for $\cH_\cB$ of an arbitrary dimension. Mathematically, an arbitrary entanglement-breaking quantum operation $\Lambda:\cL(\cH)\rightarrow\cL(\cK)$ can be written as
\begin{align}
    \Lambda(\rho)=\sum_a\rho_a\tr[\rho A(a)],\label{Eq:EB_Oper}
\end{align}
where $A(a)\geq 0~\forall a$, $\sum_a A(a)\leq \Id_{\cH}$ and $\rho_a\in\cS(\cK)~\forall a$ \cite{Horodecki_gen_EBC}. If $\{A(x)\}$ is a measurement, (i.e.,$\sum_a A(a)=\Id_{\cH}$) then $\Lambda$ is an entanglement-breaking channel. The entanglement-breaking channels form a convex set \cite{Horodecki_gen_EBC, Ruskai_qubit_EBC}. It is worth mentioning that the Choi matrix of a channel is separable iff it is entanglement-breaking \cite{Horodecki_gen_EBC, Ruskai_qubit_EBC}. A depolarising channel in Eq. (\ref{depolarising}) is EBC for $t\leq\frac{1}{1+d}$ \cite{Heinosaari_incomp_break_chan}. The set of EBC acting on a Hilbert space $\cH$ is denoted as $\mathscr{C}^{EBC}_{\cH}$. For an arbitrary quantum channel $\Theta:\cL(\cK)\rightarrow\cL(\cK^{\prime})$, it is known that $\tilde{\Lambda}:=\Theta\circ\Lambda\in\mathscr{C}^{EBC}_{\cH}$, if $\Lambda\in\mathscr{C}^{EBC}_{\cH}$. 


As we have already discussed, an arbitrary set of $n$ measurements $\{M_1, M_2,\ldots, M_n\}$ can be incompatible \textit{i.e.} there doesn't exist a joint measurement to produce the outcomes of all of them simultaneously with accurate marginal probability. There exists a class of channels called $n$-incompatibility breaking channels ($n$-IBC) whose action on this set of measurements renders them compatible. Mathematically, $\Lambda$ is $n$-IBC if the set $\{(\Lambda)^{\dagger}(M_1),(\Lambda)^{\dagger}(M_2),\ldots,(\Lambda)^{\dagger}(M_n)\}$ is compatible for an arbitrary set of measurements $\{M_1,\ldots,M_n\}$ \cite{Heinosaari_incomp_break_chan}. Just like EBCs, the set of all $n$-IBCs also forms a convex set. The channel $\Gamma^t_d$ in Eq. (\ref{depolarising}) is $n$-IBC whenever $t\leq \frac{n+d}{n(1+d)}$ \cite{Heinosaari_incomp_break_chan}. The set of $n$-IBC acting on a Hilbert space ${\cH}$ is represented as $\mathscr{C}^{n-IBC}_{\cH}$. For an arbitrary quantum channel $\Theta:\cL(\cK)\rightarrow\cL(\cK^{\prime})$, it is known that $\tilde{\Lambda}:=\Theta\circ\Lambda\in\mathscr{C}^{n-IBC}_{\cH}$, if $\Lambda\in\mathscr{C}^{n-IBC}_{\cH}$. \cite{Heinosaari_incomp_break_chan}. 

A channel $\Lambda$ which is $n$-IBC for all $n\geq 2$ is called an incompatibility-breaking channel (IBC) \cite{Heinosaari_incomp_break_chan}. $\Gamma^t_d$ in Eq. (\ref{depolarising}) is IBC whenever $t\leq \frac{(3d-1)(d-1)^{(d-1)}}{d^d(d+1)}$ \cite{Heinosaari_incomp_break_chan}. For qubits, we have $t\leq\frac{5}{12}$. The set of IBCs acting on a Hilbert space $\cH$ is represented as $\mathscr{C}^{IBC}_{\cH}$. Again, for an arbitrary quantum channel $\Theta:\cL(\cK)\rightarrow\cL(\cK^{\prime})$, it is known that $\tilde{\Lambda}:=\Theta\circ\Lambda\in\mathscr{C}^{IBC}_{\cH}$, if $\Lambda\in\mathscr{C}^{IBC}_{\cH}$. \cite{Heinosaari_incomp_break_chan}. It should be mentioned that the hierarchy $\mathscr{C}^{IBC}_{\cH}\subset\ldots\subset\mathscr{C}^{n-IBC}_{\cH}\subset\ldots\subset\mathscr{C}^{2-IBC}_{\cH}$ along-with $\mathscr{C}^{EBC}_{\cH}\subset\mathscr{C}^{IBC}_{\cH}$ have been proved in Ref. \cite{Heinosaari_incomp_break_chan}. 

Similar to the measurements, the notion of incompatibility also exists for quantum channels. Two channels $\Lambda_1:\cL(\cH)\rightarrow\cL(\cK_1)$ and $\Lambda_2:\cL(\cH)\rightarrow\cL(\cK_2)$ are said to be compatible if there exists a channel $\Lambda:\cL(\cH)\rightarrow\cL(\cK_1\otimes\cK_2)$ such that \cite{Heinosaari_incomp_review, Heinosaari_incomp_chan} 
\begin{align}
    \Lambda_1&=\tr_{\cK_2}\circ\Lambda\nonumber\\
    \Lambda_2&=\tr_{\cK_1}\circ\Lambda.
\end{align}
Here, $\Lambda$ is called the joint channel of the pair of channels $(\Lambda_1, \Lambda_2)$. This definition of compatibility can again be extended to an arbitrary set of channels in a similar way.

One channel can be transformed into another channel through a superchannel \cite{Chiribella_sup_chan,Gour_compar_sup_chan}. Suppose we have a quantum channel $\Lambda\in\mathscr{C}(\cH_1,\cH_2)$. Than a superchannel $\hat{\Xi}$ transforms it into a channel $\hat{\Xi}(\Lambda)\in\mathscr{C}(\cK_1,\cK_2)$. Mathematically, it can be represented as \cite{Chiribella_sup_chan,Gour_compar_sup_chan}:
\begin{equation}
    \hat{\Xi}(\Lambda)=\Theta_{post}\circ(\Lambda\otimes\mathbbm{I}_\mathfrak{R})\circ\Theta_{pre},\label{supermap}
\end{equation}
where quantum channel $\Theta_{pre}:\cL(\cK_1)\rightarrow\cL(\cH_1\otimes\mathfrak{R})$ is called the pre-processing channel, $\Theta_{post}:\cL(\cH_2\otimes\mathfrak{R})\rightarrow\cL(\cK_2)$ is called the post-processing channel, and $\mathfrak{R}$ is an ancillary Hilbert space.

A quantum instrument is the simultaneous generalization of quantum measurements and quantum channels. Mathematically, it is defined as a set of CP maps, $\mathbf{I}=\{\Phi_a:\cL(\cH)\rightarrow\cL(\cK)\}_{a\in\Omega_\mathbf{I}}$ such that $\Phi=\sum_{a\in\Omega_\mathbf{I}}\Phi_a$ is a quantum channel \cite{Heinosaari_book_QF}. Given a quantum state $\rho$, the number $a$ is the classical output of the quantum instrument, while $\Phi_a(\rho)$ is its quantum output, both occurring with probability $\tr[\Phi_a(\rho)]$. We denote the set of all such quantum instruments with input Hilbert space $\cH$ and output Hilbert space $\cK$ as $\mathscr{I}(\cH,\cK)$. A quantum instrument induces a unique POVM $\{A(a)\}$ such that $\tr[\Phi_a(\rho)]=\tr[\rho A(a)]$ for all $\rho\in\cS(\cH)$ and $a\in\Omega_{\mathbf{I}}$. In fact it is straightforward to see that $A(a)=\Phi_a^{\dagger}(\mathbbm{1}_{\cK}) ,~\forall~a$ where $\Phi_a^{\dagger}$ is the dual map for the CP map $\Phi_a$ defined in the similar way as in Eq. \eqref{Eq:Duality}. Although this induced POVM is unique, but for a given POVM, there exist many different instruments that implement it.

Just as in the case of quantum channels, one quantum instrument can be transformed into another quantum instrument using the notion of post-processing. For two quantum instruments, $\mathbf{I}=\{\Phi_a\}_{a\in\Omega_{\mathbf{I}}}\in\mathscr{I}(\cH,\cK)$ and $\tilde{\mathbf{I}}=\{\tilde{\Phi}_b\}_{b\in\Omega_{\tilde{\mathbf{I}}}}\in\mathscr{I}(\cH,\tilde{\cK})$, instrument $\tilde{\mathbf{I}}$ is said to be the post processing of the instrument $\mathbf{I}$ if there exists a set of quantum instruments $\{\mathbf{P}^a=\{P^a_b\}_{b\in\Omega_{\tilde{\mathbf{I}}}}\in\mathscr{I}(\cK,\tilde\cK)\}_{a\in\Omega_{\mathbf{I}}}$ such that \cite{Leevi_postproc_instrument}
\begin{align}
    \tilde\Phi_b=\sum_{a\in\Omega_{\mathbf{I}}}P^a_b\circ\Phi_a\qquad\forall b\in\Omega_{\tilde{\mathbf{I}}}.
\end{align}
Consider a measurement $B=\{B(b)\}_{b\in\Omega_B}\in\mathscr{M}(\cK)$. Then the quantum instrument $\mathbf{I}=\{\Phi_a\}\in\mathscr{I}(\cH,\cK)$ transforms it into another measurement $\mathbf{I}^{\dagger}[B]$ as
\begin{align}
    \mathbf{I}^{\dagger}[B]=\{\Phi^{\dagger}_a(B(b))\}_{(a,b)\in(\Omega_{\mathbf{I}}\times\Omega_B)},
\end{align}
with its output set being $\Omega_{\mathbf{I^{\dagger}[B]}}=\Omega_{\mathbf{I}}\times\Omega_b$.
As mentioned above, quantum measurements and quantum channels are special cases of quantum instruments; the concept of compatibility for quantum instruments can be similarly defined. In fact, in the literature, there are two different notions of instrument compatibility. The first notion is called traditional compatibility, and it is defined as:
\begin{definition}[Traditional Compatibility]
    A pair of quantum instruments $\mathbf{I}_1=\{\Phi^1_a:\cL(\cH)\rightarrow\cL(\cK)\}$ and $\mathbf{I}_2=\{\Phi^2_a:\cL(\cH)\rightarrow\cL(\cK)\}$ are called traditionally compatible\cite{heinosaari_strong_incomp_dev,Lever_Ph.D._Thesis,Uola_quant_qd_inp_outp_games} if there exist a joint instrument $\mathbf{I}=\{\Phi_{(a,b)}:\cL(\cH)\rightarrow\cL(\cK)\}$ such that 
\begin{align}
    \Phi^1_a=&\sum_b\Phi_{(a,b)}\qquad\forall a,\\
\Phi^2_b=&\sum_a\Phi_{(a,b)}\qquad\forall b.
\end{align}
Otherwise, the pair is traditionally incompatible. \label{Def:trad_comp}
\end{definition}
 Here, $\mathbf{I}$ is called the traditional joint instrument of the pair of quantum instruments $(\mathbf{I}_1, \mathbf{I}_2)$. It simultaneously produces both the classical outputs of two instruments, and the quantum output of either one of the instruments can be recovered by classical post-processing. This can be extended to an arbitrary set of quantum instruments in a similar way. These are denoted by $\mathscr{I}_{TC}(\cH\otimes\cK)$.

There is another related concept of weak compatibility, which states that two instruments $\mathbf{I}_1=\{\Phi^1_a\}$ and $\mathbf{I}_2=\{\Phi^2_b\}$ are weakly compatible if there exists a quantum channel such that $\sum_a\Phi^1_a=\sum_b\Phi^2_b=\Phi$. Traditional compatibility implies weak compatibility, but the converse is not true in general. The another notion is called parallel compatibility and it is defined as:

\begin{definition}[Parallel Compatibility]
    A pair of quantum instruments $\mathbf{I}_1=\{\Phi^1_a:\cL(\cH)\rightarrow\cL(\cK_1)\}$ and $\mathbf{I}_2=\{\Phi^2_a:\cL(\cH)\rightarrow\cL(\cK_2)\}$ are called parallel compatible\cite{Mitra_comp_inst,Mitra_char_quantifying_incomp_inst,Leevi_incomp_inst} if there exist a joint instrument $\mathbf{I}=\{\Phi_{(a,b)}:\cL(\cH)\rightarrow\cL(\cK_1\otimes\cK_2)\}$ such that
\begin{align}
    \Phi^1_a=&\sum_b\tr_{\cK_2}\circ\Phi_{(a,b)}\qquad\forall a,\label{Eq:Chan_para_comp_1}\\
\Phi^2_b=&\sum_a\tr_{\cK_1}\circ\Phi_{(a,b)}\qquad\forall b\label{Eq:Chan_para_comp_2}.
\end{align} 
Otherwise, the pair is parallel incompatible.\label{Def:para_comp}
\end{definition}

Here in R.H.S. of Eq. \eqref{Eq:Chan_para_comp_1}, we have used shorthand notation $\tr_{\cK_2}\circ\Phi_{(a,b)}$ for $(\mathbbm{I}_{\cK_1}\otimes\tr_{\cK_2})\circ\Phi_{(a,b)}$. The notation of the R.H.S. of Eq.\eqref{Eq:Chan_para_comp_2} is similar, and this kind of shorthand notation will be used when needed. Again $\mathbf{I}$ is called the joint instrument of the pair of quantum instruments $(\mathbf{I}_1, \mathbf{I}_2)$. Here, both classical and quantum outputs of two instruments are reproduced on a tensor product Hilbert space. This can be extended again to an arbitrary set of quantum instruments in a similar way. These are denoted by $\mathscr{I}_{PC}(\cH\otimes\cK)$

\begin{remark}
    \rm{It is worth mentioning that the two notions of incompatibility defined above are conceptually distinct. Not all parallel compatible instruments are traditionally compatible and vice versa i.e., one does not imply the other \cite{Mitra_comp_inst}. Furthermore, parallel compatibility can capture the notions of both measurement compatibility and channel compatibility while traditional compatibility can capture the notion of only measurement compatibility and cannot capture the notion of channel compatibility\cite{Mitra_comp_inst}.  It should also be mentioned that the parallel compatible set of instruments remains parallel compatible under post-processing \cite{Mitra_char_quantifying_incomp_inst}.  
    }\label{Rem:one}
\end{remark}

\subsection{A distance measure for measurements and channels via Diamond norm}
\label{Subsec:Prelim:DM}

For two quantum channels $\Lambda_1\in\mathscr{C}(\cH_A,\cK_A)$ and $\Lambda_2\in\mathscr{C}(\cH_B,\cK_B)$, the diamond distance between them is defined as
\begin{align}
    \cD_{\Diamond}(\Lambda_1, \Lambda_2):=&\mid\mid \Lambda_1-\Lambda_2\mid\mid_{\Diamond}\\
    =&\max_{\rho_{AB}\in\cS(\cH_A\otimes\cH_B)}\mid\mid \Lambda_1\otimes\mathbbm{I}_{\cH_B}(\rho)-\Lambda_2\otimes\mathbbm{I}_{\cH_B}(\rho)\mid\mid_1,\label{Eq:Dia_formula_H_p}
\end{align}
where $\mid\mid.\mid\mid_1$ is the trace norm, and $\dim(\cH_A)=\dim(\cH_B)$ \cite{Watrous_book_TQI}.

It is well-known that $\cD_{\Diamond}$ is monotonically non-increasing
under arbitrary pre-processing and post-processing channels,
or more generally, under an arbitrary super-channel. In
other words, for an arbitrary super channel $\hat{\Xi}$ that transforms
arbitrary $\Lambda_i\in\mathscr{C}(\cH_1,\cH_2)$ to  $\hat{\Xi}(\Lambda_i)\in\mathscr{C}(\cK_1,\cK_2)$ for $i\in\{1,2\}$,
we have \cite{Watrous_book_TQI}
\begin{align}
    \cD_{\Diamond}(\hat{\Xi}(\Lambda_1), \hat{\Xi}(\Lambda_2))\leq\cD_{\Diamond}(\Lambda_1, \Lambda_2).
\end{align}

Moreover, the diamond distance satisfies the joint convexity property. Mathematically, for quantum channels $\Lambda_1,\Lambda_2,\Psi_1,\Psi_2\in\mathscr{C}(\cH,\cK)$, we have
\begin{align}
    &\cD_{\Diamond}(p\Lambda_1+(1-p)\Psi_1,p\Lambda_2+(1-p)\Psi_2)\nonumber\\
    &\qquad\qquad\qquad\qquad\leq p\cD_{\Diamond}(\Lambda_1,\Lambda_2)+(1-p)\cD_{\Diamond}
(\Psi_1,\Psi_2),\label{Eq:diamond_dist_convexity}\end{align}
for all $0\leq p\leq1$ \cite{Watrous_book_TQI}.

Instead of two individual quantum channels, if we consider two ordered sets of channels $\cC_1=\{\Lambda_i\in\mathscr{C}(\cH,\cK)\}^n_{i=1}$ and $\cC_2=\{\Psi_i\in\mathscr{C}(\cH,\cK)\}^n_{i=1}$, then one way to define the distance between them is the following \cite{Mitra_distance_measure_MBQR}:

\begin{align}
    \overline{\cD}(\cC_1,\cC_2)=\max_{i\in\{1,\ldots,n\}}\cD_{\Diamond}(\Lambda_i,\Psi_i).\label{Eq:def_dist_meas_of_set_chan}
\end{align}

Clearly, $\overline{\cD}(\cC_1,\cC_2)=0$ \textit{iff} $\cC_1=\cC_2$. Hence, the distance measure is faithful. From now on, we will omit the word ordered. The distance defined in Eq. \eqref{Eq:def_dist_meas_of_set_chan} also satisfies the joint convexity property as\cite{Mitra_distance_measure_MBQR}
\begin{align}
    \overline{\cD}(p\cC_1+(1-p)\cC^{\prime}_1,p\cC_2+&(1-p)\cC^{\prime}_2)\nonumber\\
    &\leq p\overline{\cD}(\cC_1,\cC_2)+(1-p)\overline{\cD}(\cC^{\prime}_1,\cC^{\prime}_2),\label{Eq:Dist_set_chan_convex}
\end{align}
where $\cC^{\prime}_1=\{\Lambda^{\prime}_i\in\mathscr{C}(\cH,\cK)\}^n_{i=1}$ and $\cC^{\prime}_2=\{\Psi^{\prime}_i\in\mathscr{C}(\cH,\cK)\}^n_{i=1}$

Suppose that instead of transforming a single quantum channel into another quantum channel, we want to transform a finite set of quantum channels $\cC=\{\Lambda_i\in\mathscr{C}(\cH_1,\cH_2)\}$ to another finite set of quantum channels. Then a fairly general transformation can be written as \cite{Mitra_distance_measure_MBQR}
\begin{align}
    [\cV(\cC)]_j=\Theta^j_{post}\circ(\Sigma_C\otimes\mathbbm{I}_{\mathfrak{R}})\circ\Theta^j_{pre},\label{gensupermap}
\end{align}
where $[\cV(\cC)]_j$ is the $j$th element of the transformed set $\cV(\cC)$, the quantum channels $\Theta^j_{pre}:\cL(\cK_1)\rightarrow\cL(\cH_1\otimes\cH_I\otimes\mathfrak{R})$ and $\Theta^j_{post}:\cL(\cH_2\otimes\cH_I\otimes\mathfrak{R})\rightarrow\cL(\cK_2)$ for all $j$ with $\dim(\cH_I)=\mid\cC\mid$  where the cardinality of a set is denoted by the symbol $\mid.\mid$. Here, $\Sigma_C=\sum_i\Lambda_i\otimes\overline{\Phi}_i$, with $\overline{
\Phi}_i(.)=\bra{i}(.)\ket{i}\ket{i}\bra{i}$ for $\{\ket{i}\}$ being an orthonormal basis in the Hilbert space $\cH_\cI$, is called the \emph{controlled implementation} of channels in the set $\cC$. It is illustrated in Fig, \eqref{fig_cont_imp}. It is easy to see that if the sets $\cC$ and $\mathbf{\cV}(\cC)$ contain only one channel each, then the transformation in Eq. \eqref{gensupermap}
reduces to the one in Eq. \eqref{supermap}.

\begin{figure}[!h]
    \centering
    \includegraphics[height=140px, width =256px]{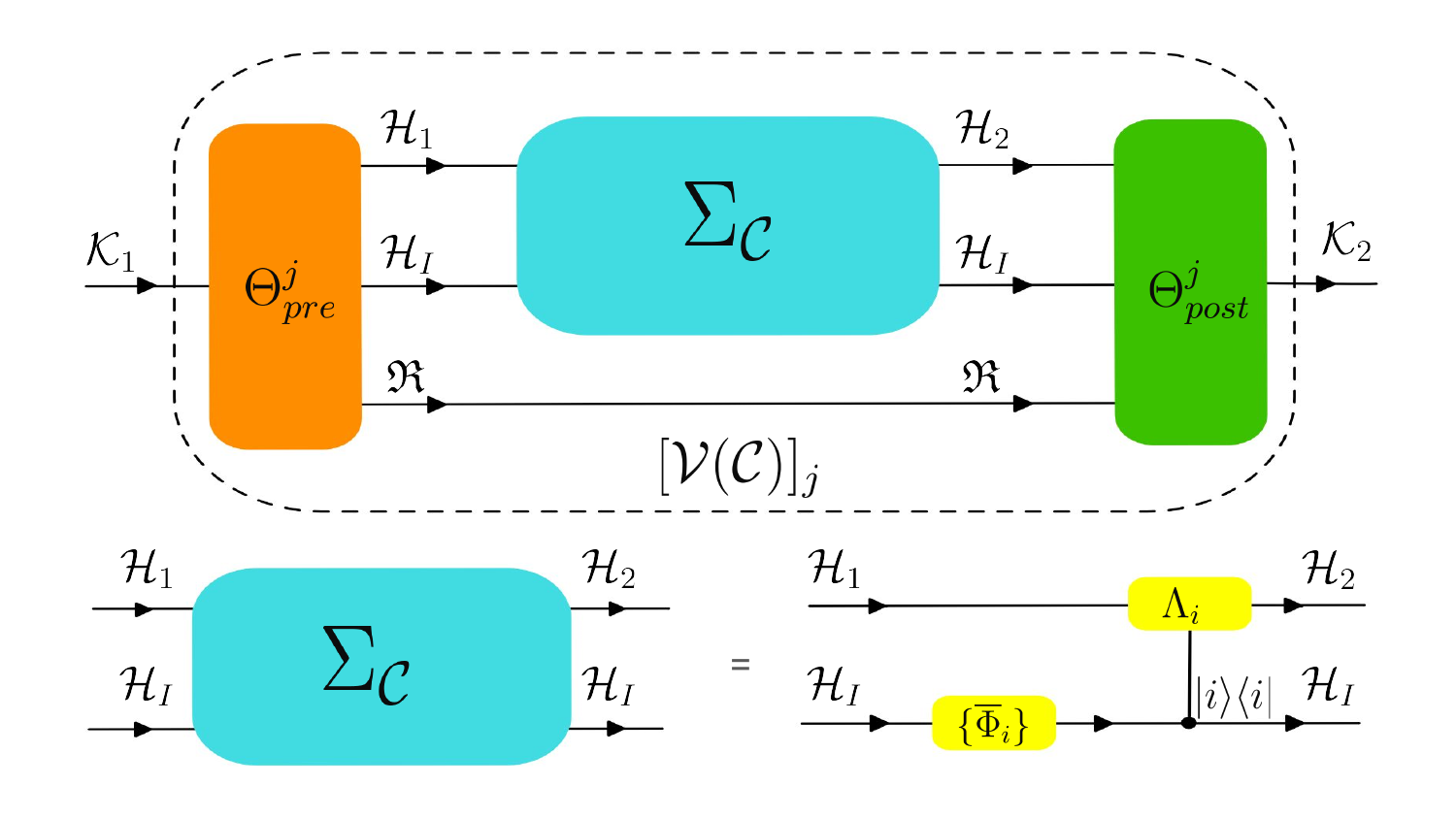}
     \caption{This represents a schematic diagram for the fairly general transformation of a set of quantum channels described in Eq. \eqref{gensupermap}. In this figure, we have shown the input and output Hilbert spaces of this transformation, as it is more relevant for our purpose of illustration. This general transformation will be used comprehensively throughout the paper.}\label{fig_cont_imp} 
\end{figure}

Similar to channels under the general transformation defined in Eq. (\ref{gensupermap}) the diamond distance is contractive \cite{Mitra_distance_measure_MBQR} i.e.,
\begin{align}
    \overline{\cD}(\cV[\cC_1],\cV[\cC_2])\leq\ \overline{\cD}(\cC_1, \cC_2).\label{Eq:supermap_contractive_distance}
\end{align}

An arbitrary measurement $M=\{M(x)\}\in\mathscr{M}(\cH_A)$ can be associated with a measure-prepare channel 
\begin{align}
    \Gamma_{M}(\rho)=\sum_{a}\tr[\rho M(a)]\ket{a}\bra{a}
\end{align}
for all $\rho\in\cL(\cH_A)$ where $\{\ket{a}\}$ is a chosen orthonormal basis of Hilbert space $\cH_{\Omega_M}$ with dimension $|\Omega_M |$ \cite{Tendick_dist_res_meas}. Then, the distance between two measurements can be defined as \cite{Tendick_dist_res_meas}

\begin{align}
    \cD_{\Diamond}(M_1, M_2)&:=\cD_{\Diamond}(\Gamma_{M_1}, \Gamma_{M_2})\nonumber\\
    &=\mid\mid \Gamma_{M_1}-\Gamma_{M_2}\mid\mid_{\Diamond}.
\end{align}

Similar to  the case of channels, if instead of two individual measurements, we have two sets of measurements $\cM=\{M_i\}$ and $\cN=\{N_i\}$, the distance between them can be defined as

\begin{align}
    \widetilde{\cD}(\cM,\cN):=&\overline{\cD}(\cG_{\cM},\cG_{\cN})\nonumber\\
    =&\max_{i\in\{1,\ldots,n\}}\cD_{\Diamond}(\Gamma_{\cM},\Gamma_{\cN})\label{Eq:def_dist_meas_of_set_meas},
\end{align}
where $\cG_{\cM}:=\{\Gamma_{M_i}\}$ and $\cG_{\cN}:=\{\Gamma_{N_i}\}$ \cite{Mitra_distance_measure_MBQR}.

\subsection{Formal aspects of a generic instrument-based quantum resource theory}
\label{Subsec:Prelim:QRT}

A generic quantum resource theory has two major constituents-\textit{(i) free objects} and \textit{(ii) free transformations}. Free objects are those elements of a given resource theory which does not contain that particular resource. These can be quantum states, quantum measurements, quantum channels, and quantum instruments, etc., depending on that particular resource theory. On the other hand, free transformations of a given resource theory are the transformations that transform a free object of that resource theory to another free object of the same. Once again, free transformations can be quantum channels, quantum supermaps, etc., depending on
the particular resource theory. If for a given resource theory, both the free objects and the free transformations form \textit{convex} sets, then that resource theory is classified as \textit{convex}.

Let us denote the set of free objects of a given resource theory as $\cF$ and the set of free transformations as $\cT$. As in this paper, we are only concerned about instrument-based resource theories; the free objects are some set of quantum instruments depending on the resource, and the free transformations are the transformations that transform one free set of quantum instruments to another free set of quantum instruments. Then, the following reasonable assumptions can be made 
\begin{enumerate}[label=A \arabic*.]
    \item Two objects $\cR_1$ and $\cR_2$ of an instrument-based  resource theory are free iff $\cR_1\otimes \cR_2$ is free.\label{A1}
    \item Two transformations $\cV_1$ and $\cV_2$ of an instrument-based resource theory are free iff $\cV_1\otimes V_2$ is free.\label{A2}
    \item Identity transformation $\mathfrak{I}$ (which maps an object to itself) is a free transformation.\label{A3}
    \item As for trace operation, the output Hilbert space is trivial for any input Hilbert space; it is a free object of an instrument-based resource theory.\label{A4}
\end{enumerate}

Another major ingredient of a resource theory is the quantification of the resource. It is accomplished by defining a resource measure $\mathbbm{R}$ satisfying the following properties:
\begin{enumerate}[label=R \arabic*.]
    \item (Non-negativity and faithfulness): $\mathbbm{R}(\cR)\geq0$ for all the objects of the given resource theory. and $\mathbbm{R}(\cR)=0$ iff $\cR\in\cF$.\label{R1}
    \item (Monotonicity): $\mathbbm{R}(\cR)\geq\mathbbm{R}(\cV(\cR))$ for all the objects $\cR$ and a free transformations $\cV\in\cT$ of the given resource theory.\label{R2}

    In addition to these necessary properties, another desirable property for a resource monotone is
    
    \item (Convexity): $\mathbbm{R}(\sum_i p_i\cR_i)\leq\sum_i p_i \mathbbm{R}(\cR_i)$ for any arbitrary set of objects $\{\cR_i\}_{i=1}^n$ of the given resource theory and for any probability distribution $\{p_i\}_{i=1}^n$.
\end{enumerate}

A resource measure satisfying all of the above properties is considered to be a good resource measure for a convex resource theory. For the remainder of this paper, we will concern ourselves with the instrument-based resource measure. A reasonable assumption is that any permutation of a free set of instruments is another free set of instruments.

Let us have an arbitrary set of quantum instruments $\cI=\{\mathbf{I}^i\}\subset\mathscr{I}(\cH_A,\cK_A)$ $\forall~i$ where we have $\mathbf{I}^i=\{\Lambda^i_a\}$ such that $\sum_a\Lambda^i_a=\Lambda^i,~\forall~i$. Then, for any given convex resource theory, resource robustness for an arbitrary set of quantum instruments $\cI$ is given as
\begin{align}
    \mathscr{R}(\cI)&=\min r\nonumber\\
    &~~~~~\text{s.t.}~~~\Big\{\Big\{\Phi^i_a=\frac{\Lambda^i_a+r\tilde{\Lambda}^i_a}{1+r}\Big\}_a\Big\}_i\in \cF_{(\cH_A,\cK_A)}\nonumber\\
    &\qquad\quad\tilde\cI=\{\tilde{\mathbf{I}}^i=\{\tilde\Lambda^i_a\}\}\subset\mathscr{I}(\cH_A,\cK_A),\label{Eq:robustness}
\end{align}
    where $\cF_{(\cH_A,\cK_A)}$ is the set of free sets of quantum instruments of the given resource theory.

 Similarly, the resource weight of an arbitrary set of quantum instruments $\cI$ is defined as
    \begin{align}
    \mathscr{W}(\cI)&=\min r\nonumber\\
    &~~~~~\text{s.t.}~~~ \cI=\Big\{\Big\{\Lambda^i_a=\frac{\Phi^i_a+r\tilde{\Lambda}^i_a}{1+r}\Big\}\Big\}\nonumber\\
    &\qquad\quad\tilde\cI=\{\tilde{\mathbf{I}}^i=\{\tilde\Lambda^i_a\}_a\}_i\subset\mathscr{I}(\cH_A,\cK_A)\nonumber\\
    &\qquad\quad\tilde\cJ=\{\tilde{\mathbf{J}}^i=\{\tilde\Phi^i_a\}\}\in \cF_{(\cH_A,\cK_A)}.\label{Eq:weight}
\end{align}

\section{Main Results}
\label{Sec:Main}
\subsection{Quantification of a generic instrument-based quantum resource}
\label{Subsec:Main:gen_QI_resource}

A quantum instrument $\mathbf{I}=\{\Phi_a\}_{a\in\Omega_{\mathbf{I}}}\in\mathscr{I}(\cH,\cK)$ can also be associated with a quantum channel $\hat{\Gamma}_{\mathbf{I}}\in\mathscr{C}(\cH,\cK\otimes\cH_{\Omega_\mathbf{I}})$ such that for all $\rho\in\cL(\cH)$
\begin{align}
     \hat{\Gamma}_{\mathbf{I}}(\rho)=\sum_{a}\Phi_a(\rho)\otimes\ket{a}\bra{a}
\end{align}
 where $\{\ket{a}\}$ is a chosen orthonormal basis of Hilbert space $\cH_{\Omega_{\mathbf{I}}}$ with dimension $|\Omega_{\mathbf{I}} |$. Clearly, for instruments $\mathbf{I}_1=\{\Phi^{1}_a\}$ and $\mathbf{I}_2=\{\Phi^{2}_a\}$ if $\Phi^{1}_a=\Phi^{2}_a ~\forall~a$ then $\hat{\Gamma}_{\mathbf{I}_1}=\hat{\Gamma}_{\mathbf{I}_2}$. For instruments $\mathbf{I}_1=\{\Phi^{1}_a\}$ and $\mathbf{I}_2=\{\Phi^{2}_a\}$ the distance between them can be defined as
\begin{align}
    \cD_{\Diamond}(\mathbf{I}_1, \mathbf{I}_2)&:=\cD_{\Diamond}(\hat{\Gamma}_{\mathbf{I}_1}, \hat{\Gamma}_{\mathbf{I}_2}).\label{Eq:Def_Dist_In}
\end{align} 
Even when $\Omega_{\mathbf{I}_1}\neq\Omega_{\mathbf{I}_2}$, we can still define the distance using Eq. \eqref{Eq:Def_Dist_In} by appending zero CP maps to the instrument with the smaller outcome set. We can also easily see that
\begin{align}
    (\mathbbm{I}_{\cK}\otimes\tr_{\cH_{\Omega_{\mathbf{I}}}})\circ\hat{\Gamma}_{\mathbf{I}}(\rho)=&\sum_{a}\Phi_a(\rho)\otimes\tr[\ket{a}\bra{a}]\nonumber\\
    =&\sum_{a}\Phi_a(\rho)=\Phi(\rho),\label{Eq:trace_first_instrument}
\end{align}
\begin{align}     (\tr_{\cK}\otimes\mathbbm{I}_{\cH_{\Omega_{\mathbf{I}}}})\circ\hat{\Gamma}_{\mathbf{I}}(\rho)=&\sum_{a}\tr[\Phi_a(\rho)]\otimes\ket{a}\bra{a}\nonumber\\
=&\sum_{a}\tr[\rho\Phi^*_a(\mathbbm{1}_{\cK})]\otimes\ket{a}\bra{a}\nonumber\\
=&\sum_{a}\tr[\rho A(a)]\otimes\ket{a}\bra{a}=\Gamma_{A}(\rho),\label{Eq:trace_second_instrument}
\end{align}
where $A=\{A(a)\}$ is the POVM induced by the instrument $\mathbf{I}$.

Similarly, the distance between two sets of instruments $\cI=\{\mathbf{I}_i\}$ and $\cJ=\{\mathbf{J}_i\}$ can be defined as

\begin{align}
    \widehat{\cD}(\cI,\cJ):=&\overline{\cD}(\hat{\cG}_{\cI},\hat{\cG}_{\cJ})\nonumber\\
    =&\max_{i\in\{1,\ldots,n\}}\cD_{\Diamond}(\hat{\Gamma}_{\mathbf{I}_i},\hat{\Gamma}_{\mathbf{J}_i})\label{Eq:def_dist_meas_of_set_inst},
\end{align}
where $\hat{\cG}_{\cI}:=\{\hat{\Gamma}_{\mathbf{I}_i}\}$ and $\hat{\cG}_{\cJ}:=\{\hat{\Gamma}_{\mathbf{J}_i}\}$.

From Eq. \eqref{Eq:Dist_set_chan_convex} it can be immediately conclude that
\begin{align}
    \overline{\cD}(p\cI+(1-p)\cI^{\prime},p\cJ+&(1-p)\cJ^{\prime})\nonumber\\
    &\leq p\overline{\cD}(\cI,\cJ)+(1-p)\overline{\cD}(\cI^{\prime},\cJ^{\prime}),\label{Eq:Dist_inst_convex}
\end{align}
where $\cI^{\prime}=\{\mathbf{I}^{\prime}_i\in\mathscr{I}(\cH,\cK)\}^n_{i=1}$ and $\cJ^{\prime}=\{\mathbf{J}^{\prime}_i\in\mathscr{I}(\cH,\cK)\}^n_{i=1}$.


 We start with proving the following simple but useful lemma.

\begin{lemma}
     If the instrument $\mathbf{I}_i=\{\Phi^i_a\}\in \mathscr{I}(\cH,\cK)$ implements $M_i$ and $\Phi_i$ for all $i\in\{1,2\}$ then
 \begin{align}
     \cD_{\Diamond}(M_1, M_2)\leq\cD_{\Diamond}(\mathbf{I}_1, \mathbf{I}_2)\nonumber\\
     \cD_{\Diamond}(\Lambda_1, \Lambda_2)\leq\cD_{\Diamond}(\mathbf{I}_1, \mathbf{I}_2).
 \end{align}\label{Le:ins_chan_meas_dist_ineq}
\end{lemma}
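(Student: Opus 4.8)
The plan is to realize both the induced measurement $M_i$ and the induced channel $\Lambda_i=\sum_a\Phi^i_a$ as images of the instrument under partial-trace post-processings, and then to invoke the contractivity of the diamond distance under super-channels. The two identities \eqref{Eq:trace_first_instrument} and \eqref{Eq:trace_second_instrument} established just above are exactly the inputs needed: tracing out the classical register of $\hat\Gamma_{\mathbf{I}}$ returns the channel $\Lambda$, while tracing out the quantum output returns the measure-prepare channel $\Gamma_A$ of the induced POVM. Since $\mathbf{I}_i$ implements $M_i$ and $\Lambda_i$, the induced POVM is $A=M_i$ and the induced channel is $\Lambda_i$, so these partial traces reproduce precisely the objects appearing in the claimed inequalities.

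For the channel bound, I first observe that $\mathbbm{1}_{\cK}\otimes\tr_{\cH_{\Omega_{\mathbf{I}}}}$ is a CPTP map composed \emph{after} the instrument-channel, hence a legitimate post-processing channel, i.e.\ a super-channel of the form \eqref{supermap} with trivial pre-processing and no ancilla. By \eqref{Eq:trace_first_instrument} it sends $\hat\Gamma_{\mathbf{I}_i}$ to $\Lambda_i$, so contractivity of $\cD_\Diamond$ under super-channels gives $\cD_\Diamond(\Lambda_1,\Lambda_2)=\cD_\Diamond\big((\mathbbm{1}_{\cK}\otimes\tr)\circ\hat\Gamma_{\mathbf{I}_1},(\mathbbm{1}_{\cK}\otimes\tr)\circ\hat\Gamma_{\mathbf{I}_2}\big)\leq\cD_\Diamond(\hat\Gamma_{\mathbf{I}_1},\hat\Gamma_{\mathbf{I}_2})=\cD_\Diamond(\mathbf{I}_1,\mathbf{I}_2)$.

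For the measurement bound, I start from the definition $\cD_\Diamond(M_1,M_2):=\cD_\Diamond(\Gamma_{M_1},\Gamma_{M_2})$. The complementary partial trace $\tr_{\cK}\otimes\mathbbm{1}_{\cH_{\Omega_{\mathbf{I}}}}$ is again a post-processing channel, and by \eqref{Eq:trace_second_instrument} it maps $\hat\Gamma_{\mathbf{I}_i}$ to $\Gamma_{M_i}$. Applying the same contractivity argument then yields $\cD_\Diamond(M_1,M_2)\leq\cD_\Diamond(\mathbf{I}_1,\mathbf{I}_2)$, completing both halves.

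There is no genuine obstacle here; the only point requiring care is the verification that the two partial traces are admissible super-channels in the sense of \eqref{supermap}, after which both inequalities follow immediately. While writing this up I would also reconcile the minor notational mismatch between $\Gamma_{\mathbf{I}}$ and $\hat\Gamma_{\mathbf{I}}$ in \eqref{Eq:trace_first_instrument}--\eqref{Eq:trace_second_instrument}, so that both identities refer to the same channel $\hat\Gamma_{\mathbf{I}}$ used in the definition of the instrument distance $\cD_\Diamond(\mathbf{I}_1,\mathbf{I}_2)$.
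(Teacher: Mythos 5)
Your proposal is correct and follows essentially the same route as the paper: both identify $\Gamma_{M_i}$ and $\Lambda_i$ as partial-trace post-processings of the instrument channel via Eqs.~\eqref{Eq:trace_first_instrument}--\eqref{Eq:trace_second_instrument}, and then invoke contractivity of $\cD_{\Diamond}$ under composition with a channel (a trivial super-channel). The notational wrinkle you flag ($\Gamma_{\mathbf{I}}$ versus $\hat\Gamma_{\mathbf{I}}$) is indeed present in the paper, which also swaps the citations of the two trace identities, but neither affects the validity of the argument.
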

\begin{proof}
    \begin{align}
    \cD_{\Diamond}(M_1, M_2):=&\cD_{\Diamond}(\Gamma_{M_1}, \Gamma_{M_2})\nonumber\\   =&\cD_{\Diamond}\Big((\tr_{\cK}\otimes\mathbbm{I}_{\cH_{\Omega_{\mathbf{I}_1}}})\circ\hat{\Gamma}_{\mathbf{I}_1},(\tr_{\cK}\otimes\mathbbm{I}_{\cH_{\Omega_{\mathbf{I}_2}}})\circ\hat{\Gamma}_{\mathbf{I}_2}\Big)\nonumber\\ 
    \leq&\cD_{\Diamond}(\hat{\Gamma}_{\mathbf{I}_1}, \hat{\Gamma}_{\mathbf{I}_2}):=\cD_{\Diamond}(\mathbf{I}_1, \mathbf{I}_2).
    \end{align}
    where $\cH_{\Omega_{\mathbf{I}_1}}=\cH_{\Omega_{\mathbf{I}_2}}$. In the third line, we have used Eq. \eqref{Eq:trace_first_instrument} and the property that diamond distance is contractive under the composition of channels. Similarly, by using Eq. \eqref{Eq:trace_second_instrument} we get
    \begin{align}
    \cD_{\Diamond}(\Phi_1, \Phi_2)=&\cD_{\Diamond}\Big(\sum_{a}\Phi^1_a, \sum_{a}\Phi^2_a\Big)\nonumber\\    =&\cD_{\Diamond}\Big((\mathbbm{I}_{\cK}\otimes\tr_{\cH_{\Omega_{\mathbf{I}_1}}})\circ\hat{\Gamma}_{\mathbf{I}_1}), (\mathbbm{I}_{\cK}\otimes\tr_{\cH_{\Omega_{\mathbf{I}_2}}})\circ\hat{\Gamma}_{\mathbf{I}_2}\Big)\nonumber\\
    \leq&\cD_{\Diamond}(\hat{\Gamma}_{\mathbf{I}_1}, \hat{\Gamma}_{\mathbf{I}_2}):=\cD_{\Diamond}(\mathbf{I}_1, \mathbf{I}_2)
    \end{align}
    \end{proof}
    Similarly, if instead of two single instruments we consider two sets of instruments, we can prove the following proposition:
\begin{proposition}
    If the sets of instruments  $\cI$ and $\overline{\cI}$ implements the sets of measurements $\cM$ and $\overline{\cM}$ and the sets of channels $\cC$ and $\overline{\cC}$ then 
 \begin{align}
     \widetilde{\cD}(\cM, \overline{\cM})\leq\widehat{\cD}(\cI, \overline{\cI})\nonumber\\
     \overline{\cD}(\cC, \overline{\cC})\leq\widehat{\cD}(\cI, \overline{\cI}).
 \end{align}\label{Prop:Meas_Chan_Inst_Dist_order}
\end{proposition}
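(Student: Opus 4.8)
The plan is to reduce the set-level statement to the single-instrument inequalities already established in Lemma \ref{Le:ins_chan_meas_dist_ineq} and then lift them through the maximum. First I would unpack the three distances as maxima of per-index diamond distances over a common index set $\{1,\dots,n\}$: by Eq. \eqref{Eq:def_dist_meas_of_set_meas} the measurement-set distance is $\widetilde{\cD}(\cM,\overline{\cM})=\max_i \cD_{\Diamond}(M_i,\overline{M}_i)$, by Eq. \eqref{Eq:def_dist_meas_of_set_chan} the channel-set distance is $\overline{\cD}(\cC,\overline{\cC})=\max_i \cD_{\Diamond}(\Lambda_i,\overline{\Lambda}_i)$, and by Eq. \eqref{Eq:def_dist_meas_of_set_inst} the instrument-set distance is $\widehat{\cD}(\cI,\overline{\cI})=\max_i \cD_{\Diamond}(\mathbf{I}_i,\overline{\mathbf{I}}_i)$. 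Here I am implicitly using the standing assumption, built into those definitions, that the four families share the same cardinality so that the index $i$ labels them consistently.

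Next, since the hypothesis states that each $\mathbf{I}_i$ implements $M_i$ and $\Lambda_i$ while $\overline{\mathbf{I}}_i$ implements $\overline{M}_i$ and $\overline{\Lambda}_i$, I would apply Lemma \ref{Le:ins_chan_meas_dist_ineq} termwise to each pair $(\mathbf{I}_i,\overline{\mathbf{I}}_i)$. This yields, for every fixed $i$,
\begin{align}
    \cD_{\Diamond}(M_i,\overline{M}_i)&\leq\cD_{\Diamond}(\mathbf{I}_i,\overline{\mathbf{I}}_i),\\
    \cD_{\Diamond}(\Lambda_i,\overline{\Lambda}_i)&\leq\cD_{\Diamond}(\mathbf{I}_i,\overline{\mathbf{I}}_i).
\end{align}

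Finally I would invoke monotonicity of the maximum: if $a_i\leq b_i$ for every $i$, then $\max_i a_i\leq\max_i b_i$, since choosing the index $i^\ast$ attaining the left-hand maximum gives $\max_i a_i=a_{i^\ast}\leq b_{i^\ast}\leq\max_i b_i$. Applying this to the two families of termwise inequalities above and re-substituting the definitions from the first paragraph gives $\widetilde{\cD}(\cM,\overline{\cM})\leq\widehat{\cD}(\cI,\overline{\cI})$ and $\overline{\cD}(\cC,\overline{\cC})\leq\widehat{\cD}(\cI,\overline{\cI})$, as claimed. I do not anticipate a genuine obstacle here: essentially all the content is carried by Lemma \ref{Le:ins_chan_meas_dist_ineq}, and the only point requiring minor care is the max-monotonicity step, where one should note that the maximizing index may differ between the induced objects and the instruments — which is precisely why a one-sided bound, rather than equality, is the natural outcome.
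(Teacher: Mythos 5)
Your proposal is correct and follows essentially the same route as the paper's own proof: both apply Lemma \ref{Le:ins_chan_meas_dist_ineq} index-by-index and then lift the inequality through the maximum by evaluating at the index $i^*$ that attains the left-hand maximum and bounding $\cD_{\Diamond}(\mathbf{I}_{i^*},\overline{\mathbf{I}}_{i^*})$ by $\max_i \cD_{\Diamond}(\mathbf{I}_i,\overline{\mathbf{I}}_i)$. The only difference is presentational — you state the max-monotonicity step as a standalone general fact, whereas the paper inlines it — so there is nothing substantive to correct.
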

The proof of this proposition is given in Appendix \ref{App:Meas_Chan_Inst_Dist_order}. 
\begin{theorem}
     Distance $\widehat{D}$ is contractive under post-processing of instruments.\label{Th:Dist_cont_post_process}
\end{theorem}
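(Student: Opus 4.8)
The plan is to reduce the set-level statement to a pairwise statement and then realise post-processing, at the level of the associated channels $\hat{\Gamma}$, as post-composition with a single fixed quantum channel, after which the already-recorded contractivity of $\cD_\Diamond$ under super-channels finishes the argument. Since $\widehat{\cD}(\cI,\overline{\cI})=\max_i\cD_\Diamond(\hat{\Gamma}_{\mathbf{I}_i},\hat{\Gamma}_{\overline{\mathbf{I}}_i})$ is a maximum of pairwise diamond distances, and post-processing acts on a set by post-processing each element (with the same post-processing family applied to $\mathbf{I}_i$ and $\overline{\mathbf{I}}_i$), it suffices to show that a common post-processing never increases the distance between a single pair. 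So I fix a pair $\mathbf{I}=\{\Phi_a\}$, $\overline{\mathbf{I}}=\{\overline{\Phi}_a\}$ and a post-processing family $\{\mathbf{P}^a=\{P^a_b\}_b\}_a$, yielding $\tilde\Phi_b=\sum_a P^a_b\circ\Phi_a$ and $\tilde{\overline{\Phi}}_b=\sum_a P^a_b\circ\overline{\Phi}_a$.

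The key step is to construct a CPTP map $\Xi:\cL(\cK\otimes\cH_{\Omega_{\mathbf{I}}})\to\cL(\cK\otimes\cH_{\Omega_{\tilde{\mathbf{I}}}})$ that reads the classical outcome register, applies the selected post-processing instrument to the quantum register, and records the new outcome:
\[
  \Xi(\tau)=\sum_{a,b}P^a_b\big((\mathbbm{1}_{\cK}\otimes\bra{a})\,\tau\,(\mathbbm{1}_{\cK}\otimes\ket{a})\big)\otimes\ket{b}\bra{b}.
\]
I would verify that $\Xi$ is completely positive, since each summand composes the CP ``compression'' $\tau\mapsto(\mathbbm{1}_{\cK}\otimes\bra{a})\tau(\mathbbm{1}_{\cK}\otimes\ket{a})$, the CP map $P^a_b$, and the CP appending of $\ket{b}\bra{b}$; and that it is trace-preserving, using that $\sum_b P^a_b$ is a channel for each $a$ together with $\sum_a\ket{a}\bra{a}=\mathbbm{1}_{\cH_{\Omega_{\mathbf{I}}}}$. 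Hence $\Xi$ is a genuine quantum channel.

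Next I would check the intertwining relation $\Xi\circ\hat{\Gamma}_{\mathbf{I}}=\hat{\Gamma}_{\tilde{\mathbf{I}}}$. Applying $\Xi$ to $\hat{\Gamma}_{\mathbf{I}}(\rho)=\sum_{a'}\Phi_{a'}(\rho)\otimes\ket{a'}\bra{a'}$, the compression picks out the term $a'=a$ via $\langle a|a'\rangle=\delta_{aa'}$, leaving $\sum_b\big(\sum_a P^a_b(\Phi_a(\rho))\big)\otimes\ket{b}\bra{b}=\sum_b\tilde\Phi_b(\rho)\otimes\ket{b}\bra{b}=\hat{\Gamma}_{\tilde{\mathbf{I}}}(\rho)$. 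The crucial point is that $\Xi$ depends only on the post-processing family and not on the instrument it acts on, so the \emph{same} $\Xi$ also satisfies $\Xi\circ\hat{\Gamma}_{\overline{\mathbf{I}}}=\hat{\Gamma}_{\tilde{\overline{\mathbf{I}}}}$. Since post-composition with a channel is a super-channel (take $\Theta_{post}=\Xi$, trivial pre-processing and no ancilla in Eq.~\eqref{supermap}), the contractivity already recorded for $\cD_\Diamond$ gives
\begin{align*}
  \cD_\Diamond(\tilde{\mathbf{I}},\tilde{\overline{\mathbf{I}}})&=\cD_\Diamond(\Xi\circ\hat{\Gamma}_{\mathbf{I}},\Xi\circ\hat{\Gamma}_{\overline{\mathbf{I}}})\\
  &\leq\cD_\Diamond(\hat{\Gamma}_{\mathbf{I}},\hat{\Gamma}_{\overline{\mathbf{I}}})=\cD_\Diamond(\mathbf{I},\overline{\mathbf{I}}).
\end{align*}
Taking the maximum over $i$ then yields $\widehat{\cD}(\cV(\cI),\cV(\overline{\cI}))\leq\widehat{\cD}(\cI,\overline{\cI})$, which is the claim.

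I expect the main obstacle to be the construction of $\Xi$ and the proof that it is a legitimate trace-preserving channel: the subtlety is encoding the classical control correctly, so that the outcome $a$ stored in the $\cH_{\Omega_{\mathbf{I}}}$ register selects the post-processing instrument $\mathbf{P}^a$ while the result is re-recorded as a new classical outcome $b$. Trace preservation hinges precisely on each $\{P^a_b\}_b$ being a genuine instrument (so that $\sum_b P^a_b$ is trace-preserving); were this normalisation to fail, $\Xi$ would only be trace non-increasing and the contractivity step could not be applied verbatim.
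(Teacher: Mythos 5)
Your proposal is correct and follows essentially the same route as the paper: the paper's proof constructs exactly the same classically-controlled channel (there denoted $\Theta$, your $\Xi$) that reads the outcome register, applies the conditional post-processing instrument, and records the new outcome, so that $\hat{\Gamma}_{\tilde{\mathbf{I}}}=\Theta\circ\hat{\Gamma}_{\mathbf{I}}$ with the same $\Theta$ for both sets, and then invokes contractivity of $\cD_{\Diamond}$ under such compositions (via Eq.~\eqref{Eq:supermap_contractive_distance}) before maximizing over $i$. Your version is, if anything, slightly more complete, since you verify explicitly that $\Xi$ is CPTP and spell out the intertwining relation, steps the paper leaves implicit.
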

Detailed proof of this theorem can be found in \ref{App:Dist_cont_post_process}.
Based on the above-said distance measure $\widehat{\cD}$, for a set of instruments $\cI\subset\mathscr{I}(\cH_A,\cK_A)$ we can also define a resource measure as
\begin{align}
    \mathbbm{R}(\cI)=\min_{\cJ\in\cF_{(\cH_A,\cK_A)}}~\widehat{\cD}(\cI,\cJ),\label{Eq:Def_res_meas}
\end{align}
where $\cF_{(\cH_A,\cK_A)}$ is the set of free sets of instruments with the input Hilbert space $\cH_{A}$ and the output Hilbert space $\cK_{A}$. We proceed by proving a simple proposition:

\begin{proposition}
  For a generic instrument-based resource theory, if $\widehat{\cD}$ is monotonically non-increasing under free transformations, then $\mathbbm{R}$ is a resource measure.\label{Propsi:res_meas_prop}
\end{proposition}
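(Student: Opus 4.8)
The plan is to verify the two necessary axioms directly from the definition $\mathbbm{R}(\cI)=\min_{\cJ\in\cF_{(\cH_A,\cK_A)}}\widehat{\cD}(\cI,\cJ)$ in Eq.~\eqref{Eq:Def_res_meas}, treating $\widehat{\cD}$ as a genuine distance inherited from the diamond norm, and then to record convexity as a bonus when the free set is convex. First I would establish non-negativity and faithfulness. Non-negativity is immediate, since $\widehat{\cD}$ is a maximum of diamond distances $\cD_{\Diamond}$, each non-negative, so its minimum over $\cF_{(\cH_A,\cK_A)}$ satisfies $\mathbbm{R}(\cI)\geq0$. For faithfulness I argue both implications: if $\cI\in\cF_{(\cH_A,\cK_A)}$, then $\cI$ is itself an admissible competitor, so $\mathbbm{R}(\cI)\leq\widehat{\cD}(\cI,\cI)=0$ and hence $\mathbbm{R}(\cI)=0$; conversely, if $\mathbbm{R}(\cI)=0$ then, assuming the free set is closed so the minimum is attained, there is a free $\cJ^\star$ with $\widehat{\cD}(\cI,\cJ^\star)=0$. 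Since the embedding $\mathbf{I}\mapsto\hat{\Gamma}_{\mathbf{I}}$ is injective---each $\Phi_a$ is recovered from $\hat{\Gamma}_{\mathbf{I}}$ via the ancilla projector $\ket{a}\bra{a}$---and the diamond norm separates channels, a vanishing $\widehat{\cD}$ forces $\cI=\cJ^\star\in\cF_{(\cH_A,\cK_A)}$, so $\cI$ is free.

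Next I would prove monotonicity by the standard contraction-of-the-distance-to-the-free-set argument. Let $\cV\in\cT$ be a free transformation and let $\cJ^\star$ attain the minimum defining $\mathbbm{R}(\cI)$. Because free transformations map free objects to free objects, $\cV(\cJ^\star)$ is again free and hence an admissible competitor in the minimization for $\mathbbm{R}(\cV(\cI))$. The assumed monotonicity of $\widehat{\cD}$ under free transformations then gives $\widehat{\cD}(\cV(\cI),\cV(\cJ^\star))\leq\widehat{\cD}(\cI,\cJ^\star)=\mathbbm{R}(\cI)$, so that $\mathbbm{R}(\cV(\cI))\leq\widehat{\cD}(\cV(\cI),\cV(\cJ^\star))\leq\mathbbm{R}(\cI)$, as required.

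For the desirable convexity property I would first note that $\widehat{\cD}$ is jointly convex: since $\mathbf{I}\mapsto\hat{\Gamma}_{\mathbf{I}}$ is linear, a mixture of instruments maps to the corresponding mixture of channels, so the joint convexity of $\cD_{\Diamond}$ in Eq.~\eqref{Eq:diamond_dist_convexity}, combined with the elementary bound $\max_i\sum_k\leq\sum_k\max_i$, transfers to the max-based $\widehat{\cD}$. Choosing optimal free $\cJ_i^\star$ for each $\cI_i$ and using convexity of $\cF_{(\cH_A,\cK_A)}$ to conclude that $\sum_i p_i\cJ_i^\star$ is free, joint convexity yields $\mathbbm{R}(\sum_i p_i\cI_i)\leq\widehat{\cD}(\sum_i p_i\cI_i,\sum_i p_i\cJ_i^\star)\leq\sum_i p_i\mathbbm{R}(\cI_i)$.

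The hard part will be the reverse direction of faithfulness, not monotonicity (which is essentially handed to us as the hypothesis). It rests on two structural facts that must be made explicit: the free set $\cF_{(\cH_A,\cK_A)}$ should be closed so that the minimizing $\cJ^\star$ exists, and $\widehat{\cD}$ must genuinely separate distinct sets of instruments, which in turn relies on injectivity of the channel embedding $\mathbf{I}\mapsto\hat{\Gamma}_{\mathbf{I}}$ and on $\cD_{\Diamond}$ being a norm-based metric. Everything else follows mechanically from properties already established for $\cD_{\Diamond}$ and $\widehat{\cD}$ and from the defining feature that free transformations preserve freeness.
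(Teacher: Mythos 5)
Your proof is correct and takes essentially the same route as the paper's: non-negativity and the ``$\cI\in\cF \Rightarrow \mathbbm{R}(\cI)=0$'' direction are immediate from the definition, and monotonicity follows by pushing the optimal free competitor $\cJ^\star$ through the free transformation $\cV$ and invoking the assumed contraction of $\widehat{\cD}$, exactly as in the paper. You are in fact slightly more careful than the paper on the converse faithfulness direction (making explicit the closedness of the free set and the injectivity of $\mathbf{I}\mapsto\hat{\Gamma}_{\mathbf{I}}$, which the paper brushes aside as ``properties of the distance measure''), and your convexity paragraph is a correct extra beyond what the proposition's definition of a resource measure requires.
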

\begin{proof}
    
    \begin{align}
        \mathbbm{R}(\cI)=\widehat{\cD}(\cI,\cJ^*),
    \end{align}
    where we have assumed the minimum in Eq. \eqref{Eq:Def_res_meas} occurs for a certain $\cJ^*\in\cF_{(\cH_A,\cK_A)}$.
    
    Note that $\mathbbm{R}(\cI)\geq 0$ for all $\cI$ as $\widehat{D}$ is always positive. Now, if $\cI\notin\cF_{(\cH_A,\cK_A)}$ then $\mathbbm{R}(\cI)>0$ from the faithfulness property of the distance measure. Whenever, $\cI\in\cF_{(\cH_A,\cK_A)}$ we get $\mathbbm{R}(\cI)=0$ as $\widehat{\cD}(\cI,\cI)=0$. Thus, the resource measure in Eq. \eqref{Eq:Def_res_meas} satisfies the property \hyperref[R1]{R1}.

    For a free transformation $\cV$ such that $\cV[\cI]\subset\mathscr{I}(\cH_{\tilde{A}},\cK_{\tilde{A}})$ we have
    \begin{align}
        \mathbbm{R}(\cI)=&\widehat{\cD}(\cI,\cJ^*)\nonumber\\
        \geq&\widehat{\cD}(\cV[\cI],\cV[\cJ^*]),\nonumber\\
        \geq&\min_{\tilde{\cJ}\in\cF_{(\cH_{\tilde{A}},\cK_{\tilde{A}})}}~\widehat{\cD}(\cV[\cI],\tilde{\cJ})\nonumber\\
        =&\mathbbm{R}(\cV[\cI]),
    \end{align}
    where $\cF_{(\cH_{\tilde{A}},\cK_{\tilde{A}})}$ is the set of free sets of instruments with the input Hilbert space $\cH_{\tilde{A}}$ and the output Hilbert space $\cK_{\tilde{A}}$.  Here, in the second line, we have used the assumption that $\widehat{\cD}$ is contractive under free transformations. Thus, the resource measure satisfies the property \hyperref[R2]{R2}. Hence $\mathbbm{R}$ is a valid resource measure for an instrument-based resource theory.
\end{proof}

Next, we study another quantification of instrument-based resources defined on extended Hilbert spaces. A trivially enlarged version of $\mathbf{I}=\{\Phi_a\}\in\mathscr{I}(\cH_A,\cK_A)$ can be defined as $\widehat{\mathbf{I}}_{(\cH_A\otimes\cH_B,\cK_A)}=\{\hat{\Phi}^a_{(\cH_A\otimes\cH_B,\cK_A)}=\Phi_a\otimes\tr_{\cH_B}\}$. A set of such instruments is similarly denoted by $\widehat{\cI}_{(\cH_A\otimes\cH_B,\cK_A)}$.

Based on this, we can also define another resource measure as
\begin{align}
    \overline{\mathbbm{R}}(\cI)=\inf_{\cH_B}\min_{\cJ_{(\cH_{AB},\cK_A)}\in\cF_{(\cH_{AB},\cK_A)}}~\widehat{\cD}(\widehat{\cI}_{(\cH_{AB},\cK_A)},\cJ_{(\cH_{AB},\cK_A)}),\label{Eq:Def_res_ext_meas}
\end{align}
where $\cF_{(\cH_{AB},\cK_A)}$ is the set of free sets of instruments with the input Hilbert space $\cH_{AB}:=\cH_{A}\otimes\cH_B$ (for notational simplicity, we use this notation in many places in this paper) and the output Hilbert space $\cK_{A}$. Now, we prove the following proposition.
\begin{proposition}
   For a generic instrument-based resource theory, if $\widehat{\cD}$ is monotonically non-increasing under free transformations, then $\overline{\mathbbm{R}}$ is a resource measure. .\label{Propsi:res_meas_ext_prop}
\end{proposition}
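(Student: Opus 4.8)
The plan is to verify the two defining axioms of a resource measure, \hyperref[R1]{R1} and \hyperref[R2]{R2}, for $\overline{\mathbbm{R}}$, following the same template as the proof of Proposition~\ref{Propsi:res_meas_prop} but now carrying the trivial enlargement $\widehat{\cI}_{(\cH_{AB},\cK_A)}$ and the extra infimum over the ancilla $\cH_B$ through every step. It is convenient to first rewrite the definition as $\overline{\mathbbm{R}}(\cI)=\inf_{\cH_B}\mathbbm{R}\big(\widehat{\cI}_{(\cH_{AB},\cK_A)}\big)$, since the inner minimization over free sets in Eq.~\eqref{Eq:Def_res_ext_meas} is literally the basic measure $\mathbbm{R}$ of Eq.~\eqref{Eq:Def_res_meas} evaluated on the enlarged instrument set over the space $(\cH_{AB},\cK_A)$. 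The whole argument then reduces to transporting the properties of $\mathbbm{R}$ already established in Proposition~\ref{Propsi:res_meas_prop} through this infimum.

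For \hyperref[R1]{R1}, non-negativity is immediate because $\widehat{\cD}\geq 0$ forces every term in the infimum to be non-negative. For faithfulness I would write the trivial enlargement as a tensor product $\widehat{\mathbf{I}}_{(\cH_{AB},\cK_A)}=\mathbf{I}\otimes\{\tr_{\cH_B}\}$, where the one-outcome trace instrument $\{\tr_{\cH_B}\}$ is a free object by assumption \hyperref[A4]{A4}. If $\cI$ is free, then by \hyperref[A1]{A1} the enlargement $\widehat{\cI}_{(\cH_{AB},\cK_A)}$ is free, so choosing $\cJ=\widehat{\cI}_{(\cH_{AB},\cK_A)}$ gives a vanishing distance and hence $\overline{\mathbbm{R}}(\cI)=0$. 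Conversely, if $\cI$ is not free, then the ``iff'' in \hyperref[A1]{A1} shows that $\widehat{\cI}_{(\cH_{AB},\cK_A)}$ fails to be free for every choice of $\cH_B$, so that $\mathbbm{R}\big(\widehat{\cI}_{(\cH_{AB},\cK_A)}\big)>0$ for each $\cH_B$ by the faithfulness of $\mathbbm{R}$, from which I would conclude $\overline{\mathbbm{R}}(\cI)>0$.

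For monotonicity \hyperref[R2]{R2} under a free transformation $\cV$ taking $\cI$ on $(\cH_A,\cK_A)$ to $\cV[\cI]$ on $(\cH_{\tilde{A}},\cK_{\tilde{A}})$, the key device is to enlarge $\cV$ itself to $\widehat{\cV}:=\cV\otimes\mathfrak{I}_{\cH_B}$, which is a free transformation by assumptions \hyperref[A2]{A2} and \hyperref[A3]{A3} (the identity transformation is free). The crucial identity I must establish is that trivial enlargement commutes with the transformation, $\widehat{\cV}\big[\widehat{\cI}_{(\cH_{AB},\cK_A)}\big]=\widehat{\cV[\cI]}_{(\cH_{\tilde{A}B},\cK_{\tilde{A}})}$; writing both sides factor-wise as $(\cV\otimes\mathfrak{I})[\cI\otimes\{\tr_{\cH_B}\}]=\cV[\cI]\otimes\mathfrak{I}[\{\tr_{\cH_B}\}]=\cV[\cI]\otimes\{\tr_{\cH_B}\}$ makes this transparent. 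Granting it, the monotonicity of $\mathbbm{R}$ from Proposition~\ref{Propsi:res_meas_prop} yields $\mathbbm{R}\big(\widehat{\cV[\cI]}_{(\cH_{\tilde{A}B},\cK_{\tilde{A}})}\big)=\mathbbm{R}\big(\widehat{\cV}[\widehat{\cI}_{(\cH_{AB},\cK_A)}]\big)\leq\mathbbm{R}\big(\widehat{\cI}_{(\cH_{AB},\cK_A)}\big)$ for each fixed $\cH_B$, and taking the infimum over $\cH_B$ on both sides gives $\overline{\mathbbm{R}}(\cV[\cI])\leq\overline{\mathbbm{R}}(\cI)$.

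I expect the \emph{main obstacle} to be the commutation identity $\widehat{\cV}[\widehat{\cI}]=\widehat{\cV[\cI]}$ together with the freeness of $\widehat{\cV}$, since these are precisely the places where the tensor-compatibility assumptions \hyperref[A1]{A1}--\hyperref[A4]{A4} do the real work: one must check that the enlarged transformation genuinely acts as $\cV$ on the system ports while leaving the traced-out ancilla untouched, so that it both stays free and realizes the enlarged target instrument. A related delicate point, already present in the reverse direction of \hyperref[R1]{R1}, is that $\overline{\mathbbm{R}}$ is an infimum over $\cH_B$ rather than a minimum; the monotonicity inequality passes to the infimum termwise without difficulty, but concluding strict positivity of $\overline{\mathbbm{R}}(\cI)$ for a resourceful $\cI$ requires that the positive values $\mathbbm{R}\big(\widehat{\cI}_{(\cH_{AB},\cK_A)}\big)$ do not accumulate at zero as $\dim\cH_B$ grows. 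I would secure this by a contractivity argument, bounding $\widehat{\cD}(\widehat{\cI}_{(\cH_{AB},\cK_A)},\cJ)$ from below by $\widehat{\cD}(\cI,\hat\Xi(\cJ))$ under the pre-processing $\hat\Xi$ that feeds a fixed ancilla state into $\cH_B$ and recovers $\cI$ from $\widehat{\cI}$, using Eq.~\eqref{Eq:supermap_contractive_distance}, provided $\hat\Xi$ preserves freeness.
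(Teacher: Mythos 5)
Your proposal follows essentially the same route as the paper's proof: you rewrite $\overline{\mathbbm{R}}$ as an infimum over $\cH_B$ of an inner minimization (the paper's $\mathbbm{K}(\cI,\cH_B)$ in Eq.~\eqref{Eq:K_min}), establish \hyperref[R1]{R1} from assumptions \hyperref[A1]{A1} and \hyperref[A4]{A4}, and establish \hyperref[R2]{R2} by enlarging the free transformation to $\cV\otimes\mathfrak{I}_B$, invoking \hyperref[A2]{A2}, \hyperref[A3]{A3} and contractivity of $\widehat{\cD}$, then passing the inequality through the infimum. The commutation identity $\widehat{\cV}\big[\widehat{\cI}\big]=\widehat{\cV[\cI]}$ that you isolate is exactly the step the paper uses implicitly when it replaces $\cV\otimes\mathfrak{I}_B\big[\widehat{\cI}_{(\cH_{AB},\cK_A)}\big]$ by the enlarged image set inside the minimization, so flagging it explicitly is a genuine improvement in rigor. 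The place where you go beyond the paper is also the most valuable: you correctly observe that $\mathbbm{K}(\cI,\cH_B)>0$ for \emph{every} $\cH_B$ does not by itself yield $\inf_{\cH_B}\mathbbm{K}(\cI,\cH_B)>0$, since an infimum of strictly positive quantities over an infinite family of ancillas can vanish; the paper's proof makes precisely this jump (``which implies $\overline{\mathbbm{R}}(\cI)>0$'') without justification. Your proposed patch---lower-bounding $\mathbbm{K}(\cI,\cH_B)$ uniformly in $\cH_B$ by $\mathbbm{R}(\cI)$ via the superchannel that feeds a fixed state into the ancilla port and appealing to Eq.~\eqref{Eq:supermap_contractive_distance}---does close this gap, but be aware that it rests on the additional hypothesis that this ancilla-feeding superchannel maps free sets to free sets, which does not follow from \hyperref[A1]{A1}--\hyperref[A4]{A4} alone and must be checked per resource theory (it does hold for all the concrete theories treated in this paper). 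So your argument is sound modulo that stated hypothesis, and it is in fact more careful than the published proof at exactly the point where the latter is incomplete.
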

\begin{proof}
  Let us, for an arbitrary $\cH_B$, define
  \begin{align}
      \mathbbm{K}(\cI,\cH_B):=&\min_{\cJ_{(\cH_{AB},\cK_A)}\in\cF_{(\cH_{AB},\cK_A)}}~\widehat{\cD}(\widehat{\cI}_{(\cH_{AB},\cK_A)},\cJ_{(\cH_{AB},\cK_A)}),\label{Eq:K_min}\\
      =&\widehat{\cD}(\widehat{\cI}_{(\cH_{AB},\cK_A)},\cJ^*_{(\cH_{AB},\cK_A)})
  \end{align}
  such that
    \begin{align}
        \overline{\mathbbm{R}}(\cI)=\inf_{\cH_B}\mathbbm{K}(\cI,\cH_B),
    \end{align}
    where $\cJ^*_{(\cH_{AB},\cK_A)}\in\cF_{(\cH_{AB},\cK_A)}$.
    
    Note that $\overline{\mathbbm{R}}(\cI)\geq 0$ for all $\cI$ as $\mathbbm{\cK}(\cI)\geq0~\forall~\cH_B$. Now, if $\cI\notin\cF_{(\cH_A,\cK_A)}$ then by assumptions \hyperref[A1]{A1} and \hyperref[A4]{A4} we have $\widehat{\cI}_{(\cH_{AB},\cK_B)}\notin\cF_{(\cH_{AB},\cK_A)}~\forall~\cH_B$, Thus for any $\cH_B$ we can conclude $\mathbbm{K}(\cI,\cH_B)>0$ which implies $\overline{\mathbbm{R}}(\cI)>0$ . However, when $\cI\in\cF_{(\cH_A,\cK_A)}$ we get $\mathbbm{K}(\cI,\cH_B)=0$ as $\widehat{\cD}(\cI,\cI)=0$ which leads to $\overline{\mathbbm{R}}(\cI)=0$. Thus, the resource measure in Eq. \eqref{Eq:Def_res_ext_meas} also satisfies the property \hyperref[R1]{R1}.

    Next, for an arbitrary $\cH_B$ consider a free transformation $\cV$ such that the set of instruments $\cV[\cI]$ has input Hilbert space $\cH_{\tilde{A}}$ and output Hilbert space $\cK_{\tilde{A}}$. Then, with $\mathfrak{I}_B$ being an identity superchannel which maps a set of instruments to itself,  we have
    \begin{align}
      \mathbbm{K}(\cI,\cH_B)=&\widehat{\cD}(\widehat{\cI}_{(\cH_{AB},\cK_A)},\cJ^*_{(\cH_{AB},\cK_A)})\nonumber\\\geq&\widehat{\cD}(\cV\otimes\mathfrak{I}_B[\widehat{\cI}_{(\cH_{AB},\cK_A)}],\cV\otimes\mathfrak{I}_B[\cJ^*_{(\cH_{AB},\cK_A)}])\nonumber\\
      \geq&\min_{\cJ_{(\cH_{\tilde{A}B},\cK_{\tilde{A}})}\in\cF_{(\cH_{\tilde{A}B},\cK_{\tilde{A}})}}\widehat{\cD}(\widehat{\cV}[\cI]_{(\cH_{\tilde{A}B},\cK_{\tilde{A}})},\cJ_{(\cH_{\tilde{A}B},\cK_{\tilde{A}})})\nonumber\\
      \geq&\mathbbm{K}(\cV[\cI],\cH_B)
      \end{align}
      where in the second line we have used the assumptions \hyperref[A2]{A2}, \hyperref[A3]{A3} and the assumption that $\widehat{\cD}$ is contractive under free transformations. Hence
     \begin{align}
         \mathbbm{K}(\cI,\cH_B)\geq&\mathbbm{K}(\cV[\cI],\cH_B)~\forall~\cH_B\nonumber\\
         \text{or},\inf_{\cH_B}\mathbbm{K}(\cI,\cH_B)\geq&\inf_{\cH_B}\mathbbm{K}(\cV[\cI],\cH_B)\nonumber\\
         \text{or}, \overline{\mathbbm{R}}(\cI)\geq&\overline{\mathbbm{R}}(\cV[\cI])
     \end{align}
     Thus, the resource measure in Eq. \eqref{Eq:Def_res_ext_meas} also satisfies the property \hyperref[R2]{R2}. Hence $\overline{\mathbbm{R}}$ is a valid resource measure for an instrument-based resource theory.
\end{proof}
 
\begin{proposition}
    For an arbitrary set of instruments $\cI$, the resource measures $\mathbbm{R}$ and $\overline{\mathbbm{R}}$ satisfy the following relations.
     \begin{align}
         \overline{\mathbbm{R}}(\cI)\leq\mathbbm{R}(\cI)\leq\min\Big\{\frac{2\mathscr{R}(\cI)}{1+\mathscr{R}(\cI)},\frac{2\mathscr{W}(\cI)}{1+\mathscr{W}(\cI)}\Big\}
    \end{align}\label{Propsi:res_rob_and_weight}
\end{proposition}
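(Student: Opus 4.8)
The plan is to establish the three inequalities independently; the chain $\overline{\mathbbm{R}}(\cI)\le\mathbbm{R}(\cI)\le\min\{\cdots\}$ then follows immediately.

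First I would prove $\overline{\mathbbm{R}}(\cI)\le\mathbbm{R}(\cI)$ by restricting the infimum over ancillas in the definition of $\overline{\mathbbm{R}}$ to the trivial choice $\cH_B=\bC$. For a one-dimensional $\cH_B$ the partial trace $\tr_{\cH_B}$ acts as the identity and $\cH_A\otimes\bC\cong\cH_A$, so the trivially enlarged set $\widehat{\cI}_{(\cH_A\otimes\bC,\cK_A)}$ is canonically identified with $\cI$, and $\cF_{(\cH_A\otimes\bC,\cK_A)}$ with $\cF_{(\cH_A,\cK_A)}$. Hence $\mathbbm{K}(\cI,\bC)=\mathbbm{R}(\cI)$, and since $\overline{\mathbbm{R}}(\cI)=\inf_{\cH_B}\mathbbm{K}(\cI,\cH_B)$ is an infimum it is bounded above by $\mathbbm{K}(\cI,\bC)$, which gives the claim.

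Next I would bound $\mathbbm{R}$ by the robustness. Taking the optimizer $r=\mathscr{R}(\cI)$ of Eq.~\eqref{Eq:robustness}, there is a free set $\cJ=\{\mathbf{J}^i=\{\Phi^i_a\}\}$ and an instrument set $\{\tilde{\mathbf{I}}^i=\{\tilde\Lambda^i_a\}\}$ with $\Phi^i_a=(\Lambda^i_a+r\tilde\Lambda^i_a)/(1+r)$. Because $\cJ$ is free, $\mathbbm{R}(\cI)\le\widehat{\cD}(\cI,\cJ)$ by definition~\eqref{Eq:Def_res_meas}. The key algebraic step is $\Lambda^i_a-\Phi^i_a=\tfrac{r}{1+r}(\Lambda^i_a-\tilde\Lambda^i_a)$, which lifts through the linear map $\mathbf{I}\mapsto\hat\Gamma_{\mathbf{I}}$ to $\hat\Gamma_{\mathbf{I}_i}-\hat\Gamma_{\mathbf{J}_i}=\tfrac{r}{1+r}(\hat\Gamma_{\mathbf{I}_i}-\hat\Gamma_{\tilde{\mathbf{I}}_i})$. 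I would then invoke homogeneity of the diamond norm together with the fact that $\hat\Gamma$ of any instrument is a channel, so that the diamond distance between two such channels never exceeds $2$; this yields $\cD_{\Diamond}(\hat\Gamma_{\mathbf{I}_i},\hat\Gamma_{\mathbf{J}_i})=\tfrac{r}{1+r}\cD_{\Diamond}(\hat\Gamma_{\mathbf{I}_i},\hat\Gamma_{\tilde{\mathbf{I}}_i})\le\tfrac{2r}{1+r}$. Maximizing over $i$ and using that $t\mapsto 2t/(1+t)$ is increasing gives $\mathbbm{R}(\cI)\le 2\mathscr{R}(\cI)/(1+\mathscr{R}(\cI))$.

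The weight bound is entirely parallel: with $r=\mathscr{W}(\cI)$ and free $\cJ=\{\Phi^i_a\}$ satisfying $\Lambda^i_a=(\Phi^i_a+r\tilde\Lambda^i_a)/(1+r)$, the difference factorizes as $\Lambda^i_a-\Phi^i_a=\tfrac{r}{1+r}(\tilde\Lambda^i_a-\Phi^i_a)$, and the same diamond-norm estimate produces $\mathbbm{R}(\cI)\le 2\mathscr{W}(\cI)/(1+\mathscr{W}(\cI))$. Intersecting the two upper bounds yields the minimum and completes the argument. I expect the only delicate point to be bookkeeping rather than a genuine obstacle: one must ensure that the comparison instrument fed into $\mathbbm{R}$ is precisely the free component of the optimal robustness (resp.\ weight) decomposition, so that the inequality $\mathbbm{R}(\cI)\le\widehat{\cD}(\cI,\cJ)$ is legitimate, and that both arguments of each $\cD_{\Diamond}$ are bona fide channels so that the universal bound $2$ applies; everything else is the scalar factorization above.
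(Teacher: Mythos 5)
Your proposal is correct and follows essentially the same route as the paper's proof: the same scalar factorization $\hat\Gamma_{\mathbf{I}^i}-\hat\Gamma_{\mathbf{J}^i}=\tfrac{r}{1+r}(\hat\Gamma_{\mathbf{I}^i}-\hat\Gamma_{\tilde{\mathbf{I}}^{i}})$ for both robustness and weight, the same universal bound of $2$ on the diamond distance between channels, and the comparison $\mathbbm{R}(\cI)\le\widehat{\cD}(\cI,\cJ)$ against the free component of the optimal decomposition. Your only addition is spelling out the trivial-ancilla ($\cH_B\cong\bC$) identification behind $\overline{\mathbbm{R}}(\cI)\le\mathbbm{R}(\cI)$, which the paper simply asserts as clear from the two definitions.
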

Refer to Appendix \ref{App:res_rob_and_weight} for a detailed proof.

\begin{remark}
    \rm{We would like to mention that even if we replace minimum with infimum in the definitions of the resource measures $\mathbbm{R}$ and $\overline{\mathbbm{R}}$, their monotonicity can still be proven in the same way as Propositions \ref{Propsi:res_meas_prop} and \ref{Propsi:res_meas_ext_prop}. Furthermore, Proposition \ref{Propsi:res_rob_and_weight} would also hold in this case.} 
\end{remark}
Note that, if set of free objects are permutation-invariant then resource measures $\mathbbm{R}$ is also permutation-invariant. For illustration, consider a set of two instruments $\cI=\{\mathbf{I}_1,\mathbf{I}_2\}$. If for this set minimum in Eq. \eqref{Eq:Def_res_meas} happens for the  free set of instruments $\cJ^*=\{\mathbf{J}^*_1,\mathbf{J}^*_2\}$ then, for the permuted set $\cI_p=\{\mathbf{I}_2,\mathbf{I}_1\}$, the minimum will happen for the the free set of instruments $\cJ_P^*=\{\mathbf{J}^*_2,\mathbf{J}^*_1\}$. Hence $\mathbbm{R}(\cI)=\mathbbm{R}(\cI_P)$. Similar argument holds for $\overline{\mathbbm{R}}$ also.
\begin{remark}
    \rm{ Results similar to Proposition \ref{Propsi:res_meas_ext_prop}, and Proposition \ref{Propsi:res_rob_and_weight} have also been proved in Ref. \cite{Mitra_distance_measure_MBQR} for measurement-based quantum resources. But for completeness, we have proved Proposition \ref{Propsi:res_meas_ext_prop}, and Proposition \ref{Propsi:res_rob_and_weight} for \emph{instrument-based quantum resources} instead of measurement-based quantum resources.}
\end{remark}

\subsection{SDP formulation for computation of the resource measures $\mathbbm{R}~ \text{and}~\overline{\mathbbm{R}}$ }
\label{Sec:SDP_res_meas}
In this section, we will formulate an SDP to efficiently compute both the resource measures $\mathbbm{R}~ \text{and}~\overline{\mathbbm{R}}$ in Eqs. \eqref{Eq:Def_res_meas} and \eqref{Eq:Def_res_ext_meas} respectively, numerically. First, we will express the computation of $\mathbbm{R}$ as an optimization problem. Following \cite{Watrous_SDP}, for each $, \cD_{\Diamond}(\hat{\Gamma}_{\mathbf{I}_i},\hat{\Gamma}_{\mathbf{J}_i})$ in Eq. \eqref{Eq:def_dist_meas_of_set_inst} can be computed as the solution of the following optimization problem:
\begin{align}
    &\min_{Z_i}~2||\tr_\cY[Z_i]||_{\infty}\label{Eq:Primal_form_1}\\
    &\qquad\quad\text{s.t.}\nonumber\\
    &Z_i\geq J(\hat{\Gamma}_{\mathbf{I}_i})-J(\hat{\Gamma}_{\mathbf{J}_i})\quad\forall~i\nonumber\\
    &Z_i\geq0\quad\forall~i.\nonumber
\end{align}
Here $\cY=\cK_A\otimes\cH_{\Omega_I}$ and $J(\hat{\Gamma}_{\mathbf{I}_i}), J(\hat{\Gamma}_{\mathbf{J}_i})$ are the Choi matrices corresponding to the quantum channels $\hat{\Gamma}_{\mathbf{I}_i}, \hat{\Gamma}_{\mathbf{J}_i}\in\mathscr{C}(\cH_A,\cY)$ respectively.

As $Z_i\geq0$, the spectral norm $||\tr_\cY[Z_i]||_{\infty}$ is just the largest eigenvalue of the matrix $\tr_\cY[Z_i]$ which can be reformulated as the minimal value $m_i$ such that $m_i\mathbbm{1}\succeq\tr_{\cY}[Z_i]~\forall~i$. Thus optimization problem in Eqn. \eqref{Eq:Primal_form_1} can be recasted in the following form\cite{Tendick_dist_res_meas}:
\begin{align}
    &\min_{m_i,Z_i}~2m_i\label{Eq:Primal_form_2}\\
    &\qquad\quad\text{s.t.}\nonumber\\
    &Z_i\geq J(\hat{\Gamma}_{\mathbf{I}_i})-J(\hat{\Gamma}_{\mathbf{J}_i})\quad\forall~i\nonumber\\
    &Z_i\geq0\quad\forall~i.\nonumber\\
    &m_i\mathbbm{1}\succeq\tr_{\cY}[Z_i]\quad\forall~i\nonumber
\end{align}
From Eq. \eqref{Eq:def_dist_meas_of_set_inst}, we already know that $\widehat{\cD}(\cI,\cJ)=\max_{i\in\{1,\ldots,n\}}\cD_{\Diamond}(\hat{\Gamma}_{\mathbf{I}_i},\hat{\Gamma}_{\mathbf{J}_i})$. Thus, it can be computed by maximizing over the index $i$ in Eq. \eqref{Eq:Primal_form_2}. It can be brought into a more useful form by epigraph reformulation\cite{boyd_convex_opt}. As it is possible to write $\max_i m_i=\min_t~t~\text{subject to}~m_i\leq t ~\forall~i$, $\widehat{\cD}(\cI,\cJ)$ can labeled as the solution the following optimization problem:
\begin{align}
    &\min_{t,Z_i}~2t\label{Eq:Primal_form_3}\\
    &\qquad\quad\text{s.t.}\nonumber\\
    &Z_i\geq J(\hat{\Gamma}_{\mathbf{I}_i})-J(\hat{\Gamma}_{\mathbf{J}_i})\quad\forall~i\nonumber\\
    &Z_i\geq0\quad\forall~i.\nonumber\\
    &t\mathbbm{1}\geq\tr_{\cY}[Z_i]\quad\forall~i\nonumber
\end{align}

Now, from the definition of the resource measure $\mathbbm{R}(\cI)$ in Eq. \eqref{Eq:Def_res_meas}, it is clear that it can be computed by solving the optimization problem:
\begin{align}
    &\min_{\cJ,t,Z_i}~2t\label{Eq:Primal_form_res_meas}\\
    &\qquad\quad\text{s.t.}\nonumber\\
    &Z_i\geq J(\hat{\Gamma}_{\mathbf{I}_i})-J(\hat{\Gamma}_{\mathbf{J}_i})\quad\forall~i\nonumber\\
    &Z_i\geq0\quad\forall~i.\nonumber\\
    &t\mathbbm{1}\geq\tr_{\cY}[Z_i]\quad\forall~i\nonumber\\
    &\cJ=\{\mathbf{J}_i\}\in \cF_{(\cH_A,\cK_A)}\nonumber,
\end{align}
which is an SDP. Here $\cF_{(\cH_A,\cK_A)}$ is the set of free sets of instruments with the input Hilbert space $\cH_{A}$ and the output Hilbert space $\cK_{A}$, and it is considered to be a convex set.

Similarly, the resource measure $\overline{\mathbbm{R}}$ defined in Eq. \eqref{Eq:Def_res_ext_meas} can also be posed as a solution to an optimisation problem in a similar way as above. Proceeding as previously, first we compute the distance measure in Eq. \eqref{Eq:Def_res_ext_meas}. First, we note that $\mathbbm{K}(\cI,\cH_B)$ in Eq. \eqref{Eq:K_min} can be calculated as the solution of the optimization problem of the form Eq. \eqref{Eq:Primal_form_res_meas}. The solution clearly depends on the Hilbert space $\cH_B$. Also, we know that any Hilbert space $\cH_B$ of dimension $\dim{\cH_B}=n$ is isomorphic to $\mathbbm{C}^n$. Thus, for each $n$, we can solve the optimization problem of the form in Eq. \eqref{Eq:Primal_form_res_meas} (and let us denote the solution as $\mathbbm{R}_n(\cI)$) and then take an infimum over $n$ . For any finite $m$ and $1\leq n\leq m$, this amounts to finding a minimum value from a list having m entries. This can be mathematically represented as:
\begin{align}
    &\min_n \mathbbm{R}_n(\cI)\label{Eq:Opt_list_R}\\
    &\qquad s.t.\nonumber\\
    &n\in\{1,\ldots,m\},\nonumber
\end{align}
which can solved in the time of the order $m$\cite{cormen_algo}. Hence its is efficiently computable. More formally, we can write down the optimization problem in \eqref{Eq:Opt_list_R} as follows:
\begin{align}
    &\min_n\min_{\cJ^n,t_n,Z^n_i}~2t_n\label{Eq:Primal_form_ext_1}\\
    &\qquad\quad\text{s.t.}\nonumber\\
    &Z^n_i\geq J(\hat{\Gamma}_{\mathbf{\widehat{I}}^n_i})-J(\hat{\Gamma}_{\mathbf{J}^n_i})\quad\forall~i,n\nonumber\\
    &Z^n_i\geq0\quad\forall~i,n.\nonumber\\
    &t_n\mathbbm{1}\geq\tr_{\cY}[Z^n_i]\quad\forall~i,n\nonumber\\
    &\cJ^n=\{\mathbf{J}^n_i\}\in \cF_{(\cH_A\otimes\mathbbm{C}^n,\cK_A)}\nonumber\\
    &n\in\{1,\ldots,m\}\nonumber.
\end{align}
 Here $\widehat{\mathbf{I}}_i^n:=(\widehat{\mathbf{I}}_i)_{(\cH_A\otimes\mathbbm{C}^n,\cK_A)}$  which is of the form in Eq. \eqref{Eq:Def_res_ext_meas}. 
Thus, by solving this optimization problem, we can compute the resource measure $\overline{\mathbbm{R}}$ up to an \emph{arbitrary precision} specified by $m$. In other words, if we denote the solution of the optimization problem  \eqref{Eq:Primal_form_ext_1} as $\overline{\mathbbm{R}}_m(\cI)$, then
\begin{equation} \overline{\mathbbm{R}}(\cI)=\lim_{m\rightarrow\infty}\overline{\mathbbm{R}}_m(\cI).
\end{equation} 

\subsection{Some specific instrument-based quantum resources: characterization, quantification, hierarchies, and constructing resource theories}
\label{Subsec:Main:QI_QRTs}

In this section, we study the characterization, quantification, and hierarchies of some instrument-based quantum resources and try to construct their resource theories. In the following, we enlist and study (from a resource-theoretic point of view) some specific types of quantum instruments that can be considered as free objects for some resource theories.

\subsubsection{Trash-and-prepare instruments and the resource theory of information preservability}

Transmission of (classical or quantum or both) information through quantum channels (or more generally, through quantum instruments in the scenarios of sequential information extraction) is an important aspect of quantum communication technology. Therefore, the ability of quantum instruments to \emph{preserve information} is an important avenue to be explored. This \emph{motivates} us to construct a resource theory that helps us to study and quantify the ability of quantum instruments to preserve information in an elegant way. We call this resource theory the resource theory of \emph{information preservability}.

\begin{definition}
    A trash-and-prepare quantum instrument is a special type of one-outcome quantum instrument that contains a single trash-and-prepare quantum channel i.e., a quantum channel of the form
\begin{align}
    \Phi(\rho)=\tr[\rho]\sigma.
\end{align}
Here $\Phi:\cL(\cH)\rightarrow\cL(\cK)$, $\rho\in\cL(\cH)$ and $\sigma\in\cL(\cK)$. We denote a set of trash-and-prepare instruments with input Hilbert space $\cH$ and output Hilbert space $\cK$ as $\mathscr{I}_{TP}(\cH, \cK)$.
\end{definition}

As these instruments only provide a fixed classical output and a fixed quantum output, irrespective of the input, they destroy all the classical and quantum information present in the input state. Thus, the instruments that belong to the complement of the set of trash-and-prepare instruments are expected to preserve some information, and the ability to preserve information can be considered as a resource.

So the trash-and-prepare quantum instruments can be considered as free objects of this resource theory of information preservability, and they form a convex set. Its free transformations, which transform one set of trash-and-prepare instruments to another set of trash-and-prepare instruments, can be formulated as follows:
\begin{theorem}
     Consider a set of trash-and-prepare  instruments $\cI=\{\mathbf{I}^a=\{\Phi^a\}\in\mathscr{I}_{TP}(\cH, \cK)\}$. Let $\overline{\cJ}=\{\overline{\mathbf{J}}^b=\{\overline{\Phi}^b\}\in\mathscr{I}(\overline{\cH}, \overline{\cK})\}\}$ be a set of one-outcome instruments such that
    \begin{align}
       \overline{\Phi}^b=&\sum_{k_1}q(k_1)\sum_{a}p(a|b,k_1)\tilde{\Theta}^{b,k_1}\circ(\Phi^{a}\otimes\mathbbm{I}_{Q})\circ\Lambda^{\prime b,k_1}\nonumber\\
       +&\sum_{k_2}q(k_2)\sum_{a}p(a|b,k_2)\Theta^{\prime b,k_2}\circ(\Phi^{a}\otimes\mathbbm{I}_{Q^{\prime}})\circ\tilde{\Lambda}^{ b,k_2}\nonumber\\
       +&\sum_{k_3}q(k_3)\sum_{a}p(a|b,k_3)\tilde{\Gamma}^{ b,k_3}\circ\Phi^{a}\circ\tilde{\Delta}^{ b,k_3}\label{Eq:free_operation_tras_prep}
    \end{align}
    with $\sum_{k_1}q(k_1)+\sum_{k_2}q(k_2)+\sum_{k_3}q(k_3)=1$. Here, $\{\cI^{\prime k_1}=\{\mathbf{I}^{\prime b,k_1}=\{\Lambda^{\prime b,k_1}\}\in\mathscr{I}_{TP}(\overline{\cH},\cH\otimes Q)\}_b\}_{k_1}$ and $\{\cJ^{\prime k_2}=\{\mathbf{J}^{\prime b,k_2} =\{\Theta^{\prime b,k_2}\}\in\mathscr{I}_{TP}(\cK\otimes Q^{\prime},\overline{\cK})\}_b\}_{k_2}$ are two sets of sets of trash-and-prepare instruments and $\{\tilde{\cJ}^{k_1}=\{\tilde{\mathbf{J}}^{b,k_1}=\{\tilde{\Theta}^{b,k_1}\}\in\mathscr{I}(\cK\otimes Q,\overline{\cK})\}_b\}_{k_1}$ and $\{\tilde\cI^{k_2}=\{\tilde{\mathbf{I}}^{b,k_2}=\{\tilde{\Lambda}^{b,k_2}\}\in\mathscr{I}(\overline{\cH},\cH\otimes Q^{\prime})\}_b\}_{k_2}$ are two another sets of sets of instruments where each $\tilde{\Theta}^{b,k_1}$ and $\tilde{\Lambda}^{b,k_2}$ are general quantum channels $\forall~b,k_1,k_2$. Also, $\{\tilde{\cX}^{k_3}=\tilde{\mathbf{X}}^{b,k_3}=\{\tilde{\Gamma}^{b,k_3}\in\mathscr{I}(\cK,\overline{\cK})\}_b\}_{k_3}$ and $\{\tilde{\cO}^{k_3}=\tilde{\mathbf{O}}^{b,k_3}=\{\tilde{\Delta}^{b,k_3}\in\mathscr{I}(\overline{\cH},\cH)\}_b\}_{k_3}$ are two sets of set of quantum instruments with $\tilde{\Gamma}^{b,k_3}$ and $\tilde{\Delta}^{b,k_3}$ being generic quantum channels.
    Then 
    \begin{enumerate}
        \item $\overline{\cJ}$ is also a trash-and-prepare instrument. In other words, the transformation of the form given in Eq. \eqref{Eq:free_operation_tras_prep} can be considered as a free transformation of the resource theory of information preservability.
    \item Furthermore, a given arbitrary set of trash-and-prepare instruments can be transformed to a given arbitrary set of trash-and-prepare instruments through this transformation.\end{enumerate}\label{Th:free_op_trash_prepare}
\end{theorem}
\begin{proof}
    Note that $\Lambda^{\prime b,k_1}$ is trash-and-prepare for all $b,k_1$. Then the following channel
    \begin{align}
        \sum_{a}p(a|b,k_1)\tilde{\Theta}^{b,k_1}&\circ(\Phi^{a}\otimes\mathbbm{I}_{Q})\circ\Lambda^{\prime b,k_1}
    \end{align}
    is also trash-and-prepare for all $b,k_1$ as probabilistic mixing of an arbitrary set of trash-and-prepare instruments and the composition of a trash-and-prepare channel with any quantum channel also results in another trash-and-prepare channel. Similarly, again note that $\Theta^{\prime b,k_2}$ is trash-and-prepare for all $b,k_2$. Therefore, due to similar reasons, the channel 
    \begin{align}
        \sum_{a}p(a|b,k_2)\Theta^{\prime b,k_2}\circ(\Phi^{a}\otimes\mathbbm{I}_{Q^{\prime}})\circ\tilde{\Lambda}^{ b,k_2}
    \end{align}
    is also trash-and-prepare all $b,k_2$.

    Also as $\Phi^a$ is a trash-and-prepare for all $a$ by the same logic as above the channel
    \begin{align}
        \sum_{a}p(a|b,k_3)\tilde{\Gamma}^{ b,k_3}\circ\Phi^{a}\circ\tilde{\Delta}^{ b,k_3}
    \end{align}
is trash-and-prepare for all $b,k_3$.
    We know that trash-and-prepare channels form a convex set and therefore, the channel given in Eq. \eqref{Eq:free_operation_tras_prep} is trash-and-prepare for all $b$.  Thus an arbitrary transformation of the form given in Eq. \eqref{Eq:free_operation_tras_prep} always transforms a set of trash-and-prepare instrument to another set of trash-and-prepare instrument and therefore, it can be considered as a free transformation.

    The remaining thing is to show that given a pair of arbitrary sets of trash-and-prepare instruments, there exists a transformation of the form given in Eq. \eqref{Eq:free_operation_tras_prep} that transforms one set of that pair to the other set of the same pair. In order to do so, let us denote one of the sets of that pair as $\cI=\{\mathbf{I}^a=\{\Phi^a\}\in\mathscr{I}_{TP}(\cH, \cK)\}$. Our goal is to show that it can be transformed to the other set of the same pair, denoted as $\overline{\cJ}=\{\overline{\mathbf{J}}^b=\{\overline{\Phi}^b\}\in\mathscr{I}_{TP}(\overline{\cH}, \overline{\cK})\}\}$ using transformations of the form in Eq. \eqref{Eq:free_operation_tras_prep}. We proceed by defining
    \begin{align}
        \Lambda^{\prime b,1}&=~\tilde{\Gamma}_0\circ\overline{\Phi}^b,\nonumber\\
        \tilde{\Theta}^{b,1}&=(\tr_{\cK}\otimes\mathbbm{I}_{Q}),
    \end{align}
      where $\tilde{\Gamma}_0:\cL(\overline{\cK})\rightarrow\cL(\cH\otimes Q)$ with $\overline{\cK}=Q$ such that for all $\sigma\in\cL(\overline{\cK}),~ \tilde{\Gamma}_0(\sigma)= \ket{0}\bra{0}\otimes\sigma$  and clearly, $\mathbbm{I}_{Q}=\mathbbm{I}_{\overline{\cK}}$. Then, choosing $q(k_1)=\delta_{k_1,1}$ with $q(k_2),q(k_3)=0 ~\forall~k_2,k_3$, it can be easily shown that
    \begin{align}
   \sum_{k_1}q(k_1)\sum_{a}&p(a|b,k_1)\tilde{\Theta}^{b,k_1}\circ(\Phi^{a}\otimes\mathbbm{I}_{Q})\circ\Lambda^{\prime b,k_1}\nonumber\\
       +&\sum_{k_2}q(k_2)\sum_{a}p(a|b,k_2)\Theta^{\prime b,k_2}\circ(\Phi^{a}\otimes\mathbbm{I}_{Q^{\prime}})\circ\tilde{\Lambda}^{ b,k_2}\nonumber\\
       +&\sum_{k_3}q(k_3)\sum_{a}p(a|b,k_3)\tilde{\Gamma}^{ b,k_3}\circ\Phi^{a}\circ\tilde{\Delta}^{ b,k_3}= \overline{\Phi}^b
    \end{align}
    Hence a given arbitrary set of trash-and-prepare instruments $\cI=\{\mathbf{I}^a=\{\Phi^a\}\in\mathscr{I}_{TP}(\cH, \cK)\}$ can be transformed to another given arbitrary set of trash-and-prepare instruments $\overline{\cJ}=\{\mathbf{\overline{J}}^j=\{\overline{\Phi}^b\}\in\mathscr{I}_{TP}(\overline{\cH}, \overline{\cK})\}$ using the free transformations defined in Eq. (\ref{Eq:free_operation_tras_prep}).
\end{proof}
\begin{theorem}
    $\widehat{\cD}$ is monotonically non-increasing under the free transformations of the resource theory of information preservability. 
\end{theorem}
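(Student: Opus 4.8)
The plan is to prove the stated contraction directly: for any two sets of one-outcome instruments $\cI=\{\mathbf{I}^a=\{\Phi^a\}\}$ and $\cJ=\{\mathbf{J}^a=\{\Psi^a\}\}$, and any free transformation $\cV$ of the form in Eq.~\eqref{Eq:free_operation_tras_prep}, I would show $\widehat{\cD}(\cV[\cI],\cV[\cJ])\le\widehat{\cD}(\cI,\cJ)$; this is precisely the monotonicity of $\widehat{\cD}$, which by Proposition~\ref{Propsi:res_meas_prop} also certifies that $\mathbbm{R}$ (and $\overline{\mathbbm{R}}$) is a valid measure for this resource theory. Since all objects here are one-outcome instruments, $\hat{\Gamma}_{\mathbf{I}^a}(\rho)=\Phi^a(\rho)\otimes\ket{0}\bra{0}$, and tensoring the fixed normalized state $\ket{0}\bra{0}$ onto the output leaves the diamond norm unchanged; hence $\widehat{\cD}(\cI,\cJ)=\max_a\cD_{\Diamond}(\Phi^a,\Psi^a)$, and likewise for the transformed sets. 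Thus it suffices to bound $\cD_{\Diamond}(\overline{\Phi}^b,\overline{\Psi}^b)$ for each output index $b$, where $\overline{\Phi}^b$ and $\overline{\Psi}^b$ are the channels produced by $\cV$ from $\cI$ and $\cJ$ at index $b$.

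First I would fix $b$ and read off from Eq.~\eqref{Eq:free_operation_tras_prep} that $\overline{\Phi}^b$ is a genuine convex combination -- the weights $qp(a|b)$ and $(1-q)p(a|b)$ sum to $1$ because $\sum_a p(a|b)=1$ -- of the channels $\hat{\Xi}_1(\Phi^a):=\tilde{\Theta}^{b}\circ(\Phi^{a}\otimes\mathbbm{I}_{Q})\circ\Lambda^{\prime b}$ and $\hat{\Xi}_2(\Phi^a):=\Theta^{\prime b}\circ(\Phi^{a}\otimes\mathbbm{I}_{Q^{\prime}})\circ\tilde{\Lambda}^{b}$, and similarly for $\overline{\Psi}^b$ with $\Phi^a$ replaced by $\Psi^a$. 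The key structural observation is that both $\hat{\Xi}_1$ and $\hat{\Xi}_2$ are superchannels of the form in Eq.~\eqref{supermap}: the pre-processors are the trash-and-prepare channels $\Lambda^{\prime b}$ (resp.\ $\tilde{\Lambda}^{b}$) mapping $\overline{\cH}$ into $\cH\otimes Q$ (resp.\ $\cH\otimes Q^{\prime}$), the ancillary register is $Q$ (resp.\ $Q^{\prime}$), and the post-processors are the general channels $\tilde{\Theta}^{b}$ (resp.\ $\Theta^{\prime b}$).

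Then I would invoke the joint convexity of $\cD_{\Diamond}$ (Eq.~\eqref{Eq:diamond_dist_convexity}, iterated to finite convex combinations) to obtain
\begin{align}
\cD_{\Diamond}(\overline{\Phi}^b,\overline{\Psi}^b)\le\sum_a qp(a|b)\,&\cD_{\Diamond}(\hat{\Xi}_1(\Phi^a),\hat{\Xi}_1(\Psi^a))\nonumber\\
+\sum_a (1-q)p(a|b)\,&\cD_{\Diamond}(\hat{\Xi}_2(\Phi^a),\hat{\Xi}_2(\Psi^a)).\nonumber
\end{align}
Because $\cD_{\Diamond}$ is contractive under an arbitrary superchannel, each summand is at most $\cD_{\Diamond}(\Phi^a,\Psi^a)$, so the right-hand side collapses to $\sum_a p(a|b)\,\cD_{\Diamond}(\Phi^a,\Psi^a)\le\max_a\cD_{\Diamond}(\Phi^a,\Psi^a)=\widehat{\cD}(\cI,\cJ)$. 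Taking the maximum over $b$ gives $\widehat{\cD}(\cV[\cI],\cV[\cJ])=\max_b\cD_{\Diamond}(\overline{\Phi}^b,\overline{\Psi}^b)\le\widehat{\cD}(\cI,\cJ)$, which is the claim.

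The one step that demands genuine care -- and which I would flag as the main obstacle -- is verifying that $\hat{\Xi}_1$ and $\hat{\Xi}_2$ genuinely fit the superchannel template of Eq.~\eqref{supermap} with valid CPTP pre- and post-processing (in particular that the trash-and-prepare pre-processors deliver the right input space $\cH\otimes Q$ so that $\Phi^a\otimes\mathbbm{I}_{Q}$ acts legitimately), since only then does the established contraction $\cD_{\Diamond}(\hat{\Xi}(\Lambda_1),\hat{\Xi}(\Lambda_2))\le\cD_{\Diamond}(\Lambda_1,\Lambda_2)$ apply term by term. Everything else reduces to the routine combination of joint convexity with the normalization $\sum_a p(a|b)=1$, so I do not expect any further difficulty.
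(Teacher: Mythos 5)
Your proof is correct, and it rests on the same two pillars as the paper's own argument: joint convexity of $\cD_{\Diamond}$ (Eq.~\eqref{Eq:diamond_dist_convexity}) and contractivity of the diamond distance under supermap-type transformations. The difference is one of granularity and of which contraction lemma is invoked. The paper writes the free transformation as a two-term mixture $\cW=q\tilde{\cV}+(1-q)\cV$ of \emph{set-level} transformations of the form of Eq.~\eqref{gensupermap}, folding the weights $p(a|b)$ into the identification $\Sigma_{\cI}=\sum_a p(a|b)\Phi^a$, and then applies the set-level contraction Eq.~\eqref{Eq:supermap_contractive_distance} of Ref.~\cite{Mitra_distance_measure_MBQR}. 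You instead decompose each $\overline{\Phi}^b$ as an explicit convex combination over the pairs $(a,\text{branch})$ with weights $qp(a|b)$ and $(1-q)p(a|b)$, apply the elementary single-channel contraction $\cD_{\Diamond}(\hat{\Xi}(\Lambda_1),\hat{\Xi}(\Lambda_2))\leq\cD_{\Diamond}(\Lambda_1,\Lambda_2)$ term by term, and finish with $\sum_a p(a|b)\,\cD_{\Diamond}(\Phi^a,\Psi^a)\leq\max_a\cD_{\Diamond}(\Phi^a,\Psi^a)$. Your route buys two things: it is more self-contained (it never needs the controlled-implementation machinery of Eq.~\eqref{gensupermap}), and it is more rigorous about the weights --- the paper's identification $\Sigma_{\cI}=\sum_a p(a|b)\Phi^a$ is not literally of the controlled-implementation form $\sum_i\Lambda_i\otimes\overline{\Phi}_i$, and repairing it requires hiding the $p(a|b)$ in the state of the control register prepared by the pre-processing, a step your version sidesteps entirely. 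Your explicit preliminary observation that $\widehat{\cD}$ reduces to $\max_a\cD_{\Diamond}(\Phi^a,\Psi^a)$ for one-outcome instruments (because appending a fixed pure state to the output preserves the diamond norm) is also left implicit in the paper, and is needed for either argument to connect $\widehat{\cD}$ with the channel-level estimates.
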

\begin{proof}
    From Theorem \ref{Th:free_op_trash_prepare}, we know that the free transformation of the resource theory of information preservability is given by Eq. \eqref{Eq:free_operation_tras_prep}. We observe that Eq. \eqref{Eq:free_operation_tras_prep} can be written in the form
\begin{align}
   \overline{\Phi}^b=\sum_{k_1}q(k_1)&\Theta^{b,k_1}_{post}\circ(\Sigma_{\cI}\otimes\mathbbm{I}_{Q})\circ\Theta^{b,k_1}_{pre}\nonumber\\
       +&\sum_{k_2}q(k_2)\Theta^{b,k_2}_{post}\circ(\Sigma_{\cI}\otimes\mathbbm{I}_{Q'})\circ \Theta^{b,k_2}_{pre}\nonumber\\
       +&\sum_{k_3}q(k_3)\Theta^{b,k_3}_{post}\circ\Sigma_{\cI}\circ\Theta^{b,k_3}_{pre},
\end{align}
by identifying $\Theta^{b,k_1}_{post}=\tilde{\Theta}^{b,k_1}$, $\Theta^{b,k_1}_{pre}=\Lambda^{\prime b,k_1}$, $\Theta^{b,k_2}_{post}=\Theta^{\prime b, k_2}$, $\Theta^{b,k_2}_{pre}=\tilde{\Lambda}^{b,k_2},\Theta^{b,k_3}_{post}=\tilde{\Gamma^{b,k_3}},\Theta^{b,k_3}_{pre}=\tilde{\Delta^{b,k_3}}$, and $\Sigma_{\cI}=\sum_a p(a|b) \Phi^a$. Symbolically, let us denote it by $\overline{\cJ}=\sum_{k_1}q(k_1)\cV^{k_1}[\cI]+\sum_{k_2}q(k_2)\cV^{k_2}[\cI]+\sum_{k_3}q(k_3)\cV^{k_3}[\cI]:=\cW[\cI]$ where the set of instruments $\cV^{k_1}[\cI]=\{\cV^{k_1}[\cI]^b=\tilde{\Theta}^{b,k_1}_{post} \circ(\Sigma_{\cI}\otimes\mathbbm{I}_{Q})\circ\tilde{\Theta}^{ b,k_1}_{pre}\}$, $\cV^{k_2}[\cI]:=\{\cV^{k_2}[\cI]^b=\{\Theta^{b,k_2}_{post} \circ(\Sigma_{\cI}\otimes\mathbbm{I}_{Q^{\prime}})\circ\Theta^{ b,k_2}_{pre}\}$ and $\cV^{k_3}[\cI]:=\{\cV^{k_3}[\cI]^b=\{\Theta^{b,k_3}_{post} \circ\Sigma_{\cI}\circ\Theta^{ b,k_3}_{pre}\}$ . Consider sets of instruments $\tilde\cI_1$ and $\tilde\cI_2$ such that $\tilde\cI_i=\cW[\cI_i]~\forall ~ i=1,2$. Then we can write
\begin{align}
    \widehat{\cD}(\tilde\cI_1,\tilde\cI_2):=&\widehat{\cD}(\cW[\cI_1],\cW[\cI_2]),\nonumber\\
    \leq &\sum_{k_1}q(k_1)\widehat{\cD}(\cV^{k_1}[\cI_1],\cV^{k_1}[\cI_2])\nonumber\\
    &\qquad\qquad+\sum_{k_2}q(k_2)\widehat{\cD}(\cV^{k_2}[\cI_1],\cV^{k_2}[\cI_2]),\nonumber\\
    &\qquad\qquad\qquad+\sum_{k_3}q(k_3)\widehat{\cD}(\cV^{k_3}[\cI_1],\cV^{k_3}[\cI_2])\nonumber\\
    \leq &\sum_{k_1}q(k_1)\widehat{\cD}(\cI_1,\cI_2)+\sum_{k_2}q(k_2)\widehat{\cD}(\cI_1,\cI_2)\nonumber\\
    &\qquad\qquad\qquad+\sum_{k_3}q(k_3)\widehat{\cD}(\cI_1,\cI_2),\nonumber\\
    = & \widehat{\cD}(\cI_1,\cI_2),
\end{align}
where in the second line we have used Eq. \eqref{Eq:Dist_inst_convex} and in the third line we have used Eq. \eqref{Eq:supermap_contractive_distance}. Hence $\widehat{D}$ is monotonically non-increasing under the free transformations of resource theory information preservability.
\end{proof}
As a result of the above theorem, using Proposition \ref{Propsi:res_meas_prop} and \ref{Propsi:res_meas_ext_prop}, we can also conclude that the distance-based resource measures in Eqs. \eqref{Eq:Def_res_meas} and \eqref{Eq:Def_res_ext_meas} are valid resource measures for the resource theory of information preservability. These are denoted as $\mathbbm{R}_{IP}$ and $\overline{\mathbbm{R}}_{IP}$, respectively.

\subsubsection{(Weak) Entanglement-breaking instruments and the resource theory of (strong) entanglement  preservability}

It is well-known that entanglement of a bipartite quantum state is a necessary resource for several information-theoretic tasks, e.g., quantum teleportation\cite{bennet_teleportation}, superdense coding\cite{harrow_superdense_coding}, quantum key distribution\cite{Shor_BB84}, etc. Therefore, the ability of a quantum channel ( or more generally of a quantum instrument in the scenarios of sequential use\cite{Colbeck_2020_NonLocal} of entanglement) to \emph{preserve entanglement} when it is acted on one side of a bipartite quantum state can be considered as a resource. This \emph{motivates} us to construct a resource theory that helps us to study and quantify the ability of quantum instruments to preserve entanglement in an elegant way. In some scenarios, probabilistic entanglement preservation might be enough, while in others we may require deterministic entanglement preservation. Therefore, we need two variants of a resource theory based on the ability of quantum instruments to preserve entanglement probabilistically or deterministically. We call these resource theories the resource theory of \emph{entanglement preservability} and the resource theory of \emph{strong entanglement preservability}, respectively, and study them one by one.
\begin{definition}
    An instrument $\mathbf{I}=\{\Phi_a\}\in\mathscr{I}(\cH,\cK)$ with $\sum_a\Phi_a=\Phi$
\begin{enumerate}
    \item is weak entanglement-breaking if $\Phi$ is entanglement-breaking. The set of such instruments is denoted as $\mathscr{I}_{WEB}(\cH,\cK)$.
    \item is entanglement-breaking if $\Phi_a$ is entanglement-breaking for all $a$. The set of such instruments is denoted as $\mathscr{I}_{EB}(\cH,\cK)$.
\end{enumerate}
\end{definition}

Clearly, the set of all trash-and-prepare instruments $\mathscr{I}_{TP}(\cH, \cK)$ is a subset of the set of entanglement-breaking quantum instruments $\mathscr{I}_{EB}(\cH,\cK)$ i.e., $\mathscr{I}_{TP}(\cH, \cK)\subseteq\mathscr{I}_{EB}(\cH,\cK)$ as trash-and-prepare form is a special case of measure-and-prepare form.

Before we start exploring the two variants of the resource theories as mentioned above, we prove the following proposition.

\begin{proposition}
    The set of all entanglement-breaking instruments is a subset of the set of all weak entanglement-breaking instruments for any given input Hilbert space $\cH$ and output Hilbert space $\cK$. \label{Proposi:eb_in_web}
\end{proposition}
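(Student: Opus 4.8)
The plan is to invoke the measure-and-prepare (Holevo) characterisation of entanglement-breaking operations recalled earlier in the excerpt, and to observe that this structure is stable under summation. Concretely, suppose $\mathbf{I}=\{\Phi_a\}\in\mathscr{I}_{EB}(\cH,\cK)$, so that each branch $\Phi_a$ is an entanglement-breaking CP (trace non-increasing) map. By the stated form of such maps, I would write $\Phi_a(\rho)=\sum_k \sigma_{a,k}\,\tr[\rho\, A_a(k)]$ for states $\sigma_{a,k}\in\cS(\cK)$ and positive operators $A_a(k)\geq 0$ obeying $\sum_k A_a(k)\leq\Id_{\cH}$. A short computation gives $\sum_k A_a(k)=\Phi_a^{\dagger}(\Id_{\cK})$, which will be used to track normalisation at the end.

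First I would sum over $a$ to obtain the induced channel $\Phi=\sum_a\Phi_a$, yielding $\Phi(\rho)=\sum_{a,k}\sigma_{a,k}\,\tr[\rho\,A_a(k)]$. Relabelling the pair $(a,k)$ as a single index shows this is again exactly a measure-and-prepare expression, with effects $\{A_a(k)\}$ whose total is $\sum_{a,k}A_a(k)=\sum_a\Phi_a^{\dagger}(\Id_{\cK})=\Phi^{\dagger}(\Id_{\cK})=\Id_{\cH}$, a genuine POVM precisely because $\Phi$ is trace-preserving (the defining property of an instrument, so $\Phi^{\dagger}$ is unital). Hence $\Phi$ is an entanglement-breaking channel, which is by definition the statement that $\mathbf{I}\in\mathscr{I}_{WEB}(\cH,\cK)$, establishing the inclusion $\mathscr{I}_{EB}(\cH,\cK)\subseteq\mathscr{I}_{WEB}(\cH,\cK)$.

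As a cross-check I would mention the equivalent route through the Choi-matrix criterion recalled earlier: each $\Phi_a$ being entanglement-breaking means its (subnormalised) Choi operator is a positive semidefinite \emph{separable} operator, as is manifest from the measure-and-prepare form. Since the Choi map is linear, the Choi operator of $\Phi=\sum_a\Phi_a$ is the sum of these; the set of separable positive operators is a convex cone and hence closed under addition, so the Choi operator of $\Phi$ remains separable, and trace preservation of $\Phi$ supplies the correct normalisation. This gives the same conclusion.

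There is no genuine technical obstacle here; the only point requiring care is conceptual rather than computational—namely that \emph{weak} entanglement-breaking constrains the total channel $\Phi$ alone, so one must confirm that summing individually entanglement-breaking branches cannot conspire to make $\Phi$ entanglement-preserving. The measure-and-prepare / separable-cone observation makes this immediate, since summation never leaves the separable cone, and trace preservation of the instrument is exactly what upgrades the resulting entanglement-breaking \emph{operation} to an entanglement-breaking \emph{channel}.
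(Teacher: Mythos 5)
Your proposal is correct and takes essentially the same route as the paper: the paper's entire proof is the one-sentence assertion that the induced channel $\Phi=\sum_a\Phi_a$ of an entanglement-breaking instrument is itself entanglement-breaking, which immediately gives $\mathscr{I}_{EB}(\cH,\cK)\subseteq\mathscr{I}_{WEB}(\cH,\cK)$. The only difference is that you actually justify this key assertion (via the measure-and-prepare form, with trace preservation upgrading the summed effects to a genuine POVM, and the Choi-separability cross-check), a step the paper leaves implicit.
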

\begin{proof}
    As the induced channel of an entanglement-breaking quantum instrument is entanglement-breaking, an entanglement-breaking quantum instrument is also weak entanglement-breaking. Therefore, the set of all entanglement-breaking quantum instruments is a subset of the set of all weak entanglement-breaking instruments. Hence, $\mathscr{I}_{EB}(\cH, \cK)\subseteq\mathscr{I}_{WEB}(\cH,\cK)$.
\end{proof}
In the following example, we show that in the qubit case, entanglement-breaking instruments are a strict subset of weak entanglement-breaking instruments.
\begin{example}
   \rm{ From Proposition \ref{Proposi:eb_in_web}, we know that the set of all entanglement-breaking instruments is a subset of the set of weak entanglement-breaking instruments for any given input Hilbert space $\cH$ and output Hilbert space $\cK$.
    Now, consider a four-outcome qubit instrument $\mathbf{I}=\{\Phi_a\}$ where for all $\rho\in\cL(\cH^{\mathbf{Q}})$
    \begin{align}
        \Phi_1(\rho)&=\frac{1}{2}\rho;\nonumber\\
        \Phi_2(\rho)&=\frac{1}{6}\sigma_x\rho\sigma_x;\nonumber\\
        \Phi_3(\rho)&=\frac{1}{6}\sigma_y\rho\sigma_y;\nonumber\\
        \Phi_4(\rho)&=\frac{1}{6}\sigma_z\rho\sigma_z.
    \end{align}
    where $\cH^{\mathbf{Q}}$ is the qubit Hilbert space. Note that all $\Phi_a$s have Kraus rank $1$ and none of the $\Phi_a$s has measure-and-prepare form and therefore, is not entanglement-breaking. But
    \begin{align}
        \Phi(\rho)&=\sum_a\Phi_a(\rho)\nonumber\\
        &=\frac{1}{3}\rho+\frac{2}{3}\frac{\Id_{2\times 2}}{2},
    \end{align}
    which is well-known to be entanglement-breaking\cite{Heinosaari_incomp_break_chan}. Hence, the set of all qubit entanglement-breaking instruments is a \emph{strict subset} of the set of all qubit weak entanglement-breaking instruments.}\label{Examp:eb_strict_in_web}
\end{example}

In the resource theory of entanglement preservability, the sets of entanglement-breaking instruments are the free objects that do not allow the preservation of entanglement even probabilistically. We construct the free transformations for it as follows.

Let us consider a set of sets of instruments $\{\cJ^{\prime k_2}=\{\mathbf{J}^{\prime j,k_2}=\{\Phi^{\prime j,k_2}_{a}\}\in\mathscr{I}(\overline{\cH},\cH\otimes Q)\}_j\}_{k_2}$ and a set of sets of sets of intruments $\{\{\tilde{\cI}^{j,k_1}=\{\tilde{\mathbf{I}}^{j,b,k_1}=\{\tilde{\Phi}^{j,b,k_1}_c\}\in\mathscr{I}(\cK\otimes Q^{\prime},\overline{\cK})\}_b\}_j\}_{k_1}$. Let us also consider a set of sets of entanglement-breaking instruments $\{\cJ^{\prime *k_1}=\{\mathbf{J}^{\prime * j,k_1}=\{\Phi^{\prime * j,k_1}_{a}\}\in\mathscr{I}_{EB}(\overline{\cH},\cH\otimes Q^{\prime})\}_j\}_{k_1}$ and a set of sets of sets of entanglement-breaking instruments $\{\{\tilde{\cI}^{*j,k_2}=\{\tilde{\mathbf{I}}^{*j,b,k_2}=\{\tilde{\Phi}^{*j,b,k_2}_c\}\in\mathscr{I}_{EB}(\cK\otimes Q,\overline{\cK})\}_b\}_j\}_{k_2}$. Then the following results hold.

\begin{theorem}
     Let $\cI=\{\mathbf{I}^a=\{\Phi^{a}_b\}\in\mathscr{I}_{EB}(\cH,\cK)\}$ be a set of entanglement-breaking instruments and $\overline{\cJ}=\{\overline{\mathbf{J}}^j=\{\overline{\Phi}^j_c\}\in\mathscr{I}(\overline{\cH},\overline{\cK})\}$ be a set of instruments such that
    \begin{align}
       \overline{\Phi}^j_c=\sum_{k_1}q(k_1)\sum_{a,b}&\tilde{\Phi}^{j,b,k_1}_c\circ(\Phi^{a}_b\otimes\mathbbm{I}_{Q^{\prime}})\circ\Phi^{\prime * j,k_1}_{a}\nonumber\\
       +&\sum_{k_2}q(k_2)\sum_{a,b}\tilde{\Phi}^{*j,b,k_2}_c\circ(\Phi^{a}_b\otimes\mathbbm{I}_{Q})\circ\Phi^{\prime j,k_2}_{a}\nonumber\\
       +&\sum_{k_3}q(k_3)\sum_{a,b}\tilde{\Gamma}^{ j,b,k_3}_c\circ\Phi^{a}_b\circ\tilde{\Delta}^{ j,k_3}_a.\label{Eq:free_operation_ent_break}
    \end{align}
    with $\sum_{k_1}q(k_1)+\sum_{k_2}q(k_2)+\sum_{k_3}q(k_3)=1$. Here, $\{\{\tilde{\cX}^{j,k_3}=\tilde{\mathbf{X}}^{j,b,k_3}=\{\tilde{\Gamma}^{j,b,k_3}_c\in\mathscr{I}(\cK,\overline{\cK})\}_b\}_j\}_{k_3}$ is a set of sets of sets of quantum instruments and $\{\tilde{\cO}^{k_3}=\tilde{\mathbf{O}}^{j,k_3}=\{\tilde{\Delta}_a^{j,k_3}\in\mathscr{I}(\overline{\cH},\cH)\}_j\}_{k_3}$ is another set of sets of quantum instruments. Then
    \begin{enumerate}
        \item $\overline{\cJ}$ is also a set of entanglement-breaking instruments. In other words, the transformation of the form given in Eq. \eqref{Eq:free_operation_ent_break} can be considered as a free transformation of the resource theory of entanglement preservability.
        \item A given arbitrary set of entanglement-breaking instruments can be transformed to another given arbitrary set of entanglement-breaking instruments through a transformation of the form given in Eq. \eqref{Eq:free_operation_ent_break}.
    \end{enumerate}
    \label{Th:free_op_ent_break}
\end{theorem}

\begin{proof}
Note that $\Phi^{\prime * j,k_1}_a$ is entanglement-breaking CP map $\forall~j,a,k_1$. Then the following CP map   
\begin{align}
    \sum_{a,b}\tilde{\Phi}^{j,b,k_1}_c\circ&(\Phi^{a}_b\otimes\mathbbm{I}_{Q^{\prime}})\circ\Phi^{\prime * j,k_1}_{a},\nonumber
\end{align}
is again entanglement-breaking $\forall~j,c,k_1$. This is because of the fact that the composition of an entanglement-breaking CP trace non-increasing map with any other CP trace non-increasing map, and the sum of entanglement-breaking CP trace non-increasing maps, both result in another entanglement-breaking CP trace non-increasing map. Similarly, as $\tilde{\Phi}^{*j,b,k_2}_c$ is also an entanglement-breaking CP trace non-increasing map $\forall~j,b,c,k_2$ by the same logic the CP map
\begin{align}
    \sum_{a,b}\tilde{\Phi}^{*j,b,k_2}_c\circ(\Phi^{a}_b\otimes\mathbbm{I}_{Q})\circ\Phi^{\prime j,k_2}_{a},
\end{align}
is also entanglement-breaking.

As $\Phi^a_b$ is entanglement-breaking $\forall~a,b$, by the same logic as above the CP map
\begin{align}
    \sum_{a,b}\tilde{\Gamma}^{ j,b,k_3}_c\circ\Phi^{a}_b\circ\tilde{\Delta}^{ j,k_3}_a,
\end{align}
is entanglement-breaking as well
.

From the structure entanglement-breaking CP trace non-increasing maps\cite{Horodecki_gen_EBC}, we know that they form a convex set. Hence $\overline{\Phi}^j_c$ in Eq. \eqref{Eq:free_operation_ent_break} is also entanglement-breaking CP trace non-increasing map $\forall~j,c$. Thus, a transformation of the form written in Eq. \eqref{Eq:free_operation_ent_break} transforms a set of entanglement-breaking instruments to another set of entanglement-breaking instruments, and because of this reason, it can be considered as a free transformation of the resource theory of entanglement preservability.

Once again, the remaining thing to show is whether, for a given pair of sets of arbitrary entanglement-breaking instruments, there exists a transformation of the form in Eq. \eqref{Eq:free_operation_ent_break} which transforms one set of instruments from that given pair to the other set of instruments from the same given pair. We proceed by considering a given arbitrary set of entanglement-breaking instruments $\cI=\{\mathbf{I}^a=\{\Phi^{a}_b\}\in\mathscr{I}_{EB}(\cH,\cK)\}$. Our goal is to show that it can be transformed into another given arbitrary set of entanglement-breaking instruments $\overline{\cJ}=\{\overline{\mathbf{J}}^j=\{\overline{\Phi}^j_c\}\in\mathscr{I}_{EB}(\overline{\cH},\overline{\cK})\}$. We define

     \begin{align}
        \Phi^{\prime  j,1}_{a}&=p(a)\Gamma_0\nonumber\\
        \tilde\Phi^{*j,b,1}_c&=\overline{\Phi}^j_c \circ(\tr_{\cK}\otimes\mathbbm{I}_{Q}),\qquad~\forall~j,b ,
    \end{align}
    where $\sum_a p(a)=1$, $\Gamma_0:\cL(\overline{\cH})\rightarrow\cL(\cH\otimes Q)$ with $\overline{\cH}=Q$ such that for all $\sigma\in\cL(\overline{\cH}),~ \Gamma_0(\sigma)= \ket{0}\bra{0}\otimes\sigma$  and clearly, $\mathbbm{I}_{Q}=\mathbbm{I}_{\overline{\cH}}$. Then, for all $\sigma\in\cL(\overline{\cK})$, we have
    
    \begin{align}
        \sum_{a,b}\tilde{\Phi}^{*j,b,1}_c\circ&(\Phi^{a}_{b}\otimes\mathbbm{I}_{Q})\circ\Phi^{\prime j,1}_{a}(\sigma)\nonumber\\
        =&\sum_{a,b}p(a)\overline{\Phi}^j_c \circ(\tr_{\cK}\otimes\mathbbm{I}_{Q})\circ(\Phi^{a}_{b}\otimes\mathbbm{I}_{Q})\circ\Gamma_0(\sigma)\nonumber\\
         =&\sum_{a}p(a)\overline{\Phi}^j_c \circ(\tr_{\cK}\otimes\mathbbm{I}_{Q})\circ\sum_b(\Phi^{a}_{b}\otimes\mathbbm{I}_{Q})\circ\Gamma_0(\sigma)\nonumber\\
          =&\sum_{a}p(a)\overline{\Phi}^j_c \circ(\tr_{\cK}\otimes\mathbbm{I}_{Q})\circ(\Phi^{a}\otimes\mathbbm{I}_{Q})\circ\Gamma_0(\sigma)\nonumber\\
          =&\sum_{a}p(a)\overline{\Phi}^j_c \circ(\tr_{\cK}\otimes\mathbbm{I}_{Q})\circ(\Phi^{a}(\ket{0}\bra{0})\otimes\mathbbm{I}_{Q}(\sigma))\nonumber\\
           =&\sum_{a}p(a)\overline{\Phi}^j_c(\sigma)\nonumber\\
           =&\overline{\Phi}^j_c(\sigma).
    \end{align}
    
    Thus by choosing $q(k_2)=\delta_{k_2,1}$ with $q(k_1),q(k_3)=0 ~\forall~k_1,k_3$, it follows that
    \begin{align}
       \sum_{k_1}q(k_1)\sum_{a,b}&\tilde{\Phi}^{j,b,k_1}_c\circ(\Phi^{a}_b\otimes\mathbbm{I}_{Q^{\prime}})\circ\Phi^{\prime * j,k_1}_{a}\nonumber\\
       +&\sum_{k_2}q(k_2)\sum_{a,b}\tilde{\Phi}^{*j,b,k_2}_c\circ(\Phi^{a}_{b}\otimes\mathbbm{I}_{Q})\circ\Phi^{\prime j,k_2}_{a}\nonumber\\
       +&\sum_{k_3}q(k_3)\sum_{a,b}\tilde{\Gamma}^{ j,b,k_3}_c\circ\Phi^{a}_b\circ\tilde{\Delta}^{ j,k_3}_a= \overline{\Phi}^j_c.
    \end{align}
Thus, any given set of arbitrary entanglement-instruments $\cI=\{\mathbf{I}^a=\{\Phi^{a}_b\}\in\mathscr{I}_{EB}(\cH,\cK)\}$ can be transformed to another given arbitrary set of entanglement-breaking instruments $\overline{\cJ}=\{\overline{\mathbf{J}}^j=\{\overline{\Phi}^j_c\}\in\mathscr{I}_{EB}(\overline{\cH},\overline{\cK})\}$ using the free transformations given in Eq. \eqref{Eq:free_operation_ent_break}.
\end{proof}

\begin{theorem}
    $\widehat{\cD}$ is monotonically non-increasing under the free transformations of the resource theory of entanglement preservability.\label{Th:Dist_monotone_EP}
\end{theorem}

The detailed proof of this theorem is given in Appendix \ref{App:Dist_monotone_EP}.
As a result of the above theorem, using Proposition \ref{Propsi:res_meas_prop} and \ref{Propsi:res_meas_ext_prop}, we can also conclude that the distance-based resource measures in Eqs. \eqref{Eq:Def_res_meas} and \eqref{Eq:Def_res_ext_meas} are valid resource measures for the resource theory of entanglement preservability. These are denoted as $\mathbbm{R}_{EP}$ and $\overline{\mathbbm{R}}_{EP}$, respectively.

Alternatively, in the resource theory of strong entanglement preservability, the sets of weak entanglement-breaking instruments are the free objects that allow the preservation of entanglement probabilistically but do not allow the preservation of entanglement deterministically.  We construct the free transformations as follows.

Let us have a set of sets of arbitrary instruments $\{\cJ^{\prime,k_1}=\{\mathbf{J}^{\prime j,k_1}=\{\Phi^{\prime j,k_1}_{a}\}\in\mathscr{I}(\overline{\cH},\cH\otimes Q)\}_j\}_{k_1}$ and a set of sets of sets of weak entanglement breaking instruments $\{\{\tilde{\cI}^{b,k_1}=\{\tilde{\mathbf{I}}^{j,b,k_1}=\{\tilde{\Phi}^{j,b,k_1}_c\}\in\mathscr{I}_{WEB}(\cK\otimes Q,\overline{\cK})\}_b\}_j\}_{k_1}$. Also consider a set of sets of instruments  $\{\{\tilde{\cX}^{j,k_2}=\{\tilde{\mathbf{X}}^{j,a,k_2}=\{\tilde{\Gamma}_c^{j,a,k_2}\}\in\mathscr{I}(\cK,\overline{\cK})\}_a\}_j\}_{k_2}$ where $\sum_c\tilde{\Gamma}_c^{j,a,k_2}=\tilde{\Gamma}^{j,a,k_2}$ is a quantum channel for all $j,a,k_2$ and another set of sets of sets of quantum instruments $\{\tilde{\cO}^{j,k_2}=\{\tilde{\mathbf{O}}^{j,k_2}=\{\tilde{\Delta}^{j,k_2}\}\in\mathscr{I}(\overline{\cH},\cH)\}_j\}_{k_2}$. Then the following theorem holds:
\begin{theorem}
     Let $\cI=\{\mathbf{I}^a=\{\Phi^{a}_b\}\in\mathscr{I}_{WEB}(\cH,\cK)\}$ be a set of weak entanglement-breaking instruments and $\overline{\cJ}=\{\overline{\mathbf{J}}^j=\{\overline{\Phi}^j_c\}\in\mathscr{I}(\overline{\cH},\overline{\cK})\}\}$ be a set of instruments such that
    \begin{align}
       \overline{\Phi}^j_c=\sum_{k_1}q(k_1)\sum_{a,b}&\tilde{\Phi}^{j,b,k_1}_c\circ(\Phi^{a}_b\otimes\mathbbm{I}_{Q})\circ\Phi^{\prime  j,k_1}_{a}\nonumber\\
       &+\sum_{k_2}q(k_2)\sum_{a,b}\tilde{\Gamma}_c^{j,a,k_2}\circ\Phi^a_b\circ\tilde{\Delta}_a^{j,k_2}.\label{Eq:free_operation_weak_ent_break}
    \end{align}
    with $\sum_{k_1}q(k_1)+\sum_{k_2}q(k_2)=1$. Then
    \begin{enumerate}
        \item $\overline{\cJ}$ is also a set of weak entanglement-breaking instruments. In other words, the transformation of the form given in Eq. \eqref{Eq:free_operation_weak_ent_break} can be considered as a free transformation of the resource theory of strong entanglement preservability.
        \item A given arbitrary set of weak entanglement-breaking instruments can be transformed to another given arbitrary set of weak entanglement-breaking instruments through a transformation of the form given in Eq. \eqref{Eq:free_operation_weak_ent_break}.
    \end{enumerate}
    \label{Th:free_op_weak_ent_break}
\end{theorem}
\begin{proof}
To prove $\overline{\cJ}$ to be a set of weak entanglement-breaking instruments, we have to prove that $\sum_c\overline{\Phi}^j_c~ \forall~j$ is an entanglement-breaking channel for all $j$. We know that $\sum_c\tilde{\Phi}^{j,b,k_1}_c:=\tilde{\Phi}^{j,b,k_1}$ is an entanglement-breaking channel $\forall~j,b,k_1$. Also, $\sum_c\sum_{a,b}\tilde{\Phi}^{j,b,k_1}_c\circ(\Phi^{a}_b\otimes\mathbbm{I}_{Q})\circ\Phi^{\prime j,k_1}_{a}=\sum_{a,b}\tilde{\Phi}^{j,b,k_1}\circ(\Phi^{a}_b\otimes\mathbbm{I}_{Q})\circ\Phi^{\prime j,k_1}_{a}$. We can see that each term in this sum can be considered as a composition of the entanglement-breaking channel $\tilde{\Phi}^{j,b,k_1}$ with CP trace non-increasing maps which is again an entanglement-breaking CP trace non-increasing map\cite{Ruskai_qubit_EBC}. But note that $\sum_{a,b}\tilde{\Phi}^{j,b,k_1}\circ(\Phi^{a}_b\otimes\mathbbm{I}_{Q})\circ\Phi^{\prime j,k_1}_{a}$ is also trace preserving and the sum of entanglement-breaking CP trace non-increasing maps is entanglement-breaking. Thus, $\sum_{a,b}\tilde{\Phi}^{j,b,k_1}\circ(\Phi^{a}_b\otimes\mathbbm{I}_{Q})\circ\Phi^{\prime j,k_1}_{a}$ is an entanglement-breaking quantum channel $\forall~j,k_1$. Similarly, as $\sum_b\Phi^a_b=\Phi^a$ is an entanglement-breaking channel for all $a$, $\sum_c\sum_{a,b}\tilde{\Gamma}_c^{j,a,k_2}\circ\Phi^a_b\circ\tilde{\Delta}_a^{j,k_2}=\sum_{a}\tilde{\Gamma}^{j,a,k_2}\circ\Phi^a\circ\tilde{\Delta}_a^{j,k_2}$ is also an entanglement-breaking quantum channel $\forall~j,k_2$. Now we know that a convex combination of entanglement-breaking channels is again an entanglement-breaking channel. Thus $\sum_c\overline{\Phi}^j_c$ is also an entanglement-braking channel $\forall~j$ or equivalently, $\overline{\cJ}$ is a set of weak entanglement-breaking instruments. Hence, the transformations of the form given in Eq. \eqref{Eq:free_operation_weak_ent_break} can be considered as the free transformations for resource theory of strong incompatibility preservability. 

    The next thing we have to show is that for two given arbitrary sets of weak entanglement-breaking instruments, there exists a transformation of the form given in Eq. \eqref{Eq:free_operation_weak_ent_break} that transforms one set of the given pair to the other set of the same given pair. To show this, we first consider two given sets of arbitrary weak entanglement-breaking instruments $\cI=\{\mathbf{I}^a=\{\Phi^{a}_b\}\in\mathscr{I}_{WEB}(\cH,\cK)\}$ and $\overline{\cJ}=\{\overline{\mathbf{J}}^j=\{\overline{\Phi}^j_c\}\in\mathscr{I}_{WEB}(\overline{\cH},\overline{\cK})\}$. Next, we define
     \begin{align}
        \Phi^{\prime  j,1}_{a}&=p(a)\Gamma_0\nonumber\\
        \tilde\Phi^{j,b,1}_c&=\overline{\Phi}^j_c\circ(\tr_{\cK}\otimes\mathbbm{I}_{Q}),\qquad~\forall~j,b ,
    \end{align}
    where $\sum_ap(a)=1$, $\Gamma_0:\cL(\overline{\cH})\rightarrow\cL(\cH\otimes Q)$ with $\overline{\cH}=Q$ such that for all $\sigma\in\cL(\overline{\cH}),~ \Gamma_0(\sigma)= \ket{0}\bra{0}\otimes\sigma$  and clearly, $\mathbbm{I}_{Q}=\mathbbm{I}_{\overline{\cH}}$. Then, by choosing $q(k_1)=\delta_{k_1,1}$ with $q(k_2)=0 ~\forall~k_2$, it can be easily shown that
    \begin{align}
         \sum_{k_1}q(k_1)\sum_{a,b}&\tilde{\Phi}^{j,b,k_1}_c\circ(\Phi^{a}_b\otimes\mathbbm{I}_{Q})\circ\Phi^{\prime  j,k_1}_{a}\nonumber\\
       &+\sum_{k_2}q(k_2)\sum_{a,b}\tilde{\Gamma}_c^{j,a,k_2}\circ\Phi^a_b\circ\tilde{\Delta}_a^{j,k_2}=\overline{\Phi}^j_c.
    \end{align}
    Thus, the transformation of the form given in Eq. \eqref{Eq:free_operation_weak_ent_break} transforms a given arbitrary set of weak entanglement-breaking instruments $\cI=\{\mathbf{I}^a=\{\Phi^{a}_b\}\in\mathscr{I}_{WEB}(\cH,\cK)\}$ to another given arbitrary set of weak entanglement-breaking instruments $\overline{\cJ}=\{\overline{\mathbf{J}}^j=\{\overline{\Phi}^j_c\}\in\mathscr{I}_{WEB}(\overline{\cH},\overline{\cK})\}$
\end{proof}
\begin{theorem}
    $\widehat{\cD}$ is monotonically non-increasing under the free transformations of the resource theory of strong entanglement preservability.\label{Th:Dist_monotone_SEP}
\end{theorem}

The detailed proof is provided in Appendix \ref{App:Dist_monotone_SEP}. As a result of the above theorem, using Proposition \ref{Propsi:res_meas_prop} and \ref{Propsi:res_meas_ext_prop}, we can also conclude that the distance-based resource measures in Eqs. \eqref{Eq:Def_res_meas} and \eqref{Eq:Def_res_ext_meas} are valid resource measures for the resource theory of strong entanglement preservability. These are denoted as $\mathbbm{R}_{SEP}$ and $\overline{\mathbbm{R}}_{SEP}$, respectively.

\begin{remark}
    \rm{It should be mentioned here that in Appendix \ref{App:EBC_compact} we have shown that the minimum exists in Eq. \eqref{Eq:Def_res_meas} and Eq. \eqref{Eq:Def_res_ext_meas} in the context of the resource theory of entanglement preservability. This conclusion holds for all instrument-based resource theories considered in this work.}\label{Re:EBC_compact}
\end{remark}

\subsubsection{(Weak) Incompatibility-breaking instruments and the resource theory of (strong) incompatibility  preservability}

It is well-known that the incompatibility of measurements is a necessary resource for several information-theoretic tasks, e.g., quantum state discrimination\cite{Skrzypczyk_incomp_state_disc}, quantum random access codes\cite{debasish_incom_random_acces_code, Carmeli_incomp_meas_qrac}, etc. Therefore, the ability of quantum channels (or more generally of an instrument in sequential scenarios\cite{Sasmal_2018_Steering,Mohan_2019}) to \emph{preserve incompatibility} of measurements, when it is acted on a set of measurements in the Heisenberg picture, can be considered as a resource. This \emph{motivates} us to construct the resource theory that helps us to study and quantify the ability of quantum instruments to preserve incompatibility of measurements in an elegant way. In some scenarios, we might need incompatibility preservation at least when the classical outcomes of the instrument are recorded, while some other scenarios can arise where we need incompatibility preservation even when the classical outcomes of the instrument are unknown. Therefore, again similar to the case of entanglement preservability, we need two variants of a resource theory based on the ability of quantum instruments to preserve incompatibility when the classical outcomes of the instrument are recorded or even when the classical outcomes of the instrument are unknown. We call these resource theories the resource theory of \emph{incompatibility preservability} and the resource theory of \emph{strong incompatibility preservability}, respectively.
\begin{definition}
    An instrument $\mathbf{I}=\{\Phi_a\}\in\mathscr{I}(\cH,\cK)$ with $\sum_a\Phi_a=\Phi$ 
\begin{enumerate}
   
    \item is weak incompatibility-breaking if $\Phi$ is incompatibility-breaking. The set of such instruments is denoted as $\mathscr{I}_{WIB}(\cH,\cK)$.
    \item is incompatibility-breaking if $\mathbf{I}^{\dagger}[\cM]$ is compatible for an arbitrary set $\cM\subset\mathscr{M}(\cK)$. The set of such instruments is denoted as $\mathscr{I}_{IB}(\cH,\cK)$.
\end{enumerate}
\end{definition}

From the convexity of the comaptible measurements, it can be easily seen that the set of incompatibility-breaking instruments is convex. it can Before we start exploring the two variants of the resource theories as mentioned above, we prove the following results.

\begin{proposition}
    The set of all incompatibility-breaking instruments is a subset of the set of all weak incompatibility-breaking instruments for any given input Hilbert space $\cH$ and output Hilbert space $\cK$.
    \label{Proposi:ib_in_wib}
\end{proposition}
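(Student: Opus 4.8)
The plan is to unpack both definitions and exhibit the induced channel's transformed measurements as classical marginals of the instrument's transformed measurements, then invoke the fact that compatibility survives classical post-processing.

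First I would recall that $\mathbf{I}=\{\Phi_a\}$ is incompatibility-breaking precisely when, for every finite set of measurements $\cM=\{M_1,\ldots,M_n\}$, the family $\{\mathbf{I}^{\dagger}[M_i]\}_{i=1}^n$ is compatible, where $\mathbf{I}^{\dagger}[M_i]=\{\Phi_a^{\dagger}(M_i(x))\}_{(a,x)}$ has outcome set $\Omega_{\mathbf{I}}\times\Omega_{M_i}$. On the other hand, $\mathbf{I}$ is weak incompatibility-breaking exactly when its induced channel $\Phi=\sum_a\Phi_a$ is an IBC, i.e.\ when $\{\Phi^{\dagger}(M_i)\}_{i=1}^n$ is compatible for every $\cM$, where $\Phi^{\dagger}(M_i)=\{\Phi^{\dagger}(M_i(x))\}_x$. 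The key observation is that for each $i$ and each outcome $x$ one has $\Phi^{\dagger}(M_i(x))=\sum_a\Phi_a^{\dagger}(M_i(x))$, so the POVM $\Phi^{\dagger}(M_i)$ is obtained from $\mathbf{I}^{\dagger}[M_i]$ simply by marginalizing (summing) over the instrument outcome $a$. Thus $\Phi^{\dagger}(M_i)$ is a classical post-processing of $\mathbf{I}^{\dagger}[M_i]$, the same post-processing (discarding $a$) for every $i$.

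With this in hand, I would fix an arbitrary $\cM$, use the hypothesis $\mathbf{I}\in\mathscr{I}_{IB}(\cH,\cK)$ to obtain a joint measurement $G$ of $\{\mathbf{I}^{\dagger}[M_i]\}_{i}$ on the product outcome set $\prod_i(\Omega_{\mathbf{I}}\times\Omega_{M_i})$, and then define $G'$ on $\prod_i\Omega_{M_i}$ by summing $G$ over all the instrument-outcome labels $a_1,\ldots,a_n$. A short computation then shows that the $i$-th marginal of $G'$ equals $\{\Phi^{\dagger}(M_i(x))\}_x$ and that $G'$ is a valid POVM: positivity is inherited term-by-term, and normalization follows because summing $G'$ over all $x_i$ is the same as summing $G$ over its entire outcome set, which gives $\mathbbm{1}_{\cH}$. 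Hence $\{\Phi^{\dagger}(M_i)\}_i$ is compatible; since $\cM$ was arbitrary, $\Phi$ is an IBC and $\mathbf{I}\in\mathscr{I}_{WIB}(\cH,\cK)$, yielding $\mathscr{I}_{IB}(\cH,\cK)\subseteq\mathscr{I}_{WIB}(\cH,\cK)$.

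The step I expect to require the most care is the bookkeeping in the marginalization argument---matching the outcome sets and verifying that summing $G$ over the $a$-labels genuinely reproduces the marginals $\Phi^{\dagger}(M_i)$ rather than some coarser object. Everything else is essentially the standard fact that compatibility is stable under independent, per-measurement classical post-processing, of which marginalization is a special case; if one is willing to cite that fact directly, the explicit construction of $G'$ can be omitted.
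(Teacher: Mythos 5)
Your proof is correct and takes essentially the same route as the paper's: the paper's entire proof is the one-sentence assertion that the induced channel $\Phi=\sum_a\Phi_a$ of an incompatibility-breaking instrument is itself an incompatibility-breaking channel, which is exactly the fact you establish. The only difference is that you explicitly justify this assertion via the marginalization argument (constructing $G'$ by summing the joint measurement $G$ over the instrument-outcome labels), a detail the paper leaves implicit.
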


\begin{proof}
    As the induced channel of an incompatibility-breaking quantum instrument is incompatibility-breaking, an incompatibility-breaking quantum instrument is also weak incompatibility-breaking. Therefore, the set of all incompatibility-breaking quantum instruments is a subset of the set of all weak incompatibility-breaking instruments. Hence, $\mathscr{I}_{IB}(\cH, \cK)\subseteq\mathscr{I}_{WIB}(\cH,\cK)$.
\end{proof} 
\begin{example} \rm{ From Proposition \ref{Proposi:ib_in_wib}}, we know that the set of all incompatibility-breaking instruments is a subset of the set of weak incompatibility-breaking instruments for any given input Hilbert space $\cH$ and output Hilbert space $\cK$. Now, consider the same four-outcome qubit instrument $\mathbf{I}=\{\Phi_a\}$ that has been used in the Example \ref{Examp:eb_strict_in_web}. Now, as $\Phi=\sum^4_{a=1}\Phi_a$ is entanglement-breaking, it is also incompatibility-breaking. Now, consider two measurements $A=\{A(1)=\ket{0}\bra{0}, A(2)=\ket{1}\bra{1}\}$ and $B=\{B(1)=\ket{+}\bra{+}, B(2)=\ket{-}\bra{-}\}$. Clearly, the pair $(A,B)$ is incompatible. Then
    \begin{align}
        \mathbf{I}^{\dagger}[A](1,1)=\frac{1}{2}\ket{0}\bra{0} ;~\mathbf{I}^{\dagger}[B](1,1)=\frac{1}{2}\ket{+}\bra{+},\nonumber\\
        \mathbf{I}^{\dagger}[A](2,1)=\frac{1}{6}\ket{1}\bra{1} ;~\mathbf{I}^{\dagger}[B](2,1)=\frac{1}{6}\ket{+}\bra{+},\nonumber\\
        \mathbf{I}^{\dagger}[A](3,1)=\frac{1}{6}\ket{1}\bra{1} ;~\mathbf{I}^{\dagger}[B](3,1)=\frac{1}{6}\ket{-}\bra{-},\nonumber\\
        \mathbf{I}^{\dagger}[A](4,1)=\frac{1}{6}\ket{0}\bra{0} ;~\mathbf{I}^{\dagger}[B](4,1)=\frac{1}{6}\ket{-}\bra{-},\nonumber
        \end{align}
        \begin{align}
        \mathbf{I}^{\dagger}[A](1,2)=\frac{1}{2}\ket{1}\bra{1} ;~\mathbf{I}^{\dagger}[B](1,2)=\frac{1}{2}\ket{-}\bra{-},\nonumber\\
        \mathbf{I}^{\dagger}[A](2,2)=\frac{1}{6}\ket{0}\bra{0} ;~\mathbf{I}^{\dagger}[B](2,2)=\frac{1}{6}\ket{-}\bra{-},\nonumber\\
        \mathbf{I}^{\dagger}[A](3,2)=\frac{1}{6}\ket{0}\bra{0} ;~\mathbf{I}^{\dagger}[B](3,2)=\frac{1}{6}\ket{+}\bra{+},\nonumber\\
        \mathbf{I}^{\dagger}[A](4,2)=\frac{1}{6}\ket{1}\bra{1} ;~\mathbf{I}^{\dagger}[B](4,2)=\frac{1}{6}\ket{+}\bra{+}.\nonumber\\
    \end{align}
    Note that if a pair of measurements is compatible, then all its post-processings are compatible. Therefore, if we can show that there exists a post-processing of the pair $(\mathbf{I}^{\dagger}[A],\mathbf{I}^{\dagger}[B])$ is incompatible then the pair $(\mathbf{I}^{\dagger}[A],\mathbf{I}^{\dagger}[B])$ is incompatible. Now, consider the post-processing
    
    \begin{align}
        M(z)=\sum_{x,y}\nu_{zxy}\mathbf{I}^{\dagger}[A](x,y);~N(z)=\sum_{x,y}\nu_{zxy}\mathbf{I}^{\dagger}[B](x,y),
    \end{align}
    where $\nu_{111}=\nu_{141}=\nu_{122}=\nu_{132}=\nu_{221}=\nu_{231}=\nu_{212}=\nu_{242}=1$ and all $\nu_{zxy}$s are zero. Clearly, it is a valid post-processing as $\sum_{z}\nu_{zxy}=1$. Note that 

    \begin{align}
        M&=\{N(1)=\ket{0}\bra{0},M(2)=\ket{1},\bra{1}\}\nonumber\\
        N&=\{N(1)=\frac{2}{3}\ket{+}\bra{+}+\frac{1}{3}\ket{-}\bra{-},\nonumber\\
        &\qquad \qquad \qquad \qquad  N(2)=\frac{1}{3}\ket{+}\bra{+}+\frac{2}{3}\ket{-}\bra{-}\}.
    \end{align}
    Note that $M$ is a PVM and $M$ does not commute with $N$. Therefore, the pair $(M,N)$ is incompatible\cite{Heinosaari_incomp_review}. Hence, the pair $(\mathbf{I}^{\dagger}[A],\mathbf{I}^{\dagger}[B])$ is incompatible. Therefore, the instrument $\mathbf{I}$ is not incompatibility-breaking. Hence, the set of all qubit incompatibility-breaking instruments is a \emph{strict subset} of the set of all qubit weak incompatibility-breaking instruments.\label{Examp:ib_strict_in_wib} 
\end{example}

\begin{proposition}
    The set of all qubit incompatibility-breaking instruments is not a subset of the set of weak entanglement-breaking instruments.\label{Proposi:ib_notin_web}
\end{proposition}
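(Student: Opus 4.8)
The plan is to exhibit a single explicit qubit instrument lying in $\mathscr{I}_{IB}(\cH,\cK)$ but outside $\mathscr{I}_{WEB}(\cH,\cK)$. The cleanest route is to use a \emph{one-outcome} instrument, because for such an instrument both properties of interest collapse onto well-understood channel properties, allowing us to import the known strict separation between incompatibility-breaking and entanglement-breaking channels.

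First I would record the reduction at the level of one-outcome instruments. If $\mathbf{I}=\{\Phi\}$ has a single classical outcome, then the induced-measurement map simplifies to $\mathbf{I}^{\dagger}[M]=\{\Phi^{\dagger}(M(b))\}_b=\Phi^{\dagger}(M)$, since the instrument carries no classical label beyond the action of the channel. Consequently $\mathbf{I}^{\dagger}[\cM]$ is compatible for every set $\cM$ exactly when $\{\Phi^{\dagger}(M_i)\}_i$ is compatible for every $\{M_i\}$, which is precisely the definition of $\Phi$ being an incompatibility-breaking channel. In the same way, $\mathbf{I}$ is weak entanglement-breaking iff its induced channel $\Phi$ is entanglement-breaking. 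Thus, restricted to one-outcome instruments, the inclusion $\mathscr{I}_{IB}\subseteq\mathscr{I}_{WEB}$ would force the channel-level inclusion $\cC^{IBC}_{\cH}\subseteq\cC^{EBC}_{\cH}$.

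The key step is then to contradict this by invoking the strict hierarchy $\cC^{EBC}_{\cH}\subset\cC^{IBC}_{\cH}$ recalled in the preliminaries, made concrete by the qubit depolarising channel. Choosing $\Phi=\Gamma^t_2$ with $\frac{1}{3}<t\leq\frac{5}{12}$, the stated thresholds give that $\Gamma^t_2$ is incompatibility-breaking (as $t\leq\frac{5}{12}$) but not entanglement-breaking (as $t>\frac{1}{3}$). Hence the one-outcome instrument $\mathbf{I}=\{\Gamma^t_2\}$ is incompatibility-breaking while not weak entanglement-breaking, which establishes $\mathscr{I}_{IB}(\cH,\cK)\not\subseteq\mathscr{I}_{WEB}(\cH,\cK)$ already for qubits.

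I do not anticipate a genuine obstacle; the single point that deserves care is the equivalence used in the first step, namely that for a \emph{one-outcome} instrument the incompatibility-breaking property is truly equivalent to the induced channel being IBC, and not merely stronger (as it can be for multi-outcome instruments, whose recorded classical labels refine the transformed measurements). Once this equivalence is justified, the separation follows immediately from the depolarising-channel thresholds already quoted in the text.
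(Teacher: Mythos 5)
Your proof is correct and follows essentially the same route as the paper: the paper's own proof exhibits the one-outcome qubit instrument given by the depolarising channel $\Lambda(\rho)=\frac{5}{12}\rho+\frac{7}{12}\frac{\Id_{2\times 2}}{2}$ (i.e., your $\Gamma^t_2$ at the endpoint $t=\frac{5}{12}$), noting exactly as you do that for one-outcome instruments weak entanglement-breaking reduces to the channel being entanglement-breaking, while $t\leq\frac{5}{12}$ guarantees incompatibility-breaking. Your additional care in spelling out why the instrument-level incompatibility-breaking property reduces to the channel-level IBC property for one-outcome instruments is a worthwhile elaboration of what the paper states only parenthetically.
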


\begin{proof}
Consider a one-outcome quantum instrument (i.e., a quantum channel)
    
    \begin{align}
        \Lambda(\rho)=\frac{5}{12}\rho+ \frac{7}{12}\frac{\Id_{2\times 2}}{2}.\label{Eq:IB_not_WEB}
    \end{align}
    It is known that it is incompatibility-breaking, but not weak entanglement-breaking\cite{Heinosaari_incomp_break_chan} (for one-outcome instruments, the notion of weak entanglement-breaking is the same as the notion of entanglement-breaking).
\end{proof}

\begin{proposition}
    The set of all qubit weak entanglement-breaking instruments is not a subset of the set of incompatibility-breaking instruments.\label{Proposi:web_notin_ib}
\end{proposition}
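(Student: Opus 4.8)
The plan is to produce a single instrument that witnesses the non-inclusion, reusing the four-outcome qubit instrument $\mathbf{I}=\{\Phi_a\}$ already introduced in Example \ref{Examp:eb_strict_in_web}, namely $\Phi_1(\rho)=\frac{1}{2}\rho$, $\Phi_2(\rho)=\frac{1}{6}\sigma_x\rho\sigma_x$, $\Phi_3(\rho)=\frac{1}{6}\sigma_y\rho\sigma_y$, $\Phi_4(\rho)=\frac{1}{6}\sigma_z\rho\sigma_z$, acting on the qubit Hilbert space $\cH^{\mathbf{Q}}$. I would argue that this single $\mathbf{I}$ lies in $\mathscr{I}_{WEB}(\cH^{\mathbf{Q}},\cH^{\mathbf{Q}})$ but not in $\mathscr{I}_{IB}(\cH^{\mathbf{Q}},\cH^{\mathbf{Q}})$, which immediately yields $\mathscr{I}_{WEB}(\cH^{\mathbf{Q}},\cH^{\mathbf{Q}})\not\subseteq\mathscr{I}_{IB}(\cH^{\mathbf{Q}},\cH^{\mathbf{Q}})$. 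The argument splits into two independent halves.

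First I would confirm the weak entanglement-breaking property, which by definition only constrains the induced channel $\Phi=\sum_a\Phi_a$. A direct computation, already recorded in Example \ref{Examp:eb_strict_in_web}, gives $\Phi(\rho)=\frac{1}{3}\rho+\frac{2}{3}\frac{\Id_{2\times 2}}{2}$, i.e.\ the qubit depolarising channel $\Gamma_2^{t}$ with $t=\frac{1}{3}$. Since $t=\frac{1}{3}\le\frac{1}{1+d}=\frac{1}{3}$ for $d=2$, this channel is entanglement-breaking, and hence $\mathbf{I}$ is weak entanglement-breaking straight from the definition.

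The crux is the second half: showing $\mathbf{I}\notin\mathscr{I}_{IB}(\cH^{\mathbf{Q}},\cH^{\mathbf{Q}})$, which requires exhibiting an incompatible pair of measurements whose images under $\mathbf{I}^{\dagger}$ remain incompatible. Taking the non-commuting projective qubit measurements $A=\{\ket{0}\bra{0},\ket{1}\bra{1}\}$ and $B=\{\ket{+}\bra{+},\ket{-}\bra{-}\}$, I would compute the eight transformed operators $\mathbf{I}^{\dagger}[A](x,y)$ and $\mathbf{I}^{\dagger}[B](x,y)$ exactly as tabulated in Example \ref{Examp:ib_strict_in_wib}. Incompatibility cannot be read off these eight-outcome POVMs directly, so the key step is to invoke monotonicity of compatibility under classical post-processing (compatibility is preserved under post-processing, so an incompatible post-processing forces the original pair to be incompatible) and coarse-grain to a two-outcome pair $(M,N)$ via the relabelling given there, obtaining a projective $M$ that fails to commute with $N$. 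Since a projective measurement is compatible with another only when they commute, the pair $(M,N)$ is incompatible, hence so is $(\mathbf{I}^{\dagger}[A],\mathbf{I}^{\dagger}[B])$, and $\mathbf{I}$ is therefore not incompatibility-breaking.

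Combining the two halves shows $\mathbf{I}$ is the desired witness. The only genuinely nontrivial part is the incompatibility verification of the second half; the first half is a definitional check against the EBC threshold for the depolarising channel. It is worth stressing that a one-outcome (channel) counterexample is impossible here, because $\cC^{EBC}_{\cH}\subset\cC^{IBC}_{\cH}$ forces every entanglement-breaking channel to be incompatibility-breaking; it is precisely the genuinely multi-outcome structure of $\mathbf{I}$, whose individual components $\Phi_a$ are far from measure-and-prepare even though their sum is entanglement-breaking, that makes the separation possible.
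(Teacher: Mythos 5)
Your proposal is correct and matches the paper's own proof, which likewise takes the four-outcome qubit instrument of Example \ref{Examp:eb_strict_in_web} as the witness, citing that example for the weak entanglement-breaking property (the induced channel being the $t=\tfrac{1}{3}$ depolarising channel, at the EBC threshold) and Example \ref{Examp:ib_strict_in_wib} for the failure of incompatibility-breaking via the post-processed incompatible pair $(M,N)$. Your closing observation that no one-outcome counterexample can exist because $\cC^{EBC}_{\cH}\subset\cC^{IBC}_{\cH}$ is a correct and worthwhile addition, but it does not change the fact that the argument is essentially identical to the paper's.
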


\begin{proof}
    Consider the same four-outcome instrument $\mathbf{I}=\{\Phi_a\}$ that has been used in the proof of example \ref{Examp:eb_strict_in_web}. From the Example \ref{Examp:eb_strict_in_web} and Proposition \ref{Examp:ib_strict_in_wib}, please note that $\mathbf{I}$ weak entanglement-breaking instrument. But $\mathbf{I}$ is not an incompatiblity-breaking instrument.
\end{proof}

\begin{proposition}
    The set of all entanglement-breaking instruments is a subset of the set of all incompatibility-breaking instruments for any given input Hilbert space $\cH$ and output Hilbert space $\cK$.\label{Proposi:eb_in_ib}
\end{proposition}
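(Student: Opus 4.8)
The plan is to exploit the measure-and-prepare (Holevo) form of an entanglement-breaking CP map to show that, after the action of the instrument's dual, \emph{every} measurement becomes a classical post-processing of one and the same POVM. Since all post-processings of a fixed POVM are automatically jointly measurable, incompatibility-breaking then follows immediately. Concretely, let $\mathbf{I}=\{\Phi_a\}\in\mathscr{I}_{EB}(\cH,\cK)$, so each $\Phi_a$ is an entanglement-breaking CP map. Using the structure quoted in the preliminaries I would write $\Phi_a(\rho)=\sum_k\sigma_{a,k}\,\tr[\rho\,F_{a,k}]$ with states $\sigma_{a,k}\in\cS(\cK)$ and operators $F_{a,k}\geq 0$. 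Because $\Phi=\sum_a\Phi_a$ is trace-preserving, evaluating $\tr[\Phi(\rho)]=\tr[\rho]$ for all $\rho$ forces $\sum_{a,k}F_{a,k}=\Id_{\cH}$, so $\mathbf{F}:=\{F_{a,k}\}_{a,k}$ is a genuine POVM on $\cH$. Dualizing gives $\Phi_a^{\dagger}(X)=\sum_k F_{a,k}\,\tr[\sigma_{a,k}X]$.

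The key observation comes next. For an arbitrary measurement $M_i=\{M_i(x)\}\in\mathscr{M}(\cK)$, the transformed measurement has elements
\[
\mathbf{I}^{\dagger}[M_i](a,x)=\Phi_a^{\dagger}(M_i(x))=\sum_k F_{a,k}\,p_i(x\mid a,k),
\]
where $p_i(x\mid a,k):=\tr[\sigma_{a,k}M_i(x)]$ is, for fixed $i,a,k$, a probability distribution over $x$ (it sums to $\tr[\sigma_{a,k}\Id_{\cK}]=1$ since $\sigma_{a,k}$ is a state). This is exactly a classical post-processing of the \emph{fixed} POVM $\mathbf{F}$ via the stochastic map sending outcome $(a,k)$ to outcome $(a',x)$ with weight $\delta_{a',a}\,p_i(x\mid a,k)$. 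The crucial point is that the \emph{same} common POVM $\mathbf{F}$ works for every $i$: only the classical relabelling depends on $M_i$, and the instrument outcome $a$ is simply copied from the block of $\mathbf{F}$ that fired.

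Finally I would assemble the joint measurement. Writing $\mathbf{F}=\{F_\ell\}$ with a single index $\ell=(a,k)$, and denoting the post-processing of $M_i$ by the stochastic matrix $\nu^{(i)}_{o_i\mid\ell}$ (where $o_i$ ranges over the outcomes of $\mathbf{I}^{\dagger}[M_i]$), I define
\[
G(o_1,\ldots,o_n)=\sum_\ell\Big(\prod_{i=1}^n\nu^{(i)}_{o_i\mid\ell}\Big)F_\ell.
\]
Each term is positive semidefinite, and summing over all outcomes yields $\sum_\ell F_\ell=\Id_{\cH}$, so $G$ is a valid measurement; marginalizing over every index except the $j$-th recovers $\sum_\ell\nu^{(j)}_{o_j\mid\ell}F_\ell=\mathbf{I}^{\dagger}[M_j](o_j)$, so $G$ is a joint measurement of $\{\mathbf{I}^{\dagger}[M_i]\}_{i=1}^n$. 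As $\{M_i\}$ was arbitrary, $\mathbf{I}$ is incompatibility-breaking, giving $\mathscr{I}_{EB}(\cH,\cK)\subseteq\mathscr{I}_{IB}(\cH,\cK)$. There is no deep obstacle here; the only subtle step is the conceptual one of decomposing the \emph{individual} CP components $\Phi_a$ (rather than the induced channel $\Phi$) so that the $\{F_{a,k}\}$ genuinely form a single common POVM carrying the instrument outcome $a$ — it is precisely this that collapses joint measurability of the whole family $\{\mathbf{I}^{\dagger}[M_i]\}$ to the trivial compatibility of post-processings of one observable.
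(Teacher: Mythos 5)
Your proof is correct and takes essentially the same route as the paper's: both decompose each entanglement-breaking CP map $\Phi_a$ in measure-and-prepare form and assemble the joint observable as products of response probabilities $\tr[\sigma\, M(m)]$ attached to the common POVM elements, so your $G(o_1,\ldots,o_n)=\sum_\ell\bigl(\prod_i\nu^{(i)}_{o_i\mid\ell}\bigr)F_\ell$ coincides (after collapsing the delta functions on the instrument outcome) with the paper's $G(x,m_1,\ldots,m_n)=\sum_i\prod_y\tr[\sigma^x_i M^y(m_y)]A^x_i$. Your framing via post-processings of one fixed parent POVM is only a conceptual repackaging, though you are slightly more careful than the paper in deriving $\sum_{a,k}F_{a,k}=\Id_{\cH}$ from trace preservation of $\Phi=\sum_a\Phi_a$, a normalization the paper uses implicitly.
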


\begin{proof}
    Consider an entanglement-breaking instrument $\mathbf{I}=\{\Phi_x\}\in\mathscr{I}_{EB}(\cH,\cK)$. Then, each $\Phi_x$ acting on an arbitrary density matrix $\rho\in\cL(\cH)$ has the form \cite{Horodecki_gen_EBC}
    \begin{align}
        \Phi_x(\rho)=\sum_i\tr[A(x,i)\rho]\sigma^x_i,
    \end{align}
    where $\sigma^x_i\in\cL(\cK)$ is a valid density matrix $\forall~x,i$ and $A(x,i)\geq0,~\forall~x,i$ with $\sum_{i,x}A(x,i)=\mathbbm{1}_\cH$. Next, consider a set of arbitrary measurements $\cM=\{M_1, M_2, M_3,\ldots, M_n\}$ where $n$ is arbitrary and $M_y=\{M_y(m_y)\}_{m_y}$ for $y=1,2,\ldots,n$. Here $m_y$ denotes the outcome for the measurement $M_y~\forall y$. Under the action of the instrument $\mathbf{I}$ in the Heisenberg picture, the given set of measurements is transformed as $\{\mathbf{I}^{\dagger}(M_1), \mathbf{I}^{\dagger}(M_2), \ldots, \mathbf{I}^{\dagger}(M_n)\}$ with 
    \begin{align}
        \mathbf{I}^{\dagger}(M_y)=\{\sum_i\tr[\sigma^{x_y}_iM_y(m_y)]A(x_y,i)\}_{(x_y,m_y)\in\Omega_{\mathbf{I}}\times\Omega_{M_y}}~\forall~y.
    \end{align}
    Let $g_y=(x_y,m_y)~\forall y$ denotes the outcome of the transformed measurement for every y. Consider the matrix $G(g_1,\ldots,g_n)$ defined as
    \begin{align}
        G(g_1&,\ldots,g_n):=
        &\begin{cases}
        \sum_i\prod_y\tr[\sigma^{x}_iM_y(m_y)]A(x,i),  ~\text{if } x_y=x,~\forall~y,\\
        0,\qquad\qquad\qquad\qquad\qquad \text{otherwise}.
        \end{cases}\label{Eq:G_join_meas_EB_IB}
    \end{align}
    Then from Eq. \eqref{Eq:G_join_meas_EB_IB}, it can be easily shown that each $G(g_1,\ldots,g_n)\geq 0$ and $\sum_{g_1,\ldots,g_n}G(g_1,\ldots,g_n)=\Id_{\cH}$. Hence, $G=\{G(g_1,g_2,g_3,\ldots,g_n)\}$ is valid measurement.

    It can be shown easily then
    \begin{align}
       \mathbf{I}^{\dagger}(M_y)(g_y)=&\sum_{\{g_1,g_2,g_3,\ldots,g_n\}\setminus g_y}  G(g_1,g_2,g_3,\ldots,g_n)\nonumber\\
       =&\sum_{\{m_1,m_2,\ldots,m_n\}\setminus m_y}  G(x_y,m_1,m_2,\ldots,m_n)
    \end{align}
    where, $G(x_y,m_1,m_2,\ldots,m_n):=\sum_i\prod_y\tr[\sigma^{x}_iM_y(m_y)]A(x,i)$ with $x_y=x~\forall y$. Hence, $G$ is joint measurement of the set $\{\mathbf{I}^{\dagger}(M_1),\mathbf{I}^{\dagger}(M_2),\ldots,\mathbf{I}^{\dagger}(M_n)\}$ and therefore, the set of measurements $\{\mathbf{I}^{\dagger}(M_1),\mathbf{I}^{\dagger}(M_2),\ldots,\mathbf{I}^{\dagger}(M_n)\}$ is compatible. Thus, $\mathbf{I}$ is an incompatibility-breaking instrument. Hence, $\mathscr{I}_{EB}(\cH,\cK)\subseteq\mathscr{I}_{IB}(\cH,\cK)$.
\end{proof} 
\begin{remark}
    \rm{From Eq. \eqref{Eq:IB_not_WEB}, we know that one-outcome qubit quantum instrument $\Lambda$ is incompatibility-breaking but not entanglement-breaking. Thus, we can conclude that the set of all qubit entanglement-breaking instruments is a strict subset of the set of all incompatibility-breaking instruments. Also, as we know, entanglement-breaking channels are a strict subset of incompatibility-breaking channels \cite{Heinosaari_incomp_break_chan}, thus it follows that $\mathscr{I}_{WEB}(\cH,\cK)\subseteq\mathscr{I}_{WIB}(\cH,\cK)$, although incompatibility-breaking properties of quantum instruments have not been studied in Ref. \cite{Heinosaari_incomp_break_chan} as its main focus was quantum channels. For clarity on the subset relations among the discussed class of instruments, we refer the reader to Fig. \ref{fig_venn}.}
\end{remark}

\begin{figure}[!h]
    \centering
    \includegraphics[height=150px, width =236px]{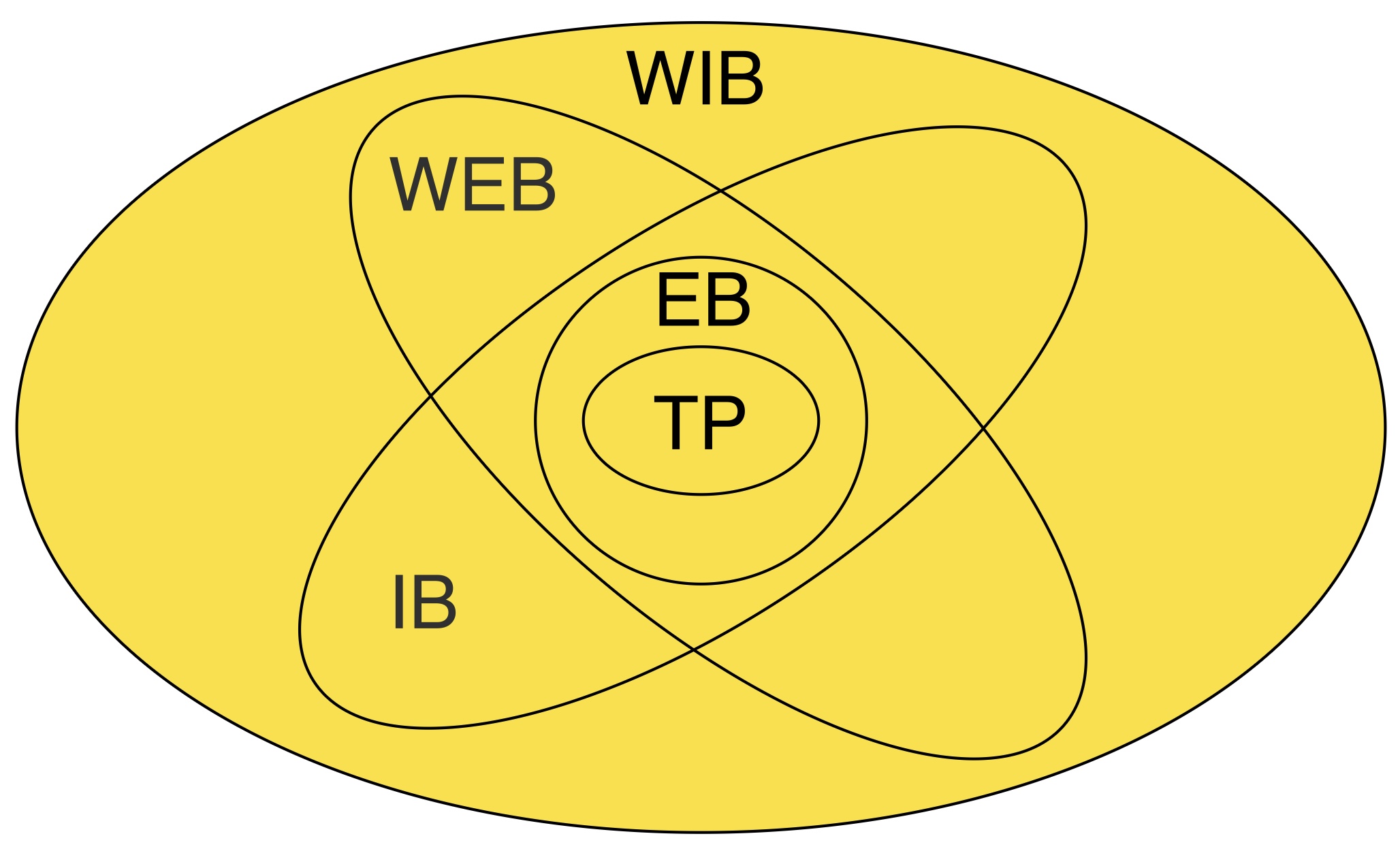}
     \caption{This Venn diagram qualitatively shows the hierarchies (subset relations) among different classes of instruments. More specifically, from the discussion till now, we have $\mathscr{I}_{TP}(\cH, \cK)\subseteq\mathscr{I}_{EB}(\cH,\cK)\subseteq \mathscr{I}_{WEB}(\cH,\cK)\subseteq\mathscr{I}_{WIB}(\cH,\cK) $ and $\mathscr{I}_{TP}(\cH, \cK)\subseteq\mathscr{I}_{EB}(\cH,\cK)\subseteq \mathscr{I}_{IB}(\cH,\cK)\subseteq\mathscr{I}_{WIB}(\cH,\cK) $ for arbitrary $\cH$ and $\cK$.}\label{fig_venn} 
\end{figure}

In the resource theory of incompatibility preservability, the sets of incompatibility-breaking instruments are the free objects that destroy the incompatibility of any sets of measurements, even if the classical outcomes of the instrument are recorded. We construct the free transformations as follows.

Let us consider a set of sets of instruments $\{\cJ^{\prime k_2}=\{\mathbf{J}^{\prime j,k_2}=\{\Phi^{\prime j,k_2}_{a}\}\in\mathscr{I}(\overline{\cH},\cH\otimes Q)\}_j\}_{k_2}$ and a set of sets of sets of intruments $\{\{\tilde{\cI}^{j,k_1}=\{\tilde{\mathbf{I}}^{j,b,k_1}=\{\tilde{\Phi}^{j,b,k_1}_c\}\in\mathscr{I}(\cK\otimes Q^{\prime},\overline{\cK})\}_b\}_j\}_{k_1}$. Let us also consider a set of sets of incompatibility-breaking channels $\{\{\Phi^{\prime * j,k_1}\}_j\}_{k_1}$, where $\Phi^{\prime * j,k_1}:\cL(\overline{\cH})\rightarrow\cL(\cH\otimes Q^{\prime})$ and a set of sets of sets of incompatibility-breaking instruments $\{\{\tilde{\cI}^{*j,k_2}=\{\tilde{\mathbf{I}}^{*j,b,k_2}=\{\tilde{\Phi}^{*j,b,k_2}_c\}\in\mathscr{I}_{IB}(\cK\otimes Q,\overline{\cK})\}_b\}_j\}_{k_2}$. Then the following results hold.

\begin{theorem}
     Let $\cI=\{\mathbf{I}^a=\{\Phi^{a}_b\}\in\mathscr{I}_{IB}(\cH,\cK)\}$ be a set of incompatibility-breaking instruments and $\overline{\cJ}=\{\overline{\mathbf{J}}^j=\{\overline{\Phi}^j_c\}\in\mathscr{I}(\overline{\cH},\overline{\cK})\}$ be a set of instruments such that
    \begin{align}
      \overline{\Phi}^j_c=&\sum_{k_1}q(k_1)\sum_{a,b}p(a|j)\tilde{\Phi}^{j,b,k_1}_c\circ(\Phi^{a}_b\otimes\mathbbm{I}_{Q^{\prime}})\circ\Phi^{\prime * j,k_1}\nonumber\\
       &+\sum_{k_2}q(k_2)\sum_{a,b}\tilde{\Phi}^{*j,b,k_2}_c\circ(\Phi^{a}_b\otimes\mathbbm{I}_{Q})\circ\Phi^{\prime j,k_2}_{a}\nonumber\\
       &+\sum_{k_3}q(k_3)\sum_{a,b}\tilde{\Gamma}^{ j,a,k_3}_c\circ\Phi^{a}_b\circ\tilde{\Delta}^{ j,k_3}_a,\label{Eq:free_operation_incom_break}
    \end{align}
     where $\sum_a p(a|j)=1,~\forall j$ with $\sum_{k_1}q(k_1)+\sum_{k_2}q(k_2)+\sum_{k_3}q(k_3)=1$. Here, $\{\{\tilde{\cX}^{j,k_3}=\tilde{\mathbf{X}}^{j,a,k_3}=\{\tilde{\Gamma}^{j,a,k_3}_c\in\mathscr{I}(\cK,\overline{\cK})\}_b\}_j\}_{k_3}$ is an arbitrary set of sets of sets of quantum instruments and $\{\tilde{\cO}^{k_3}=\tilde{\mathbf{O}}^{j,k_3}=\{\tilde{\Delta}_a^{j,k_3}\in\mathscr{I}(\overline{\cH},\cH)\}_j\}_{k_3}$ is another set of sets of quantum instruments. Then
    \begin{enumerate}
        \item $\overline{\cJ}$ is also a set of incompatibility-breaking instruments. In other words, the transformation of the form given in Eq. \eqref{Eq:free_operation_incom_break} can be considered as a free transformation of the resource theory of incompatibility preservability.
        \item A given arbitrary set of incompatibility-breaking instruments can be transformed to another given arbitrary set of incompatibility-breaking instruments through a transformation of the form given in Eq. \eqref{Eq:free_operation_incom_break}.
    \end{enumerate}
      \label{Th:free_op_incom_break}
\end{theorem}
\begin{proof}
     Consider a set of arbitrary measurements $\cM=\{M_1, M_2,\ldots, M_n\}\subset\mathscr{M}(\cK)$ where $n$ is arbitrary and $M_y=\{M_y(m_y)\}_{m_y}$ for $y=1,2,\ldots,n$. Here, $m_y$ denotes the outcome for the measurement $M_y~\forall y$. The action of the instrument $\mathbf{E}^{j,k_1}:=\{\sum_{a,b}p(a|j)\tilde{\Phi}^{j,b,k_1}_c\circ(\Phi^{a}_b\otimes\mathbbm{I}_{Q^{\prime}})\circ\Phi^{\prime * j,k_1}_{a}\}$ in the first term, on this set of measurements in the Heisenberg picture is given by
    \begin{align}
    \mathbf{E}^{j,k_1}(\cM)=&(\sum_{a,b}p(a|j)\tilde{\Phi}^{j,b,k_1}_c\circ(\Phi^{a}_b\otimes\mathbbm{I}_{Q^{\prime}})\circ\Phi^{\prime * j,k_1})^{\dagger}(\cM)\nonumber\\
    =&\sum_{a,b}p(a|j)(\Phi^{\prime * j,k_1})^{\dagger}\circ(\Phi^{a}_b\otimes\mathbbm{I}_{Q^{\prime}})^{\dagger}\circ(\tilde{\Phi}^{j,b,k_1}_c)^{\dagger}(\cM),\nonumber\\
    =&(\Phi^{\prime * j,k_1})^{\dagger}\circ(\Lambda^{j,k_1}_c)^{\dagger}(\cM),\label{Eq:Inc_Br_Inst_First}
    \end{align}
    where $(\Lambda^{j,k_1}_c)^{\dagger}=\sum_{a,b}p(a|j)(\Phi^{a}_b\otimes\mathbbm{I}_{Q^{\prime}})^{\dagger}\circ(\tilde{\Phi}^{j,b,k_1}_c)^{\dagger}$. We denote the transformed set of measurements as $\tilde{\cM}^{j,k_1}=\{\tilde{M}_1^{j,k_1},\tilde{M}_2^{j,k_1},\ldots,\tilde{M}_n^{j,k_1}\}$ with $\tilde{M}_y^{j,k_1}=\{\tilde{M}_y^{j,k_1}(c_y,m_y
    )\}$ where $\tilde{M}_y^{j,k_1}(c_y,m_y
    ):=(\Lambda^{j,k_1}_c)^{\dagger}(M_y(m_y))~\forall~a,j,k_1$ with $c_y=c$ and $y=1,2,\ldots,n$, is also a set of sets of measurements. Now as $\Phi^{'*j,k_1}$ is an incompatibility-breaking channel for all $j,k_1$, from Eq. \eqref{Eq:Inc_Br_Inst_First} the set of transformed measurements $\mathbf{E}^{j,k_1}(\cM)$ is compatible for all $j,k_1$. Thus the instrument
    $\mathbf{E}^{j,k_1}$ 
    is an incompatibility-breaking instrument for all $j,k_1$.
    
    Again, we know that $\tilde{\cI}^{*j,k_1}=\{\tilde{\mathbf{I}}^{*j,b,k_1}=\{\tilde{\Phi}^{*j,b,k_1}_c\}\}$ is also a set of sets of incompatibility-breaking instruments .Consider a set of arbitrary $\cM=\{M_1, M_2,\ldots, M_n\}$ where $n$ is arbitrary and $M_y=\{M_y(m_y)\}_{m_y}$ for $y=1,2,\ldots,n$. Here $m_y$ denotes the outcome for the measurement $M_y~\forall y$. The action of the instrument in second term, denoted as $\mathbf{F}^{j,k_2}:=\{\sum_{a,b}\tilde{\Phi}^{*j,b,k_2}_c\circ(\Phi^{a}_b\otimes\mathbbm{I}_{Q})\circ\Phi^{\prime  j,k_2}_{a}\}$, on this set of measurements in the Heisenberg picture is given by
    \begin{align}
    \mathbf{F}^{j,k_2}(\cM)=&(\sum_{a,b}\tilde{\Phi}^{*j,b,k_2}_c\circ(\Phi^{a}_b\otimes\mathbbm{I}_{Q})\circ\Phi^{\prime j,k_2}_{a})^{\dagger}(\cM)\nonumber\\
    =&\sum_{a,b}(\Phi^{\prime j,k_2}_{a})^{\dagger}\circ(\Phi^{a}_b\otimes\mathbbm{I}_{Q})^{\dagger}\circ(\tilde{\Phi}^{*j,b,k_2}_c)^{\dagger}(\cM),\nonumber\\
    =&\sum_{b}(\Lambda^{j,k_2}_b)^{\dagger}\circ(\tilde{\Phi}^{*j,b,k_2}_c)^{\dagger}(\cM),
    \end{align}
    where $(\Lambda^{j,k_2}_b)^{\dagger}=\sum_a (\Phi^{\prime j,k_2}_{a})^{\dagger}\circ(\Phi^{a}_b\otimes\mathbbm{I}_{Q})^{\dagger}$. We denote the transformed set of measurements as $\tilde{\cM}^{j,k_2}=\{\tilde{M}^{j,k_2}_1,\tilde{M}^{j,k_2}_2,\ldots,\tilde{M}^{j,k_2}_n\}$ with $\tilde{M}^{j,k_2}_y(c_y,m_y)=\sum_b(\Lambda^{j,k_2}_{b})^{\dagger}(\tilde{M}_y^{j,b,k_2}(c_y,m_y))$ where $\tilde{\cM}^{j,b,k_2}:=\tilde{\mathbf{I}}^{j,b,k_2}[\cM]=\{\tilde{M}_y^{j,b,k_2}:=\{\tilde{M}_y^{j,b,k_2}(c_y,m_y
    )\}\}$ with $\tilde{M}_y^{j,b,k_2}(c_y,m_y
    ):=(\tilde{\Phi}^{*j,b,k_2}_c)^{\dagger}(M_y(m_y))~\forall~b,j,k_2$ and $y=1,2,\ldots,n$ is a set  of compatible measurements as $\tilde{\mathbf{I}}^{j,b,k_2}$ is an incompatibility-breaking instrument for all $j,b,k_2$. If we define $g_y:=(c_y,m_y)$,  mathematically, it we can write
    \begin{align}
        \tilde{M}^{j,b,k_2}_y(c_y,m_y)=\sum_{\{g_1,g_2,\ldots,g_n\}\setminus g_y}\tilde{G}^{j,b,k_2}(g_1,g_2,\ldots,g_n).
    \end{align}
    Here $\tilde{G}^{j,b,k_2}:=\{\tilde{G}^{j,b,k_2}(c,m_1,m_2,\ldots,m_n)\}$ is the joint measurement for the set of measurements $\tilde{\cM}^{j,b,k_2}$. We then define a matrix
    \begin{align}
        G^{j,k_2}(g_1,g_2,\ldots,g_n)=\sum_b(\Lambda^{j,k_2}_b)^{\dagger}(\tilde{G}^{j,b,k_2}(g_1,g_2,\ldots,g_n)).
    \end{align}
    We define $G^{j,k_2}=\{G^{j,k_2}(g_1,g_2,\ldots,g_n)\}$. From the properties of the elements of the set $\tilde{G}^{j,b,k_2}$ it is easy to verify that $G^{j,k_2}(g_1,g_2,\ldots,g_n)\geq0$ and $\sum_{g_1,g_2,\ldots,g_n}G^{j,k_2}(g_1,g_2,\ldots,g_n)=\mathbbm{1}_{\overline{\cH}}~\forall~j,k_2$. Thus, $G^{j,k_2}$ forms a valid measurement for all $j,k_2$.

    Then it can be easily proved that
    \begin{align}
        \tilde{M}^{j,k_2}_y(c_y,m_y)=&\sum_b(\Lambda^{j}_{b})^{\dagger}(\tilde{M}_y^{j,b,k_2}(c_y,m_y))\nonumber\\
        =&\sum_b(\Lambda^{j,k_2}_b)^{\dagger}(\tilde{G}^{j,b,k_2}(g_1,g_2,\ldots,g_n))\nonumber\\
        =&\sum_{\{g_1,g_2,\ldots,g_n\}\setminus g_y}G^{j,k_2}(g_1,g_2,\ldots,g_n).
    \end{align}
    Hence $G^{j,k_2}$ is the joint measurement for the set of measurements $\tilde{\cM}^{j,k_2}~\forall~j,k_2$. Thus the instrument
    $\mathbf{F}^{j,k_2}$ 
    is also an incompatibility-breaking instrument for all $j,k_2$.

    Similarly the action of the instrument $\mathbf{H}^{j,k_3}:=\{\sum_{a,b}\tilde{\Gamma}^{ j,a,k_3}_c\circ\Phi^{a}_b\circ\tilde{\Delta}^{ j,k_3}_a$, in the third term, on the set of measurements $\cM$ in the Heisenberg picture is given by
    \begin{align}
    \mathbf{H}^{j,k_3}(\cM)=&(\sum_{a,b}\tilde{\Gamma}^{ j,a,k_3}_c\circ\Phi^{a}_b\circ\tilde{\Delta}^{ j,k_3}_a)^{\dagger}(\cM)\nonumber\\
    =&\sum_{a,b}(\tilde{\Delta}^{ j,k_3}_a)^{\dagger}\circ(\Phi^{a}_b)^{\dagger}\circ(\tilde{\Gamma}^{ j,a,k_3}_c)^{\dagger}(\cM),\nonumber\\
    =&\sum_{a}(\tilde{\Delta}^{ j,k_3}_a)^{\dagger}\circ(\Phi^{a})^{\dagger}\circ(\tilde{\Gamma}^{j,a,k_3}_c)^{\dagger}(\cM).
    \end{align}
Note that $\{(\Phi^{a})^{\dagger}\circ(\tilde{\Gamma}^{j,a,k_3}_c)^{\dagger}\}$ is a valid quantum instrument for all $j,a,k_3$. We denote the transformed set of measurements as $\tilde{\cM}^{j,k_3}=\{\tilde{M}^{j,k_3}_1,\tilde{M}^{j,k_3}_2,\ldots,\tilde{M}^{j,k_3}_n\}$ with $\tilde{M}^{j,k_3}_y(c_y,m_y)=\sum_a(\tilde{\Delta}^{ j,k_3}_a)^{\dagger}(\tilde{M}_y^{j,a,k_3}(c_y,m_y))$ where $\tilde{\cM}^{j,a,k_3}=\{\tilde{M}_y^{j,a,k_3}:=\{\tilde{M}_y^{j,a,k_3}(c_y,m_y
    )\}\}$ with $\tilde{M}_y^{j,a,k_3}(c_y,m_y
    ):=(\Phi^a)^{\dagger}\circ(\tilde{\Gamma}^{j,a,k_3}_c)^{\dagger}(M_y(m_y))~\forall~a,j,k_3$, and $y=1,2,\ldots,n$ is a set  of measurements. As $\mathbf{I}^a=\{\Phi^a_b\}$ is an incompatibility-breaking instrument for all $a$, the channel $\sum_b\Phi^a_b=\Phi^a$ is also incompatibility-breaking. Therefore, the set $\tilde{\cM}^{j,a,k_3}$ is compatible for all $j,a,k_3$ with the joint measurement $\tilde{G}^{j,b,k_2}(g_1,g_2,\ldots,g_n)$ where $g_y:=(c_y,m_y)$.

    Proceeding in the same way as before, it can be shown that
     \begin{align}
        \tilde{M}^{j,k_3}_y(c_y,m_y)=&\sum_a(\tilde{\Delta}^{ j,k_3}_a)^{\dagger}(\tilde{M}_y^{j,a,k_3}(c_y,m_y))\nonumber\\
        =&\sum_b(\Lambda^{j,k_3}_b)^{\dagger}(\tilde{G}^{j,b,k_3}(g_1,g_2,\ldots,g_n))\nonumber\\
        =&\sum_{\{g_1,g_2,\ldots,g_n\}\setminus g_y}G^{j,k_3}(g_1,g_2,\ldots,g_n).
    \end{align}
    where $G^{j,k_3}=\sum_b(\Lambda^{j,k_3}_b)^{\dagger}(\tilde{G}^{j,b,k_3}(g_1,g_2,\ldots,g_n))$ is the joint measurement for the set of measurements $\tilde{\cM}^{j,k_3}~\forall~j,k_3$. Thus the instrument
    $\mathbf{H}^{j,k_3}$ is also an incompatibility-breaking instrument for all $j,k_3$.

    Now we know that incompatibility-breaking instruments form a convex set. As $\{\overline{\Phi}^j_c\}$ in Eq. \eqref{Eq:free_operation_incom_break} is a convex mixture, of $\mathbf{E}^{j,k_1}$, $\mathbf{F}^{j,k_2}$, and $\mathbf{H}^{j,k_3}$ it is also an incompatibility-breaking instrument. Hence, a transformation of the form given in Eq. \eqref{Eq:free_operation_incom_break} transforms a set of incompatibility-breaking instruments to another set of incompatibility-breaking instruments, and because of this reason, it can be considered as a free transformation of the resource theory of incompatibility-preservability.

    Next thing we will show is that, there exists a transformation of the form in Eq. \eqref{Eq:free_operation_incom_break} which transforms a given arbitrary set of incompatibility-breaking instruments to another given arbitrary set of incompatibility-breaking instruments. We first consider an arbitrary set of incompatibility-breaking instruments $\cI=\{\mathbf{I}^a=\{\Phi^{a}_b\}\in\mathscr{I}_{IB}(\cH,\cK)\}$. Our goal is to show that it can be transformed into another given arbitrary set of incompatibility-breaking instruments $\overline{\cJ}=\{\overline{\mathbf{J}}^j=\{\overline{\Phi}^j_c\}\in\mathscr{I}_{IB}(\overline{\cH},\overline{\cK})\}$. We again define

    \begin{align}
        \Phi^{\prime j,1}_{a}&=p(a)\Gamma_0\nonumber\\
        \tilde\Phi^{*j,b,1}_c&=\overline{\Phi}^j_c\circ(\tr_{\cK}\otimes\mathbbm{I}_{Q}),\qquad~\forall~j,b ,
    \end{align}
    where $\sum_a p(a)=1$, $\Gamma_0:\cL(\overline{\cH})\rightarrow\cL(\cH\otimes Q)$ with $\overline{\cH}=Q$ such that for all $\sigma\in\cL(\overline{\cH}),~ \Gamma_0(\sigma)= \ket{0}\bra{0}\otimes\sigma$  and clearly, $\mathbbm{I}_{Q}=\mathbbm{I}_{\overline{\cH}}$. Then, choosing $q(k_2)=\delta_{k_2,1}$ with $q(k_1),q(k_3)=0 ~\forall~k_1,k_3$, it follows that
    \begin{align}
       \sum_{k_1}q(k_1)\sum_{a,b}&\tilde{\Phi}^{j,b,k_1}_c\circ(\Phi^{a}_b\otimes\mathbbm{I}_{Q^{\prime}})\circ\Phi^{\prime * j,k_1}_{a}\nonumber\\
       +&\sum_{k_2}q(k_2)\sum_{a,b}\tilde{\Phi}^{*j,b,k_2}_c\circ(\Phi^{a}_{b}\otimes\mathbbm{I}_{Q})\circ\Phi^{\prime j,k_2}_{a}\nonumber\\
       +&\sum_{k_3}q(k_3)\sum_{a,b}\tilde{\Gamma}^{ j,b,k_3}_c\circ\Phi^{a}_b\circ\tilde{\Delta}^{ j,k_3}_a= \overline{\Phi}^j_c.
    \end{align}
Thus, any given set of arbitrary incompatibility-breaking instruments $\cI=\{\mathbf{I}^a=\{\Phi^{a}I_b\}\in\mathscr{I}_{IB}(\cH,\cK)\}$ can be transformed to another given arbitrary set of incompatibility-breaking instruments $\overline{\cJ}=\{\overline{\mathbf{J}}^j=\{\overline{\Phi}^j_c\}\in\mathscr{I}_{IB}(\overline{\cH},\overline{\cK})\}$ using the free transformations given in Eq. \eqref{Eq:free_operation_incom_break}.
\end{proof}

\begin{theorem}
    $\widehat{\cD}$ is monotonically non-increasing under the free transformations of the resource theory of incompatibility preservability.\label{Th:Dist_mono_IP}
\end{theorem}

Appendix \ref{App:Dist_mono_IP} contains the detailed proof of this theorem. As a result of the above theorem, using Proposition \ref{Propsi:res_meas_prop} and \ref{Propsi:res_meas_ext_prop}, we can also conclude that the distance-based resource measures in Eqs. \eqref{Eq:Def_res_meas} and \eqref{Eq:Def_res_ext_meas} are valid resource measures for the resource theory of information preservability. These are denoted as $\mathbbm{R}_{MIP}$ and $\overline{\mathbbm{R}}_{MIP}$, respectively.

Alternatively, in the resource theory of strong incompatibility preservability, the sets of weak incompatibility-breaking instruments are the free objects that destroy the incompatibility of any sets of measurements, at least when the classical outcomes of the instrument are unknown.  We construct the free transformations as follows.

Let us have a set of sets of arbitrary instruments $\{\cJ^{\prime k_1}=\{\mathbf{J}^{\prime j,k_1}=\{\Phi^{\prime j,k_1}_{a}\}\in\mathscr{I}(\overline{\cH},\cH\otimes Q)\}_j\}_{k_1}$ and a set of sets of sets of weak-incomaptibility breaking instruments $\{\{\tilde{\cI}^{b,k_1}=\{\tilde{\mathbf{I}}^{j,b,k_1}=\{\tilde{\Phi}^{j,b,k_1}_c\}\in\mathscr{I}_{WIB}(\cK\otimes Q,\overline{\cK})\}_b\}_j\}_{k_1}$. Also consider set of sets of sets of instruments  $\{\{\tilde{\cX}^{j,k_2}=\{\tilde{\mathbf{X}}^{j,a,k_2}=\{\tilde{\Gamma}_c^{j,a,k_2}\}\in\mathscr{I}(\cK,\overline{\cK})\}_a\}_j\}_{k_2}$ and another set of sets of quantum instruments $\{\tilde{\cO}^{k_2}=\{\tilde{\mathbf{O}}^{j,k_2}=\{\tilde{\Delta}^{j,k_2}\}\in\mathscr{I}(\overline{\cH},\cH)\}_j\}_{k_2}$. Then the following theorem holds:
\begin{theorem}
     Let $\cI=\{\mathbf{I}^a=\{\Phi^{a}_b\}\in\mathscr{I}_{WIB}(\cH,\cK)\}$ be a set of weak incompatibility-breaking instruments and $\overline{\cJ}=\{\overline{\mathbf{J}}^j=\{\overline{\Phi}^j_c\}\in\mathscr{I}(\overline{\cH},\overline{\cK})\}\}$ be a set of instruments such that
    \begin{align}
       \overline{\Phi}^j_c=\sum_{k_1}q(k_1)\sum_{a,b}&\tilde{\Phi}^{j,b,k_1}_c\circ(\Phi^{a}_b\otimes\mathbbm{I}_{Q})\circ\Phi^{\prime  j,k_1}_{a}\nonumber\\
       &+\sum_{k_2}q(k_2)\sum_{a,b}\tilde{\Gamma}_c^{j,a,k_2}\circ\Phi^a_b\circ\tilde{\Delta}_a^{j,k_2}.\label{Eq:free_operation_weak_incom_break}
    \end{align}
     with $\sum_{k_1}q(k_1)+\sum_{k_2}q(k_2)=1$. Then
    \begin{enumerate}
        \item $\overline{\cJ}$ is also a set of weak incompatibility-breaking instruments. In other words, the transformation of the form given in Eq. \eqref{Eq:free_operation_weak_incom_break} can be considered as a free transformation of the resource theory of strong incompatibility preservability.
        \item A given arbitrary set of weak incompatibility-breaking instruments can be transformed to another given arbitrary set of weak incompatibility-breaking instruments through a transformation of the form given in Eq. \eqref{Eq:free_operation_weak_incom_break}.
    \end{enumerate}
    \label{Th:free_op_weak_incom_break}
\end{theorem}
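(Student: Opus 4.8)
The plan is to prove the two parts separately, each mirroring the corresponding step in the proof of Theorem~\ref{Th:free_op_weak_ent_break}, but with ``entanglement-breaking'' replaced by ``incompatibility-breaking''; the closure part then requires a genuinely different argument. For the first part it suffices to show that the induced channel $\overline{\Phi}^j:=\sum_c\overline{\Phi}^j_c$ is incompatibility-breaking (IBC) for every $j$, since this is exactly the definition of $\overline{\cJ}$ being weak incompatibility-breaking. First I would carry out the sum over $c$ and regroup the remaining double sum by the intermediate outcome $b$, writing
\begin{align}
    \overline{\Phi}^j=\sum_b\tilde{\Phi}^{j,b}\circ\Xi^j_b,\quad \tilde{\Phi}^{j,b}:=\sum_c\tilde{\Phi}^{j,b}_c,\quad \Xi^j_b:=\sum_a(\Phi^a_b\otimes\mathbbm{I}_Q)\circ\Phi^{\prime j}_a.\nonumber
\end{align}
Here each $\tilde{\Phi}^{j,b}$ is IBC because $\tilde{\mathbf{I}}^{j,b}\in\mathscr{I}_{WIB}$, and I would verify that $\{\Xi^j_b\}_b$ is itself a legitimate instrument: each $\Xi^j_b$ is CP, and $\sum_b\Xi^j_b=\sum_a(\Phi^a\otimes\mathbbm{I}_Q)\circ\Phi^{\prime j}_a$ is trace-preserving because every induced channel $\Phi^a=\sum_b\Phi^a_b$ is trace-preserving and $\Phi^{\prime j}=\sum_a\Phi^{\prime j}_a$ is a channel.

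The crux is the closure statement: a channel of the form $\sum_b\Theta_b\circ\Xi_b$, with $\{\Xi_b\}_b$ an instrument and each $\Theta_b$ IBC, is itself IBC. This does not reduce to the cited composition facts, because the sum mixes outcome-dependent IBC channels, so I would argue directly at the level of joint measurements, exactly as in the proof of Theorem~\ref{Th:free_op_incom_break}. For an arbitrary set of measurements $\{M_i\}$ and each fixed $b$, the set $\{(\tilde{\Phi}^{j,b})^{\dagger}(M_i)\}_i$ is compatible (since $\tilde{\Phi}^{j,b}$ is IBC); let $G^{j,b}$ be a joint measurement for it, all of them sharing the common product outcome set. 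I would then glue these by setting $G^j:=\sum_b(\Xi^j_b)^{\dagger}(G^{j,b})$ and check that $G^j$ is a valid measurement: positivity is immediate, and normalisation follows from $\sum_b(\Xi^j_b)^{\dagger}(\mathbbm{1})=(\sum_b\Xi^j_b)^{\dagger}(\mathbbm{1})=\mathbbm{1}$, where unitality uses that $\sum_b\Xi^j_b$ is a channel. Its marginals reproduce $(\overline{\Phi}^j)^{\dagger}(M_i)$, which exhibits $\{(\overline{\Phi}^j)^{\dagger}(M_i)\}_i$ as compatible; hence $\overline{\Phi}^j$ is IBC, $\overline{\cJ}$ is weak incompatibility-breaking, and Eq.~\eqref{Eq:free_operation_weak_incom_break} is a free transformation.

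For the second part I would reuse the explicit construction from Theorem~\ref{Th:free_op_weak_ent_break}: take a single-outcome pre-processing $\Phi^{\prime j}_a=\delta_{a,a_0}\Gamma_0$ with $\Gamma_0(\sigma)=\ket{0}\bra{0}\otimes\sigma$ (so $Q=\overline{\cH}$ and $\mathbbm{I}_Q=\mathbbm{I}_{\overline{\cH}}$), and $\tilde{\Phi}^{j,b}_c=\overline{\Phi}^j_c\circ(\tr_{\cK}\otimes\mathbbm{I}_Q)$ for all $j,b$. I would first confirm legitimacy by checking $\tilde{\mathbf{I}}^{j,b}\in\mathscr{I}_{WIB}$: its induced channel $\sum_c\tilde{\Phi}^{j,b}_c=\overline{\Phi}^j\circ(\tr_{\cK}\otimes\mathbbm{I}_Q)$ is IBC, since $\overline{\Phi}^j$ is IBC (as $\overline{\cJ}$ is the target weak-IB set) and applying the dual of the channel $\tr_{\cK}\otimes\mathbbm{I}_Q$ to a compatible family preserves compatibility. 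Substituting into Eq.~\eqref{Eq:free_operation_weak_incom_break} and tracing out $\cK$ collapses the $b$-sum through $\sum_b\tr[\Phi^{a_0}_b(\ket{0}\bra{0})]=\tr[\ket{0}\bra{0}]=1$, leaving exactly $\overline{\Phi}^j_c$; this establishes the claimed reachability between arbitrary weak-IB sets.

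The main obstacle is precisely this closure step: unlike the entanglement case, where ``the sum of entanglement-breaking CP maps is entanglement-breaking'' finishes the argument at once, incompatibility-breaking is \emph{not} preserved under arbitrary sums, so the instrument normalisation $\sum_b\Xi^j_b=\text{channel}$ must be used essentially --- it is exactly what lets the per-outcome joint measurements $G^{j,b}$ be glued into a single global joint measurement $G^j$.
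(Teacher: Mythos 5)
Your proposal is correct and takes essentially the same approach as the paper: your instrument $\{\Xi^j_b\}_b$ is exactly the paper's $\{\Lambda^j_b\}_b$, your glued joint measurement $G^j=\sum_b(\Xi^j_b)^{\dagger}(G^{j,b})$ coincides with the paper's $G^j=\sum_b(\Lambda^j_b)^{\dagger}(\tilde{G}^{j,b})$ (including the same positivity/normalisation/marginal checks), and your part-2 construction with $\Gamma_0$ and $\tilde{\Phi}^{j,b}_c=\overline{\Phi}^j_c\circ(\tr_{\cK}\otimes\mathbbm{I}_{Q})$ is the paper's. Your only additions are minor tidy-ups --- writing $\delta_{a,a_0}$ where the paper has the typo $\delta_{a,c}$, and explicitly verifying that the part-2 post-processing instruments lie in $\mathscr{I}_{WIB}$ --- together with the accurate observation that the instrument normalisation $\sum_b\Xi^j_b$ being a channel is what replaces the ``sums of EB maps are EB'' shortcut available in the entanglement case.
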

\begin{proof}
To prove $\overline{\cJ}$ to be a set of weak incompatibility-breaking instruments, we have to prove that $\sum_c\overline{\Phi}^j_c~ \forall~j$ is an incompatibility-breaking channel for all $j$. We know that $\sum_c\tilde{\Phi}^{j,b,k_1}_c:=\tilde{\Phi}^{j,b,k_1}$ is an incompatibility-breaking channel $\forall~j,b,k_1$. Consider a set of arbitrary measurements $\cM=\{M_1, M_2,\ldots, M_n\}$ where $n$ is arbitrary and $M_y=\{M_y(m_y)\}_{m_y}$ for $y=1,2,\ldots,n$. Here $m_y$ denotes the outcome for the measurement $M_y~\forall y$. The action of the first term in Eq. \eqref{Eq:free_operation_weak_incom_break}, $\sum_{a,b,c}\tilde{\Phi}^{j,b,k_1}_c\circ(\Phi^{a}_b\otimes\mathbbm{I}_{Q})\circ\Phi^{\prime * j,k_1}_{a})^{\dagger}$, on this set of measurements in the Heisenberg picture is given by
    \begin{align}
    (\sum_{a,b,c}\tilde{\Phi}^{j,b,k_1}_c\circ&(\Phi^{a}_b\otimes\mathbbm{I}_{Q})\circ\Phi^{\prime * j,k_1}_{a})^{\dagger}(\cM)\nonumber\\
    =&\sum_{a,b,c}(\Phi^{\prime * j,k_1}_{a})^{\dagger}\circ(\Phi^{a}_b\otimes\mathbbm{I}_{Q})^{\dagger}\circ(\tilde{\Phi}^{j,b,k_1}_c)^{\dagger}(\cM),\nonumber\\
    =&\sum_{b}(\Lambda^{j,k_1}_b)^{\dagger}\circ(\tilde{\Phi}^{j,b,k_1})^{\dagger}(\cM),
    \end{align}
    where $(\Lambda^{j,k_1}_b)^{\dagger}=\sum_{a}(\Phi^{\prime * j,k_1}_{a})^{\dagger}\circ(\Phi^{a}_b\otimes\mathbbm{I}_{Q})^{\dagger}$. We denote the transformed set of measurements as $\tilde{\cM}^{j,k_1}=\{\tilde{M}^{j,k_1}_1,\tilde{M}^{j,k_1}_2,\ldots,\tilde{M}^{j,k_1}_n\}$ where $\tilde{M}^{j,k_1}_y(m_y):=\sum_{b}(\Lambda^{j,k_1}_b)^{\dagger}\circ(\tilde{\Phi}^{j,b,k_1})^{\dagger}(M_y(m_y))$. As the channel $\tilde{\Phi}^{j,b,k_1}$ is incompatibility-breaking we can write
    \begin{align}
        (\tilde{\Phi}^{j,b,k_1})^{\dagger}(M_y(m_y))=\sum_{\{m_1,m_2,\ldots,m_n\}\setminus m_y}\tilde{G}^{j,b,k_1}(m_1,m_2,\ldots,m_n),
    \end{align} 
    $\forall~j,b,k_1$ where $\tilde{G}^{j,b,k_1}:=\{\tilde{G}^{j,b,k_1}(m_1,m_2,\dots,m_n)\}$ is the joint measurement of the set of compatible measurements $\{(\tilde{\Phi}^{j,b,k_1})^{\dagger}(M_y)\}$. We construct a matrix
    \begin{align}
    G^{j,k_1}(m_1,m_2,\ldots,m_n)=\sum_b(\Lambda^{j,k_1}_b)^{\dagger}(\tilde{G}^{j,b,k_1}(m_1,m_2,\ldots,m_n)),
    \end{align}
    $\forall j,k_1$. We define $G^{j,k_1}:=\{G^{j,k_1}(m_1,m_2,\ldots,m_n)\}$. From the properties of the elements of the set $\tilde{G}^{j,k_1}$, it can be verified that $G^{j,k_1}(m_1,m_2,\ldots,m_n)\geq0$ and $\sum_{m_1,m_2,\ldots,m_n}G^{j,k_1}(m_1,m_2,\ldots,m_n)=\mathbbm{1}_{\overline{\cH}}$. Thus, $G^{j,k_1}$ is a valid measurement.

    It is then easy to show that
    \begin{align}
        \tilde{M}^{j,k_1}_y(m_y)=\sum_{\{m_1,m_2,\ldots,m_n\}\setminus m_y}G^{j,k_1}(m_1,m_2,\ldots,m_n)~\forall~j.
    \end{align}
    Hence $\tilde{\cM}^{j,k_1}$ is a set of compatible measurements for all $j,k_1$.

    Next, for the second term we have

    \begin{align}
    (\sum_{a,b,c}\tilde{\Gamma}^{ j,a,k_2}_c&\circ\Phi^{a}_b\circ\tilde{\Delta}^{ j,k_2}_a)^{\dagger}(\cM)\nonumber\\
    =&\sum_{a,b}(\tilde{\Delta}^{ j,k_2}_a)^{\dagger}\circ(\Phi^{a}_b)^{\dagger}\circ(\tilde{\Gamma}^{ j,a,k_2})^{\dagger}(\cM),\nonumber\\
    =&\sum_{a}(\tilde{\Delta}^{ j,k_2}_a)^{\dagger}\circ(\Phi^{a})^{\dagger}\circ(\tilde{\Gamma}^{j,a,k_2})^{\dagger}(\cM).
    \end{align}
Note that $\tilde{\Gamma}^{j,a,k_2}\circ\Phi^a$ is an incompatibility-breaking channel for all $j,a,k_2$ as $\Phi^a$ is an incompatibility breaking channel for all $a$. We denote the transformed set of measurements as $\tilde{\cM}^{j,a,k_2}=\{\tilde{M}_y^{j,a,k_2}:=\{\tilde{M}_y^{j,a,k_2}(m_y
    )\}\}$ with $\tilde{M}_y^{j,a,k_2}(m_y
    ):=(\Phi^a)^{\dagger}\circ(\tilde{\Gamma}^{j,a,k_2})^{\dagger}(M_y(m_y))~\forall~a,j,k_2$ and $y=1,2,\ldots,n$. This set is compatible for all $j,a,k_2$. Then we have
     \begin{align}
        \tilde{M}^{j,a,k_2}_y(m_y)=&(\Phi^a)^{\dagger}\circ(\tilde{\Gamma}^{j,a,k_2})^{\dagger}(M_y(m_y))\nonumber\\
        =&\sum_{\{m_1,m_2,\ldots,m_n\}\setminus m_y}\tilde{G}^{j,a,k_3}(m_1,m_2,\ldots,m_n).
    \end{align}
    where $\tilde{G}^{j,a,k_3}$ is the joint measurement for the set of measurements $\tilde{\cM}^{j,a,k_2}~\forall~j,a,k_2$. 

    Proceeding in the same way as previously, the set of measurements $\tilde{\cM}^{j,k_2}=\{\tilde{M}^{j,k_2}_1,\tilde{M}^{j,k_2}_2,\ldots,\tilde{M}^{j,k_2}_n\}$ with $\tilde{M}^{j,k_2}_y(m_y)=\sum_a(\tilde{\Delta}^{ j,k_2}_a)^{\dagger}(\tilde{M}_y^{j,a,k_2}(m_y))$ is compatible for all $j,k_2$ such that
    \begin{align}
        \tilde{M}^{j,k_2}_y(m_y)=&\sum_a(\tilde{\Delta}^{ j,k_2}_a)^{\dagger}\circ(\Phi^a)^{\dagger}\circ(\tilde{\Gamma}^{j,a,k_2})^{\dagger}(M_y(m_y))\nonumber\\
        =&\sum_{\{m_1,m_2,\ldots,m_n\}\setminus m_y}\sum_a(\tilde{\Delta}^{ j,k_2}_a)^{\dagger}\circ\tilde{G}^{j,a,k_3}(m_1,m_2,\ldots,m_n).
    \end{align}
    Thus, the channel
    $(\sum_{a,b,c}\tilde{\Gamma}^{ j,a,k_2}_c\circ\Phi^{a}_b\circ\tilde{\Delta}^{ j,k_2}_a)^{\dagger}$ is an incompatibility-breaking channel for all $j,k_2$.   
    
   As we know that the convex combination of incompatibility-breaking channels is also an incompatibility-breaking channel, $\sum_c\overline{\Phi}^j_c$ is an incompatibility-breaking channel for all $j$. Thus, the transformations of the form given in Eq. \eqref{Eq:free_operation_weak_incom_break} can be considered as the free transformations for the resource theory of strong incompatibility preservability. 

    The next thing we have to show is that for two given arbitrary sets of weak incompatibility-breaking instruments, there exists a transformation of the form given in Eq. \eqref{Eq:free_operation_weak_incom_break} that transforms one set of the given pair to the other set of the same given pair. To show this, we first consider two arbitrary weak incompatibility-breaking instruments $\cI=\{\mathbf{I}^a=\{\Phi^{a}_b\}\in\mathscr{I}_{WIB}(\cH,\cK)\}$ and $\overline{\cJ}=\{\overline{\mathbf{J}}^j=\{\overline{\Phi}^j_c\}\in\mathscr{I}_{WIB}(\overline{\cH},\overline{\cK})\}$. Next, we define
      \begin{align}
        \Phi^{\prime j,1}_{a}&=p(a)\Gamma_0\nonumber\\
        \tilde\Phi^{j,b,1}_c&=\overline{\Phi}^j_c\circ(\tr_{\cK}\otimes\mathbbm{I}_{Q}),\qquad~\forall~j,b ,
    \end{align}
    where $\sum_a p(a)=1$, $\Gamma_0:\cL(\overline{\cH})\rightarrow\cL(\cH\otimes Q)$ with $\overline{\cH}=Q$ such that for all $\sigma\in\cL(\overline{\cH}),~ \Gamma_0(\sigma)= \ket{0}\bra{0}\otimes\sigma$  and clearly, $\mathbbm{I}_{Q}=\mathbbm{I}_{\overline{\cH}}$. Then, by choosing $q(k_1)=\delta_{k_1,1}$ with $q(k_2)=0 ~\forall~k_2$, it can be easily shown that
    \begin{align}
         \sum_{k_1}q(k_1)\sum_{a,b}&\tilde{\Phi}^{j,b,k_1}_c\circ(\Phi^{a}_b\otimes\mathbbm{I}_{Q})\circ\Phi^{\prime  j,k_1}_{a}\nonumber\\
       &+\sum_{k_2}q(k_2)\sum_{a,b}\tilde{\Gamma}_c^{j,a,k_2}\circ\Phi^a_b\circ\tilde{\Delta}_a^{j,k_2}=\overline{\Phi}^j_c.
    \end{align}
    Thus, the transformation of the form given in Eq. \eqref{Eq:free_operation_weak_incom_break} transforms a given arbitrary set of weak incompatibility-breaking instruments $\cI=\{\mathbf{I}^a=\{\Phi^{a}_b\}\in\mathscr{I}_{WIB}(\cH,\cK)\}$ to another given arbitrary set of weak incompatibility-breaking instruments $\overline{\cJ}=\{\overline{\mathbf{J}}^j=\{\overline{\Phi}^j_c\}\in\mathscr{I}_{WIB}(\overline{\cH},\overline{\cK})\}$.
\end{proof}

\begin{remark}
    \rm{Note that in the free transformations of both resource theory of entanglement-preservability and resource theory of incompatibility-preservability (Eqs. \eqref{Eq:free_operation_ent_break} and \eqref{Eq:free_operation_incom_break} respectively), there are three terms, while in the case of the resource theory of strong entanglement-preservability and the resource theory of strong incompatibility-preservability (Eqs. \eqref{Eq:free_operation_weak_ent_break} and \eqref{Eq:free_operation_weak_incom_break} respectively), the free transformation of each has just two terms. The reason for this is the fact that post-processing may convert a weak entanglement-breaking instrument to an instrument that is not weak entanglement-breaking and a weak incompatibility-breaking instrument to an instrument that is not weak incompatibility-breaking, in general. This can be easily shown through an example. Consider three qubit unitary channels (denoted as $\Lambda_2, \Lambda_3,\Lambda_4$) corresponding to unitary matrices $\sigma_x$, $\sigma_y$, and $\sigma_z$ respectively and the same four outcome instrument $\mathbf{I}$ given in the Example \ref{Examp:eb_strict_in_web}. Now let us denote the qubit identity channel $\mathbbm{I}_{\cH^{\mathbf{Q}}}$ as $\Lambda_1$ Now suppose the instrument $\tilde{\mathbf{I}}=\{\tilde{\Phi}_a\}$ is \emph{post-processed} from the instrument $\mathbf{I}$ such that 
    
    \begin{align}
        \tilde{\Phi}_a=\Lambda_a\circ\Phi_a~\forall a\in\{1,...,4\}.
    \end{align}
    Then clearly, $\tilde{\Phi}:=\sum^4_{a=1}\tilde{\Phi}_a=\mathbbm{I}_{\cH^{\mathbf{Q}}}$ and therefore, the instrument $\tilde{\mathbf{I}}$ is neither a  weak entanglement-breaking instrument nor a weak incompatibility-breaking instrument. But the instrument $\mathbf{I}$ is both  weak entanglement-breaking  and weak incompatibility-breaking.
    
    }
\end{remark}

\begin{theorem}
    $\widehat{\cD}$ is monotonically non-increasing under the free transformations of the resource theory of strong incompatibility preservability.\label{Th:Dist_mono_SIP}
\end{theorem}

A detailed proof is given in Appendix \ref{App:Dist_mono_SIP}. As a result of the above theorem, using Proposition \ref{Propsi:res_meas_prop} and \ref{Propsi:res_meas_ext_prop}, we can also conclude that the distance-based resource measures in Eqs. \eqref{Eq:Def_res_meas} and \eqref{Eq:Def_res_ext_meas} are valid resource measures for the resource theory of strong incompatibility preservability. These are denoted as $\mathbbm{R}_{SMIP}$ and $\overline{\mathbbm{R}}_{SMIP}$, respectively.

\begin{remark}
    \rm{In Ref. \cite{hsieh_incom_preservability}, the authors have also developed a resource theory of incompatibility preservability in a very nice way, and each free object in their resource theory is \emph{a single incompatibility-breaking channel}. However, in this work, we have considered each \emph{set of incompatibility-breaking instruments} as a free object. Due to this, we have the scope to define the concept of weak incompatibility-breaking instruments here, which is not present in \cite{hsieh_incom_preservability}. In short, our approach is different from the approach of Ref. \cite{hsieh_incom_preservability}. Moreover, we would also like to mention that in Eq. \eqref{Eq:free_operation_weak_incom_break} the second term is similar to the free operation defined in Ref. \cite{hsieh_incom_preservability}.}
\end{remark}
\subsubsection{Traditional compatible instruments and the resource theory of traditional incompatibility}
    The definition of traditional compatibility of quantum instruments is given in Def. \ref{Def:trad_comp}. In Ref. \cite{chitambar_PID}, it is shown that the set of traditional incompatible instruments is a resource for programmable quantum instruments  and the resource theory of traditional incompatiblity has been constructed. In the resource theory of traditional incompatiblity, free objects are all sets of traditionally compatible instruments and free transformations are free PID supermaps that are defined as follows.
    
    Consider a set of quantum instruments $\cJ=\{\mathbf{J}^i=\{\Phi^i_a\}\in\mathscr{I}(\cH\otimes\cK)\}$ such that $\sum_a\Phi^i_a=\Phi$. More precisely, it is a PID (Programmable Instrument Device)\cite{chitambar_PID}.
     \begin{definition}
         A free PID supermap $\cV$ mapping $\cJ$ to another set of traditionally compatible instruments $\tilde\cJ=\{\tilde{\mathbf{J}}^j=\{\tilde\Phi_b^j\}\in\mathscr{I}_{TC}(\cH\otimes\cK)\}$ with $\sum_b\tilde\Phi_b^j=\tilde\Phi:\cL(\cH)\rightarrow\cL(\tilde\cK)$, is defined as
    \begin{align}
        \tilde{\Phi}_b^j=\sum_{\lambda,i,a}p(b\vert i, j,\lambda,a)q(i\vert j,\lambda)\mathbf{K}^\lambda\circ(\Phi_a^i\otimes \mathbbm{I}_{ Q})\circ\mathbf{F}
        \label{Eq:free_operation_trad_comp}
    \end{align}
    where $p(b\vert i, j,\lambda,a) ~\text{and}~ q(i\vert j,\lambda)$ are the conditional probabilities and $\cK=\{\mathbf{K}^\lambda\}$ is the set of quantum instruments. 
     \end{definition}
     A given arbitrary set of traditionally compatible instruments can be transformed to another given arbitrary set of traditionally compatible instruments through a transformation of the form in Eq. \eqref{Eq:free_operation_trad_comp}\cite{chitambar_PID}.

    Next, we prove the monotonicity of the distance measure  $\widehat{\cD}$ under free PID supermaps.

    \begin{theorem}
    $\widehat{\cD}$ is monotonically non-increasing under the free transformations of the resource theory of traditional compatibility.\label{Th:Dist_mono_TC}
\end{theorem}

We refer to Appendix \ref{App:Dist_mono_TC} for a detailed proof. As a result of the above theorem, using Proposition \ref{Propsi:res_meas_prop} and \ref{Propsi:res_meas_ext_prop}, we can also conclude that the distance-based resource measures in Eqs. \eqref{Eq:Def_res_meas} and \eqref{Eq:Def_res_ext_meas} are valid resource measures for the resource theory of traditional incompatibility. These are denoted as $\mathbbm{R}_{TI}$ and $\overline{\mathbbm{R}}_{TI}$, respectively.

\subsubsection{Parallel compatible instruments and the resource theory of parallel incompatibility}

The definition of parallel compatible instrument is given in Def. \ref{Def:para_comp}. The parallel incompatibility of single outcome instruments (or equivalently, incompatibility of channels), which is a special case of parallel compatibility of general quantum instruments, has been shown to be a resource for state discrimination tasks in Ref. \cite{Mori_incomp_chan_state_disc}. Here, we are \emph{motivated} to construct the resource theory of parallel incompatibility. In this resource theory, the free objects are all sets of parallel compatible instruments. We construct the free transformations as follows.

Let $\{\cJ^{k_1}=\{\mathbf{J}^{j,k_1}=\{\Phi^{\prime j,k_1}_{l}\}\in\mathscr{I}_{PC}(\overline{\cH},\cH\otimes Q)\}_j\}_{k_1}$ be a set of sets of parallel compatible instruments and $\{\{\tilde{\cI}^{j,k_1}=\{\tilde{\mathbf{I}}^{j,b,k_1}=\{\tilde{\Phi}^{j,b,k_1}_c\}\in\mathscr{I}(\cK\otimes  Q,\overline{\cK})\}_b\}_j\}_{k_1}$ be set of sets of sets of quantum instruments where $j\in\{1,\ldots,n\}$. Consider another set of sets of sets of instruments  $\{\{\tilde{\cX}^{j,k_2}=\{\tilde{\mathbf{X}}^{j,b,k_2}=\{\tilde{\Gamma}_c^{j,b,k_2}\}\in\mathscr{I}(\cK,\overline{\cK})\}_b\}_j\}_{k_2}$. If $\cI=\{\mathbf{I}^a=\{\Phi^{a}_b\}\in\mathscr{I}_{PC}(\cH,\cK)\}$ be a set of parallel compatible instruments for $a\in\{1,\ldots,m\}$ then we define an extension $\hat{\cI}=\{\hat{\mathbf{I}}^j=\{\hat{\Phi}^{j}_b\}\in\mathscr{I}(\cH,\cK)\}$ such that
\begin{align}
    &\hat{\Phi}^{j}_b=\Phi^{\pi(a)}_b\forall~j\in\{1,\ldots,m\}~\text{with}~a\in\{1,\ldots,m\}\\
    &\hat{\Phi}^{j}_b=\overline{\Phi}^{j}_b~\forall~j\in\{m+1,\ldots,n\}\label{Eq:arbi_chosen_chan}
\end{align}
where $\{\Phi^{j}_b\}\in\mathscr{I}(\cH_j,\cK_j)$ for all $j\in\{1,\ldots,n\}$ with $\cH_j=\cH$ and $\cK_j=\cK$ for all $j\in\{1,2,\ldots,n\}$, $\pi$ is an arbitrary permutation of the set $\{1,\ldots,n\}$. and the instrument $\{\overline{\Phi}^{j}_b\}$ is arbitrarily chosen for all $j\in\{m+1,\ldots,n\}$. If $n\leq m$ instead of extending the set, we may have to reduce the set, and in that case, Eq. \eqref{Eq:arbi_chosen_chan} is not required. Similarly, we also consider another extension of $\cI$ denoted as $\mathcal{I}^*=\{\mathbf{I}^{*l}=\{\Phi^{*l}_b\}\in\mathscr{I}(\cH,\cK)\}^r_{l=1}$ where $r$ is arbitrary, $r\geq m$, $\mathbf{I}^{*l}=\mathbf{I}^{l}~\forall l\in\{1,\ldots,m\}$, and $\mathbf{I}^{*l}$ is arbitrarily chosen for all $l\in\{m+1,\ldots,r\}$. Also, we consider a set of quantum channels $\{\tilde{\Delta}^{j,k_2}\}$ such that
\begin{align}
    \tilde{\Delta}^{j,k_2}:=&\Delta^{k_2}~\forall~j\in\{1,\ldots,m\}\nonumber\\
    \tilde{\Delta}^{j,k_2}:=&\Delta^{j,k_2}~\forall~j\in\{m+1,\ldots,n\},
\end{align}
where $\Delta^{k_2}\in\mathscr{C}(\overline{\cH},\cH)$ and $\Delta^{j,k_2}\in\mathscr{C}(\overline{\cH},\cH_j)$ such that there exists a joint channel $\tilde{\Delta}^{k_2}\in\mathscr
C(\overline{\cH},\cH\otimes(\bigotimes_{j=m+1}^n\cH_j))$ with
\begin{align}
    \tr_{\{\cH_{m+1},\cH_{m+2},\ldots,\cH_{n}\}}\circ\tilde{\Delta}^{k_2}=&\Delta^{k_2}\nonumber\\
    \tr_{\{\cH,\cH_{m+1},\cH_{m+2},\ldots,\cH_{n}\}\setminus \cH_j}\circ\tilde{\Delta}^{k_2}=&\Delta^{j,k_2}
\end{align}
where $j\in\{m+1,\ldots,n\}$. It is worth mentioning that $\Delta^{k_2}$ need \emph{not} to be $m$-copy self-compatible. Then the following results hold.

\begin{theorem}
        If $\cI=\{\mathbf{I}^a=\{\Phi^{a}_b\}\in\mathscr{I}_{PC}(\cH,\cK)\}$ be a set of parallel compatible instruments, $\hat{\cI}$ and $\cI^*$ are its extensions, and $\overline{\cJ}=\{\overline{\mathbf{J}}^j=\{\overline{\Phi}^j_c\}\in\mathscr{I}(\overline{\cH},\overline{\cK})\}$ be a set of instruments such that
    \begin{align}
      \overline{\Phi}^j_c=\sum_{k_1}q(k_1)\sum_{l,b}&\tilde{\Phi}^{j,b,k_1}_c\circ(\Phi^{*l}_b\otimes\mathbbm{I}_{Q})\circ\Phi^{\prime  j,k_1}_{l}\nonumber\\
       &+\sum_{k_2}q(k_2)\sum_{b}\tilde{\Gamma}_c^{j,b,k_2}\circ\hat{\Phi}^{j}_b\circ\tilde{\Delta}^{j,k_2}. \label{Eq:free_operation_para_comp}
    \end{align}
    with $\sum_{k_1}q(k_1)+\sum_{k_2}q(k_2)=1$. Then
   \begin{enumerate}
        \item $\overline{\cJ}$ is also a set of parallel compatible instruments. In other words, an arbitrary transformation of the form given in Eq. \eqref{Eq:free_operation_para_comp} can be considered as a free transformation of the resource theory of parallel compatibility.
        \item A given arbitrary set of parallel compatible instruments can be transformed to another given arbitrary set of parallel compatible instruments through a transformation of the form given in Eq. \eqref{Eq:free_operation_para_comp}.
    \end{enumerate}\label{Th:free_op_para_comp}
\end{theorem}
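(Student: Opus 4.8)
The plan is to establish the two items separately, using as the main tool the fact recorded after Definition \ref{Def:para_comp} that parallel compatibility is preserved under post-processing, backed up by an explicit joint instrument when needed.

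For item 1, I would read Eq. \eqref{Eq:free_operation_para_comp}, for fixed $j$, as a three-stage process: a pre-processing by $\mathbf{J}^j=\{\Phi^{\prime j}_a\}$ (whose set $\cJ$ is parallel compatible by hypothesis); then, conditioned on the pre-processing outcome $a$, the application of $\mathbf{I}^a\otimes\mathbbm{I}_Q$ to the $\cH\otimes Q$ register; and finally, conditioned on the outcome $b$, the application of $\tilde{\mathbf{I}}^{j,b}$. The last two stages are exactly instrument post-processings of the form $\tilde\Phi_b=\sum_a P^a_b\circ\Phi_a$, so closure of parallel compatibility under post-processing already yields that $\overline{\cJ}$ is parallel compatible. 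To be safe I would also exhibit the joint instrument explicitly: taking $\{\Psi_{\vec a}\}$ with $\Psi_{\vec a}:\cL(\overline{\cH})\rightarrow\cL((\cH\otimes Q)^{\otimes n})$ the parallel joint instrument of $\cJ$, set
\begin{align}
\overline\Gamma_{\vec c}=\sum_{\vec a,\vec b}\Big(\bigotimes_{k=1}^{n}\tilde\Phi^{k,b_k}_{c_k}\circ(\Phi^{a_k}_{b_k}\otimes\mathbbm{I}_{Q})\Big)\circ\Psi_{\vec a},
\end{align}
with the $k$-th factor acting on the $k$-th copy of $\cH\otimes Q$. Fixing $c_j=c$ and tracing out the copies $k\neq j$, the sums over $c_k,b_k$ turn each off-branch factor into a trace-preserving channel, so the partial trace simply removes that copy; the residual sum over $a_k$ together with the marginalization identity $\Phi^{\prime j}_a=\sum_{\vec a:a_j=a}\tr_{\neq j}\circ\Psi_{\vec a}$ of $\cJ$ then collapses the expression to $\overline\Phi^j_c$. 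Hence $\{\overline\Gamma_{\vec c}\}$ is a parallel joint instrument for $\overline{\cJ}$.

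For item 2 the naive recipe of Theorems \ref{Th:free_op_weak_ent_break} and \ref{Th:free_op_weak_incom_break}, where the pre-processing is the fixed embedding $\Gamma_0(\sigma)=\ket{0}\bra{0}\otimes\sigma$, is \emph{unavailable}, since several copies of $\Gamma_0$ retain the whole input and therefore cannot be parallel compatible (that would demand broadcasting). The idea I would use instead is to let the target supply its own compatible pre-processing. Given an arbitrary parallel compatible target $\overline{\cJ}$, I would choose $Q=\overline{\cK}\otimes\cH_{\text{out}}$, use a single trivial pre-processing outcome $a_0$, and set
\begin{align}
\Phi^{\prime j}_{a}(\rho)=\delta_{a,a_0}\,\ket{0}\bra{0}_{\cH}\otimes\hat\Gamma_{\overline{\mathbf{J}}^j}(\rho),
\end{align}
so that the ancilla carries the full action $\hat\Gamma_{\overline{\mathbf{J}}^j}(\rho)=\sum_{c'}\overline\Phi^j_{c'}(\rho)\otimes\ket{c'}\bra{c'}$ of the target, including its outcome. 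Because $\overline{\cJ}$ is parallel compatible, the channels $\{\hat\Gamma_{\overline{\mathbf{J}}^j}\}$ are parallel compatible, and prepending $\ket{0}\bra{0}$ keeps $\{\mathbf{J}^j\}$ in $\mathscr{I}_{PC}$. I would then take the post-processing $\tilde\Phi^{j,b}_c=\tr_{\cK}\otimes\mathbbm{I}_{\overline{\cK}}\otimes\bra{c}(\cdot)\ket{c}$, which discards the $\cI$-output and keeps the stored result only on the matching outcome; using $\sum_b\tr[\Phi^{a_0}_b(\ket{0}\bra{0})]=1$, Eq. \eqref{Eq:free_operation_para_comp} reduces to $\overline\Phi^j_c$, giving reachability.

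The hardest point is precisely this step in item 2: guaranteeing that the chosen pre-processing set lies in $\mathscr{I}_{PC}$. The resolution, and the conceptual heart of the theorem, is that the target's parallel compatibility is exactly the property certifying the pre-processing to be free, so both source and target being free is what makes the construction close up. A secondary care point, in item 1, is the tensor-factor bookkeeping: I would verify explicitly that the conditioned application of $\cI$ on separate copies is a bona fide post-processing and that summing the ancillary outcomes against the partial trace reproduces the single-branch map, rather than leaning on the closure statement alone.
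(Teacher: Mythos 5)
Your item 1 is correct and is essentially the paper's argument, mildly streamlined. The paper first builds the explicit parallel joint instrument $\Lambda_{(b_1,\ldots,b_n)}=\sum_{a_1,\ldots,a_n}\bigotimes_{j}(\Phi^{a_j}_{b_j}\otimes\mathbbm{I}_{Q})\circ\Phi'_{(a_1,\ldots,a_n)}$ for the intermediate set $\{\{\sum_a(\Phi^a_b\otimes\mathbbm{I}_Q)\circ\Phi^{\prime j}_a\}_b\}_j$, and only then invokes closure of parallel compatibility under post-processing \cite{Mitra_char_quantifying_incomp_inst} to attach the $\tilde{\Phi}^{j,b}_c$. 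You instead fold stages two and three into a single post-processing of $\cJ$ with conditional instruments $P^{j,a}_c=\sum_b\tilde\Phi^{j,b}_c\circ(\Phi^a_b\otimes\mathbbm{I}_Q)$ (valid, since $\sum_c P^{j,a}_c$ is trace preserving) and cite the closure result once; your explicit joint instrument $\overline\Gamma_{\vec c}$ is the paper's construction with the post-processing absorbed, and your marginal verification matches the paper's. Both you and the paper use only the parallel compatibility of the pre-processing set $\cJ$ here, not that of $\cI$.

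For item 2 you depart from the paper, and rightly so: this is where the paper's own proof has a genuine gap, which you correctly identified. The paper recycles the recipe of Theorems \ref{Th:free_op_weak_ent_break} and \ref{Th:free_op_weak_incom_break}, taking every pre-processing branch to be $\Gamma_0(\sigma)=\ket{0}\bra{0}\otimes\sigma$. But the resulting pre-processing set, $n$ copies of a channel that keeps the entire input intact on the $Q$ register, is \emph{not} parallel compatible for $n\geq 2$: a parallel joint channel would have to reproduce the input state $\sigma$ on every $Q_j$ marginal, i.e., broadcast $\sigma$, which is impossible for non-commuting inputs. Since the notion of ``transformation of the form of Eq.~\eqref{Eq:free_operation_para_comp}'' must include the requirement $\cJ\in\mathscr{I}_{PC}$ (item 1's proof needs the joint instrument $\Phi'_{(a_1,\ldots,a_n)}$ of $\cJ$), the paper's item-2 construction is not an admissible free transformation. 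Symptomatically, the paper's construction never uses parallel compatibility of the target $\overline{\cJ}$, so if it were admissible it would equally map free sets to parallel \emph{incompatible} sets, contradicting item 1. Your repair is sound: the channels $\hat\Gamma_{\overline{\mathbf{J}}^j}$ inherit a parallel joint channel from the joint instrument of $\overline{\cJ}$ by appending the classical registers $\ket{c_j}\bra{c_j}$ to its branches; prepending the fixed state $\ket{0}\bra{0}_{\cH}$ preserves compatibility; and your readout post-processing recovers $\overline{\Phi}^j_c$ via $\sum_b\tr[\Phi^{a_0}_b(\ket{0}\bra{0})]=1$. In short, your proof of item 1 matches the paper, while your proof of item 2 fixes it, precisely because you noticed that the target's own compatibility must certify the pre-processing as free.
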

\begin{proof}
Let us denote $\cH \otimes Q$ as $\mathfrak{R}$. Clearly $\Phi^{\prime  j,k_1}_l:\cL(\overline{\cH})\rightarrow\cL(\mathfrak{R}_j)$ where $\mathfrak{R}_j=\mathfrak{R}$ for all $j,k_1$. 
    If $\cJ^{k_1}$ is a set of parallel compatible instruments, then we know that
    \begin{align}
        \Phi_l^{\prime j,k_1}=\Phi_{l_j}^{\prime j,k_1}=\sum_{\{l_1,\ldots,l_n\}\setminus l_j}\tr_{\{\mathfrak{R}_1,\ldots,\mathfrak{R}_n\}\setminus\mathfrak{R}_j}\circ[\Phi^{\prime k_1}_{(l_1,\ldots,l_n)}],
    \end{align}

    where in the L.H.S. we have written $l$ instead of $l_j$ as there is no chance of confusion and $\tr_{{\{\mathfrak{R}_1,\ldots,\mathfrak{R}_n\}\setminus\mathfrak{R}_j}}$ means taking trace over all the Hilbert spaces except $\mathfrak{R}_j$. Note that $\Phi^{\prime k_1}_{(l_1,l_2,l_3,.....,l_n)}:\cL(\overline{\cH})\rightarrow\cL(\bigotimes_{j=1}^n\mathfrak{R}_j)$ is the joint instrument for the set of instruments $\cJ^{k_1}$.

    Then
    \begin{align}
        \Lambda^{j,k_1}_b:&=\sum_l(\Phi^{*l}_b \otimes \mathbbm{I}_Q)\circ\Phi^{\prime j,k_1}_l\nonumber\\&=\sum_{l_j}(\Phi^{*l_j}_{b_j} \otimes \mathbbm{I}_Q)\circ\Phi^{\prime j,k_1}_{l_j}\nonumber\\
        &=\sum_{\{l_1,\ldots,l_n\}} (\Phi^{*l_j}_{b_j} \otimes \mathbbm{I}_Q)\circ\tr_{{\{\mathfrak{R}_1,\ldots,\mathfrak{R}_n\}\setminus\mathfrak{R}_j}}\circ[\Phi^{\prime k_1}_{(l_1,\ldots,l_n)}]
    \end{align}

    Let us consider an instrument given as:
    \begin{align}
        \Lambda^{k_1}_{(b_1,\ldots,b_n)}=\sum_{\{l_1,\ldots,l_n\}} \bigotimes_{j=1}^{n}(\Phi^{*l_j}_{b_j} \otimes \mathbbm{I}_Q)\circ\Phi^{\prime k_1}_{(l_1,\ldots,l_n)}
    \end{align}
    If we denote $\cK\otimes Q$ as $\mathfrak{R}^{\prime}$ then we have $\Lambda^{k_1}_{(b_1,\ldots,b_n)}:\cL(\overline{\cH})\rightarrow\cL(\bigotimes_{j=1}^n\mathfrak{R}^{\prime}_j)$. We can see that
    \begin{align}
         &\sum_{\{b_1,\ldots,b_n\}\setminus b_j}\tr_{{\{\mathfrak{R}^{\prime}_1,\ldots,\mathfrak{R}^{\prime}_n\}\setminus\mathfrak{R}^{\prime}_j}}\circ[\Lambda^{k_1}_{(b_1,\ldots,b_n)}]\nonumber\\
         &=\sum_{\{b_1,\ldots,b_n\}\setminus b_j}\tr_{{\{\mathfrak{R}^{\prime}_1,\ldots,\mathfrak{R}^{\prime}_n\}\setminus\mathfrak{R}^{\prime}_j}}\circ[\sum_{\{l_1,\ldots,l_n\}} \bigotimes_{j=1}^{n}(\Phi^{*l_j}_{b_j} \otimes \mathbbm{I}_Q)\circ\Phi^{\prime k_1}_{(l_1,\ldots,l_n)}]\nonumber\\
         &=\tr_{{\{\mathfrak{R}^{\prime}_1,\ldots,\mathfrak{R}^{\prime}_n\}\setminus\mathfrak{R}^{\prime}_j}}\circ[\sum_{\{l_1,\ldots,l_n\}}\sum_{\{b_1,\ldots,b_n\}\setminus b_j} \bigotimes_{j=1}^{n}(\Phi^{*l_j}_{b_j} \otimes \mathbbm{I}_Q)\circ\Phi^{\prime k_1}_{(l_1,\ldots,l_n)}]\nonumber\\
         &=\sum_{l_j}(\Phi^{*l_j}_{b_j} \otimes \mathbbm{I}_Q)\circ\sum_{\{l_1,\ldots,l_n\}\setminus l_j}\tr_{{\{\mathfrak{R}_1,\ldots,\mathfrak{R}_n\}\setminus\mathfrak{R}_j}}\circ[\Phi^{\prime k_1}_{(l_1,\ldots,l_n)}]\nonumber\\
         &=\sum_{l_j}(\Phi^{*l_j}_{b_j} \otimes \mathbbm{I}_Q)\circ\Phi^{\prime j,k_1}_{l_j}\nonumber\\
         &=\sum_l(\Phi^{*l}_b \otimes \mathbbm{I}_Q)\circ\Phi^{\prime j,k_1}_l=\Lambda^{j,k_1}_b.
    \end{align}
    Thus, we see that the set of instruments $\{\{\Lambda_b^{j,k_1}=\sum_l(\Phi^{*l}_b \otimes \mathbbm{I}_Q)\circ\Phi^{\prime j,k_1}_l\}\}$ is also parallel compatible.

    All that is left to prove is that the set of instruments $\{\{\sum_{b}\tilde{\Phi}^{j,b,k_1}_c\circ\Lambda_b^{j,k_1}\}_c\}_{j\in\{1,\ldots,n\}}$ is also parallel compatible. We observe that this is just post-processing of the set of parallel compatible instruments $\{\{\Lambda_b^{j,k_1}\}_b\}_{j\in\{1,\ldots,n\}}$ with the set of sets of instruments $\{\tilde{\cI}^{j,k_1}\}$. From Remark \hyperref[Rem:one]{1}, we already know that post-processing is a free transformation for parallel compatibility. Hence, the set of instruments $\{\{\sum_{b}\tilde{\Phi}^{j,b,k_1}_c\circ\Lambda_b^{j,k_1}\}_c\}_{j\in\{1,\ldots,n\}}$ is parallel compatible with the joint instrument 
    \begin{align}
         &\zeta^{k_1}_{(c_1,\ldots,c_n)}=\sum_{\{b_1,\ldots,b_n\}}\bigotimes_{j=1}^{n}(\tilde{\Phi}^{j,b_j,k_1}_{c_j})\circ\Lambda^{k_1}_{(b_1,\ldots,b_n)}\nonumber\\
         &=\sum_{\{b_1,\ldots,b_n\}}\sum_{\{l_1,\ldots,l_n\}} \bigotimes_{j=1}^{n}\Big((\tilde{\Phi}^{j,b_j,k_1}_{c_j})\circ(\Phi^{*l_j}_{b_j} \otimes \mathbbm{I}_Q)\Big)\circ\Phi^{\prime k_1}_{(l_1,\ldots,l_n)},
    \end{align}
    such that
    \begin{align}
        \sum_{\{c_1,c_2,\ldots,c_n\}\setminus c_j}\tr_{\{\overline{\cK}_{1},\overline{\cK}_{2},\ldots,\overline{\cK}_{n}\}\setminus \overline{\cK}_j}\circ\zeta^{k_1}_{(c_1,c_2,\ldots,c_n)}=\sum_{b}\tilde{\Phi}^{j,b,k_1}_c\circ\Lambda_b^{j,k_1}.
    \end{align}
    Let us focus on the second term in Eq. \eqref{Eq:free_operation_para_comp}. If we have a parallel compatible set of instruments $\{\{\Phi^a_b\}_b\}_a$ then $\{\{\Phi^{\pi(a)}_b\}_b\}_a$ is (trivially) parallel compatible and therefore, there exists a joint instrument $\{\Phi_{(b_1,b_2,\ldots,b_m)}\}\in\mathscr{I}(\cH,\bigotimes_{j=1}^m\cK_j)$ such that
\begin{align}
    \sum_{\{b_1,b_2,\ldots,b_m\}\setminus b_j}\tr_{\{\cK_{1},\cK_{2},\ldots,\cK_{m}\}\setminus \cK_j}\circ\Phi_{(b_1,b_2,\ldots,b_m)}=\hat{\Phi}^{j}_{b_j}.
\end{align}
with $j\in\{1,\ldots,m\}$.
Next, consider an instrument $ \{ \Big(\Phi_{(b_1,b_2,\ldots,b_m)}\otimes(\bigotimes_{j=m+1}^n\hat{\Phi}^{j}_{b_j})\Big)\circ\tilde{\Delta}^{k_2}\}$. We consider two cases:\\
\textbf{Case (i):} When we have $j\leq m$, we can write\\
\begin{align}
   \sum_{\{b_1,b_2,\ldots,b_n\}\setminus b_j}&\tr_{\{\cK_1,\cK_2,\ldots,\cK_n\}\setminus \cK_j} \circ\Big(\Phi_{(b_1,b_2,\ldots,b_m)}\otimes(\bigotimes_{j=m+1}^n\hat{\Phi}^{j}_{b_j})\Big)\circ\tilde{\Delta}^{k_2}\nonumber\\
   &=\sum_{\{b_1,b_2,\ldots,b_m\}\setminus b_j}\tr_{\{\cK_1,\cK_2,\ldots,\cK_m\}\setminus \cK_j} \nonumber\\
   &\quad\qquad\circ\Big(\sum_{\{b_{m+1},b_{m+2},\ldots,b_n\}}\tr_{\{\cK_{m+1},\cK_{m+2},\ldots,\cK_n\}}\nonumber\\
   &\quad\quad\qquad\circ\Big(\Phi_{(b_1,b_2,\ldots,b_m)}\otimes(\bigotimes_{j=m+1}^n\hat{\Phi}^{j}_{b_j})\Big)\circ\tilde{\Delta}^{k_2}\Big)\nonumber\\
   &=\sum_{\{b_1,b_2,\ldots,b_m\}\setminus b_j}\tr_{\{\cK_1,\cK_2,\ldots,\cK_m\}\setminus \cK_j} \nonumber\\
   &\quad\qquad\circ\Big(\Phi_{(b_1,b_2,\ldots,b_m)}\circ(\tr_{\{\cH_{m+1},\cH_{m+2},\ldots,\cH_n\}}\circ\tilde{\Delta}^{k_2})\Big)\nonumber\\
   &=\hat{\Phi}^{j}_b\circ\Delta^{k_2}
\end{align}

\textbf{Case (ii):} When we have $j> m$, we can write\\
\begin{align}
   \sum_{\{b_1,b_2,\ldots,b_n\}\setminus b_j}&\tr_{\{\cK_1,\cK_2,\ldots,\cK_n\}\setminus \cK_j} \circ\Big(\Phi_{(b_1,b_2,\ldots,b_m)}\otimes(\bigotimes_{j=m+1}^n\hat{\Phi}^{j}_{b_j})\Big)\circ\tilde{\Delta}^{k_2}\nonumber\\
   &=\sum_{\{b_{m+1},b_{m+2},\ldots,b_n\}\setminus b_j}\tr_{\{\cK_{m+1},\cK_{m+2},\ldots,\cK_n\}\setminus \cK_j}\nonumber\\
   &\quad\qquad\circ\Big(\sum_{\{b_1,b_2,\ldots,b_m\}}\tr_{\{\cK_1,\cK_2,\ldots,\cK_m\}}\nonumber\\
   &\quad\quad\qquad\circ\Big(\Phi_{(b_1,b_2,\ldots,b_m)}\otimes(\bigotimes_{j=m+1}^n\hat{\Phi}^{j}_{b_j})\Big)\circ\tilde{\Delta}^{k_2}\Big)\nonumber
   \end{align}
   \begin{align}
   &=\sum_{\{b_{m+1},b_{m+2},\ldots,b_n\}\setminus b_j}\tr_{\{\cK_{m+1},\cK_{m+2},\ldots,\cK_n\}\setminus \cK_j}\nonumber\\
   &\quad\qquad\circ\Big(\bigotimes_{j=m+1}^n\hat{\Phi}^{j}_{b_j}\Big)\circ(\tr_{\cH}\circ\tilde{\Delta}^{k_2})\nonumber\\
   &=\hat{\Phi}^{j}_b\circ\tr_{\{\cH,\cH_{m+1},\cH_{m+2},\ldots,\cH_{n}\}\setminus \cH_j}\circ\tilde{\Delta}^{k_2}\nonumber\\
   &=\hat{\Phi}^{j}_b\circ\Delta^{j,k_2}
\end{align}
 Thus $\{ \Big(\Phi_{(b_1,b_2,\ldots,b_m)}\otimes(\bigotimes_{j=m+1}^n\hat{\Phi}^{j}_{b_j})\Big)\circ\tilde{\Delta}^{k_2}\}$ is the joint instrument for the set of instruments $\{\{\hat{\Phi}^{j}_b\circ\tilde{\Delta}^{j,k_2}\}_b\}_{j\in\{1,\ldots,n\}}$ and hence they are parallel compatible. Furthermore, notice that the instrument $\{\{\sum_{b}\tilde{\Gamma}_c^{j,b,k_2}\circ\Phi^{j}_b\circ\tilde{\Delta}^{j,k_2}\}_c\}_{j\in\{1,\ldots,n\}}$
is the set of instruments obtained by post-processing the set of parallel compatible instruments $\{\{\hat{\Phi}^{j}_b\circ\tilde{\Delta}^{j,k_2}\}\}$ with the set of sets of instruments $\{\tilde{\cX}^{j,k_2}\}$ and hence, by Remark \hyperref[Rem:one]{1}, it is parallel compatible for all $k_2$.
     
Now, we know that an arbitrary convex combination of sets of parallel compatible instruments belonging to $\mathscr{I}(\overline{\cH},\overline{\cK})$ is a set of parallel compatible instruments in $\mathscr{I}(\overline{\cH},\overline{\cK})$\cite{Mitra_char_quantifying_incomp_inst,Leevi_incomp_inst}.  Thus the set of instruments $\overline{\cJ}$ is parallel compatible. Hence, the given transformation in Eq. (\ref{Eq:free_operation_para_comp}) transforms one free set of instruments to another free set of instruments \textit{i.e.} it can be considered as a free transformation of the resource theory of parallel compatibility.

   Next thing we show is that for two given arbitrary sets of parallel compatible instruments, there exists a transformation of the form given in Eq. \eqref{Eq:free_operation_para_comp} that transforms one set of the given pair to the other set of the same given pair. Consider $\cI=\{\mathbf{I}^a=\{\Phi^{a}_b\}\in \mathscr{I}_{PC}(\cH,\cK)\}$ is a given arbitrary set of parallel compatible quantum instruments. Our goal is to show that it can be transformed into another given arbitrary set of parallel compatible quantum instruments $\overline{\cJ}=\{\overline{\mathbf{J}}^j=\{\overline{\Phi}^j_c\}\in\mathscr{I}_{PC}(\overline{\cH},\overline{\cK})\}$ through a transformation of the form in Eq. \eqref{Eq:free_operation_para_comp}. Without loss of generality, we can assume the outcome set of $\overline{\mathbf{J}}^j$ to be the same for all $j$, as we can always append zero CP maps to the instruments with smaller outcome sets. Let us consider:
    \begin{align}
        \Phi^{\prime  j,1}_{l}&=\Gamma_l\circ\overline{\Phi}^j_l\nonumber\\
        \tilde\Phi^{j,b,1}_c&=(\tr_{\cK}\otimes\mathbbm{I}_{\overline{\cK}}\otimes\Upsilon_c),\qquad~\forall~j,b ,
    \end{align}
    where $\Gamma_l:\cL(\overline{\cK})\rightarrow\cL(\cH\otimes Q)$ with $\overline{\cK}\otimes\cH_{\Omega_{\overline{\mathbf{J}}^j}}=Q$ such that for all $\sigma\in\cL(\overline{\cK}),~ \Gamma_l(\sigma)= \ket{0}\bra{0}\otimes\sigma\otimes\ket{l}\bra{l}$  and clearly, $\mathbbm{I}_{Q}=\mathbbm{I}_{\overline{\cK}}\otimes\mathbbm{I}_{\Omega_{\overline{\mathbf{J}}^j}}$. Similarly $\Upsilon_c(\omega):=\bra{c}\omega\ket{c}$ for all $\omega\in\cH_{\Omega_{\overline{\mathbf{J}}^j}}$ where $\langle c \vert l \rangle=\delta_{c,l}$. Thus,or all $\sigma\in\cL(\overline{\cK})$, we have
    \begin{align}
        \sum_{l,b}\tilde{\Phi}^{j,b,1}_c&\circ(\Phi^{*l}_b\otimes\mathbbm{I}_{Q})\circ\Phi^{\prime  j,1}_{l}(\sigma)\nonumber\\
        =&\sum_{l,b}\tilde{\Phi}^{j,b,1}_c\circ(\Phi^{*l}_b\otimes\mathbbm{I}_{Q})(\ket{0}\bra{0}\otimes\overline{\Phi}^j_l(\sigma)\otimes\ket{l}\bra{l})\nonumber\\
        =&\sum_{l,b}\tilde{\Phi}^{j,b,1}_c(\Phi^{*l}_b(\ket{0}\bra{0})\otimes\overline{\Phi}^j_l(\sigma)\otimes\ket{l}\bra{l})\nonumber\\
        =&\sum_l \delta_{c,l}(\tr_{\cK}[(\Phi^{*l}(\ket{0}\bra{0})]\otimes\overline{\Phi}^j_l(\sigma))\nonumber\\
        =&\overline{\Phi}^j_c(\sigma).
    \end{align}
    Then, by choosing $q(k_1)=\delta_{k_1,1}$ with $q(k_2)=0 ~\forall~k_2$, it can be easily shown that
    \begin{align}
        =\sum_{k_1}q(k_1)\sum_{a,b}&\tilde{\Phi}^{j,b,k_1}_c\circ(\Phi^{a}_b\otimes\mathbbm{I}_{Q})\circ\Phi^{\prime  j,k_1}_{a}\nonumber\\
       &+\sum_{k_2}q(k_2)\sum_{b}\tilde{\Gamma}_c^{j,b,k_2}\circ\Phi^{j}_b\circ\tilde{\Delta}^{j,k_2} =\overline{\Phi}^j_c.
    \end{align}
    Thus, the transformation of the form given in Eq. \eqref{Eq:free_operation_para_comp} transforms a given arbitrary set of weak entanglement-breaking instruments $\cI=\{\mathbf{I}^a=\{\Phi^{a}_b\}\in\mathscr{I}_{PC}(\cH,\cK)\}$ to another given arbitrary set of weak entanglement-breaking instruments $\overline{\cJ}=\{\overline{\mathbf{J}}^j=\{\overline{\Phi}^j_c\}\in\mathscr{I}_{PC}(\overline{\cH},\overline{\cK})\}$
\end{proof}
\begin{theorem}
    $\widehat{\cD}$ is monotonically non-increasing under the free transformations of parallel compatibility.\label{Th:Dist_mon_PC}
\end{theorem}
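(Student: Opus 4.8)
The plan is to exploit the fact that the free transformation of parallel compatibility, Eq.~\eqref{Eq:free_operation_para_comp}, has exactly the same single-term structure as the free transformations of strong entanglement preservability (Eq.~\eqref{Eq:free_operation_weak_ent_break}) and strong incompatibility preservability (Eq.~\eqref{Eq:free_operation_weak_incom_break}). Hence I would reuse that argument verbatim: I will show that the induced channel $\hat\Gamma_{\overline{\mathbf{J}}^j}$ of each transformed instrument can be rewritten as a generalized supermap of the form Eq.~\eqref{gensupermap} acting on the controlled implementation of the input set $\cI$, and then invoke the contractivity of $\overline{\cD}$, and therefore of $\widehat{\cD}$, under such supermaps, Eq.~\eqref{Eq:supermap_contractive_distance}. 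Crucially, no property specific to parallel compatibility is needed here; the preceding theorem already established that Eq.~\eqref{Eq:free_operation_para_comp} maps free objects to free objects, and monotonicity of the distance is a purely structural consequence of writing the map as a supermap.

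First I would introduce the symbolic notation $\overline{\cJ}=\cV[\cI]$ and expand
\begin{align}
\hat\Gamma_{\overline{\mathbf{J}}^j}(\rho)=\sum_c\overline{\Phi}^j_c(\rho)\otimes\ket{c}\bra{c}=\sum_c\sum_{a,b}\tilde{\Phi}^{j,b}_c\circ(\Phi^a_b\otimes\mathbbm{I}_Q)\circ\Phi^{\prime j}_a(\rho)\otimes\ket{c}\bra{c}.\nonumber
\end{align}
Next I would introduce three quantum channels playing the roles of the pre-processing map $\Theta^j_{pre}$, the controlled implementation $\Sigma_{\hat\Gamma_{\cI}}$, and the post-processing map $\Theta^j_{post}$. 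The pre-processing channel encodes the free instrument $\{\Phi^{\prime j}_a\}$ together with a swap routing the classical index into the control register, $\Theta^j_{pre}(\rho)=\mathbbm{I}_{\overline{\cH}}\otimes\mathbf{SWAP}_{Q\leftrightarrow\cH_{\Omega_{\mathbf{J}^j}}}\big(\sum_{a''}\Phi^{\prime j}_{a''}(\rho)\otimes\ket{a''}\bra{a''}\big)$; the controlled implementation is $\Sigma_{\hat\Gamma_{\cI}}(\sigma)=\sum_{a'}\hat\Gamma_{\mathbf{I}^{a'}}(\bra{a'}\sigma\ket{a'})\otimes\ket{a'}\bra{a'}$ with $\hat\Gamma_{\mathbf{I}^{a'}}=\sum_{b'}\Phi^{a'}_{b'}\otimes\ket{b'}\bra{b'}$; and the post-processing map applies $\{\tilde\Phi^{j,b}_c\}$ after reading off both indices, $\Theta^j_{post}(\omega)=\sum_{c,a,b}\tilde\Phi^{j,b}_c(\bra{b,a}\omega\ket{b,a})\otimes\ket{c}\bra{c}$. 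This is precisely the construction already used in the strong-preservability proofs.

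I would then verify by direct composition that $\hat\Gamma_{\overline{\mathbf{J}}^j}(\rho)=\Theta^j_{post}\circ(\Sigma_{\hat\Gamma_{\cI}}\otimes\mathbbm{I}_{\overline{\mathfrak{R}}})\circ\Theta^j_{pre}(\rho)$, so that the whole map is an instance of the generalized supermap in Eq.~\eqref{gensupermap}. Finally, taking two input sets $\cI_1,\cI_2$ with $\tilde\cI_i=\cV[\cI_i]$ and applying Eq.~\eqref{Eq:supermap_contractive_distance} yields $\widehat{\cD}(\tilde\cI_1,\tilde\cI_2)=\widehat{\cD}(\cV[\cI_1],\cV[\cI_2])\le\widehat{\cD}(\cI_1,\cI_2)$, which is the claim.

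The main obstacle is a bookkeeping one rather than a conceptual one: I must thread the index registers in $\Theta^j_{pre}$, $\Sigma_{\hat\Gamma_{\cI}}$, and $\Theta^j_{post}$ consistently, so that the $\mathbf{SWAP}$ correctly lines up the control index of $\Sigma_{\hat\Gamma_{\cI}}$ with the classical output $a$ of $\{\Phi^{\prime j}_a\}$, and so that the read-out in $\Theta^j_{post}$ reproduces the double sum over $a,b$ exactly. Once the Hilbert-space labels are matched, the contractivity result does all the work and no new inequality is required, exactly as in the strong incompatibility preservability case.
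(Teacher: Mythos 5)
Your proposal is correct and follows essentially the same route as the paper: the paper's proof likewise rewrites $\hat\Gamma_{\overline{\mathbf{J}}^j}$ as $\Theta^j_{post}\circ(\Sigma_{\hat\Gamma_{\cI}}\otimes\mathbbm{I}_{\overline{\mathfrak{R}}})\circ\Theta^j_{pre}$ using exactly the swap-based pre-processing, controlled implementation, and read-out post-processing channels you describe, thereby exhibiting the free transformation as an instance of Eq.~\eqref{gensupermap}, and then concludes via the contractivity property in Eq.~\eqref{Eq:supermap_contractive_distance}. Your observation that no property specific to parallel compatibility is needed (monotonicity being a purely structural consequence of the single-term supermap form, identical to the strong-preservability cases) is also consistent with how the paper argues.
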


A detailed proof is provided in Appendix \ref{App:Dist_mon_PC}. As a result of the above theorem, using Proposition \ref{Propsi:res_meas_prop} and \ref{Propsi:res_meas_ext_prop}, we can also conclude that the distance-based resource measures in Eqs. \eqref{Eq:Def_res_meas} and \eqref{Eq:Def_res_ext_meas} are valid resource measures for the resource theory of parallel incompatibility. These are denoted as $\mathbbm{R}_{PI}$ and $\overline{\mathbbm{R}}_{PI}$, respectively.

\subsubsection{Hierarchies among resource measures}
Consider two sets of objects (here, each object is a set of instruments with an arbitray input Hilbert space $\cH$ and an aribtrary output Hilbert space $\cK$) $X(\cH,\cK)$ and $Y(\cH,\cK)$. Consider two generic resource theories $\mathbf{RT}_X$ with the set of free objects $X(\cH,\cK)$ and $\mathbf{RT}_Y$ with the set of free objects $Y(\cH,\cK)$. Let $\mathbbm{R}_X$, and $\overline{\mathbbm{R}}_X$ be the resource measures corresponding to the resource theory $\mathbf{RT}_X$ and $\mathbbm{R}_Y$, and $\overline{\mathbbm{R}}_Y$ be the resource measures corresponding to the resource theory $\mathbf{RT}_Y$. Then the following result holds.

\begin{proposition} 
    Consider a set of quantum instrument $\cI\in\mathscr{I}(\cH_A,\cK_A)$. Then the following statements are true: \label{Proposi:QI_hierarchy}

    \begin{enumerate}
        \item if $X(\cH_A,\cK_A)\subseteq Y(\cH_A,\cK_A)$, then $\mathbbm{R}_X(\cI)\geq \mathbbm{R}_Y(\cI)$, and
        \item if $X(\cH_A\otimes\cH_B,\cK_A)\subseteq Y(\cH_A\otimes\cH_B,\cK_A)~\forall \cH_B$, then $\overline{\mathbbm{R}}_X(\cI)\geq\overline{\mathbbm{R}}_Y(\cI)$.
    \end{enumerate}

\end{proposition}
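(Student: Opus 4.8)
The plan is to exploit the elementary order-theoretic fact that minimizing a fixed function over a larger feasible set can only decrease the optimal value. Both parts reduce to this single observation applied to the distance-based measures defined in Eqs.~\eqref{Eq:Def_res_meas} and \eqref{Eq:Def_res_ext_meas}, using that the free sets of $\mathbf{RT}_X$ and $\mathbf{RT}_Y$ are, by construction, exactly $X$ and $Y$.

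For the first part, I would write out the two measures explicitly as $\mathbbm{R}_X(\cI)=\min_{\cJ\in X(\cH_A,\cK_A)}\widehat{\cD}(\cI,\cJ)$ and $\mathbbm{R}_Y(\cI)=\min_{\cJ\in Y(\cH_A,\cK_A)}\widehat{\cD}(\cI,\cJ)$. Since the hypothesis gives $X(\cH_A,\cK_A)\subseteq Y(\cH_A,\cK_A)$, every $\cJ$ feasible for the $X$-minimization is also feasible for the $Y$-minimization. Hence the minimizer achieving $\mathbbm{R}_X(\cI)$ is a candidate in the larger problem, forcing $\mathbbm{R}_Y(\cI)\leq\mathbbm{R}_X(\cI)$, which is the claimed inequality. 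No property of $\widehat{\cD}$ beyond the fact that it is a well-defined non-negative quantity is required.

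For the second part, I would reuse the notation $\mathbbm{K}_X(\cI,\cH_B)$ and $\mathbbm{K}_Y(\cI,\cH_B)$ (mirroring the proof of Proposition~\ref{Propsi:res_meas_ext_prop}) for the inner minimizations over the free sets in $\cH_{AB}$ at a fixed ancilla $\cH_B$, so that $\overline{\mathbbm{R}}_X(\cI)=\inf_{\cH_B}\mathbbm{K}_X(\cI,\cH_B)$ and likewise for $Y$. For each fixed $\cH_B$, the hypothesis $X(\cH_A\otimes\cH_B,\cK_A)\subseteq Y(\cH_A\otimes\cH_B,\cK_A)$ yields, by the identical set-inclusion argument of the first part, the pointwise inequality $\mathbbm{K}_X(\cI,\cH_B)\geq\mathbbm{K}_Y(\cI,\cH_B)$. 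Since this holds for every choice of $\cH_B$ and the infimum is monotone under pointwise domination, applying $\inf_{\cH_B}$ to both sides gives $\overline{\mathbbm{R}}_X(\cI)\geq\overline{\mathbbm{R}}_Y(\cI)$.

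The argument is purely order-theoretic, so there is no genuine analytic obstacle. The only points demanding care are bookkeeping the direction of the inequality---a \emph{larger} free set yields a \emph{smaller} resource value---and, in the second part, ensuring the inclusion hypothesis is imposed \emph{uniformly} over all ancillas $\cH_B$. This uniformity is precisely what lets the pointwise inequality survive the infimum; were the inclusion to hold only for some ancilla dimensions, the final $\inf_{\cH_B}$ step could fail, which is why the hypothesis of the second part is stated for all $\cH_B$.
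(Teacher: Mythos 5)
Your proposal is correct and follows essentially the same route as the paper's proof: the set-inclusion argument (the $X$-minimizer is feasible for the $Y$-minimization) for part 1, and the pointwise inequality $\mathbbm{K}_X(\cI,\cH_B)\geq\mathbbm{K}_Y(\cI,\cH_B)$ for each $\cH_B$ followed by taking the infimum for part 2. Your closing remark about the inclusion needing to hold uniformly over all ancillas $\cH_B$ correctly identifies why the hypothesis of part 2 is stated that way.
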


\begin{proof}
    Let $\cJ^*$ be the set of instruments for which the minimum occurs in Eq. \eqref{Eq:Def_res_meas} (for the set of free objects $X(\cH_A,\cK_A)$ i.e., for the resource theory $\mathbf{RT}_X$). Now, note that $X(\cH_A,\cK_A)\subseteq Y(\cH_A,\cK_A)$. Therefore, as $\cJ^*\in X(\cH_A,\cK_A)$, we have $\cJ^*\in Y(\cH_A,\cK_A)$. Therefore, we have 

    \begin{align}
        \mathbbm{R}_X(\cI)=&\widehat{\cD}(\cI,\cJ^*)\nonumber\\
         \geq&\min_{\tilde{\cJ}\in Y(\cH_A,\cK_A)}~\widehat{\cD}(\cI,\tilde{\cJ})\nonumber\\
        \geq&\mathbbm{R}_Y(\cI),
    \end{align}
    Now, we have to prove the second statement. By assumption, we have $X(\cH_A\otimes\cH_B,\cK_A)\subseteq Y(\cH_A\otimes\cH_B,\cK_A)~\forall \cH_B$. Let for an arbitrary $\cH_B$, $\cJ^*_{(\cH_{AB},\cK_A)}$ be the set of instruments for which the minimum occurs in Eq. \eqref{Eq:K_min} (for the set of free objects $X(\cH_{AB},\cK_A)$ i.e., for the resource theory $\mathbf{RT}_X$) where $\cH_{AB}=\cH_A\otimes\cH_B$. As $\cJ^*_{(\cH_{AB},\cK_A)}\in X(\cH_{AB},\cK_A)$, we have $\cJ^*_{(\cH_{AB},\cK_A)}\in Y(\cH_{AB},\cK_A)$.  Then

     \begin{align}
      \mathbbm{K}_X(\cI,\cH_B)=&\widehat{\cD}(\widehat{\cI}_{(\cH_{AB},\cK_A)},\cJ^*_{(\cH_{AB},\cK_A)})\nonumber\\
      \geq&\min_{\tilde{\cJ}_{(\cH_{AB},\cK_A)}\in Y(\cH_{AB},\cK_A)}~\widehat{\cD}(\cI,\tilde{\cJ}_{(\cH_{AB},\cK_A)})\nonumber\\
        \geq&\mathbbm{K}_Y(\cI,\cH_B)\label{Eq:K_ineq}
  \end{align}
  where we have used the subscript $X$ and $Y$ to indicate the resource theory (i.e.,$\mathbf{RT}_X$ or $\mathbf{RT}_Y$) we are talking about. Note that Eq. \eqref{Eq:K_ineq} is valid for an arbitrary $\cH_B$ and hence is valid for all $\cH_B$. Therefore,

  \begin{align}
      \mathbbm{K}_X(\cI,\cH_B)\geq&\mathbbm{K}_Y(\cI,\cH_B)~\forall \cH_B\nonumber\\
      or,~\inf_{\cH_B}\mathbbm{K}_X(\cI,\cH_B)\geq&\inf_{\cH_B}\mathbbm{K}_Y(\cI,\cH_B)\nonumber\\
      or,~\overline{\mathbbm{R}}_X(\cI)\geq&\overline{\mathbbm{R}}_Y(\cI).
  \end{align}
\end{proof}

Now, suppose $X(\cH,\cK)$ and $Y(\cH,\cK)$ are the collection of all finite subsets of $\overline{X}(\cH,\cK)$ and $\overline{Y}(\cH,\cK)$ respectively for arbitrary $\cH$ and $\cK$. Then $\overline{X}(\cH,\cK)\subseteq \overline{Y}(\cH,\cK)$ implies $X(\cH,\cK)\subseteq Y(\cH,\cK)$. Therefore, from Proposition \ref{Proposi:QI_hierarchy}, we have the following corollary.

\begin{corollary}
    Suppose $X(\cH,\cK)$ and $Y(\cH,\cK)$ are the collection of all finite subsets of $\overline{X}(\cH,\cK)$ and $\overline{Y}(\cH,\cK)$ respectively for arbitrary $\cH$ and $\cK$ and consider a set of quantum instrument $\cI\in\mathscr{I}(\cH_A,\cK_A)$. Then the following statements are true: \label{Coro:QI_hierarchy}

    \begin{enumerate}
        \item if $\overline{X}(\cH_A,\cK_A)\subseteq \overline{Y}(\cH_A,\cK_A)$, then $\mathbbm{R}_X(\cI)\geq \mathbbm{R}_Y(\cI)$, and
        \item if $\overline{X}(\cH_A\otimes\cH_B,\cK_A)\subseteq \overline{Y}(\cH_A\otimes\cH_B,\cK_A)~\forall \cH_B$, then $\overline{\mathbbm{R}}_X(\cI)\geq\overline{\mathbbm{R}}_Y(\cI)$.
    \end{enumerate}
\end{corollary}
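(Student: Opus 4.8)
The plan is to reduce the corollary directly to Proposition \ref{Proposi:QI_hierarchy}, whose two statements already package all the analytic content (the contractivity of $\widehat{\cD}$ under free transformations and the structure of the minimisations defining $\mathbbm{R}$ and $\overline{\mathbbm{R}}$). The only genuinely new ingredient is the elementary observation that passing from a ``master'' class of instruments to the collection of all of its subsets is monotone with respect to inclusion. So the first step I would carry out is to prove the set-theoretic implication already flagged in the paragraph preceding the statement: if $\overline{X}(\cH,\cK)\subseteq\overline{Y}(\cH,\cK)$, then $X(\cH,\cK)\subseteq Y(\cH,\cK)$. This is immediate: an arbitrary element $\cJ\in X(\cH,\cK)$ is by definition a subset of instruments satisfying $\cJ\subseteq\overline{X}(\cH,\cK)$; combining this with the hypothesis gives $\cJ\subseteq\overline{X}(\cH,\cK)\subseteq\overline{Y}(\cH,\cK)$, so $\cJ$ is a subset of $\overline{Y}(\cH,\cK)$, i.e.\ $\cJ\in Y(\cH,\cK)$. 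Hence $X(\cH,\cK)\subseteq Y(\cH,\cK)$.

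Second, I would prove statement 1. Applying the implication just established at the fixed Hilbert spaces $(\cH_A,\cK_A)$, the hypothesis $\overline{X}(\cH_A,\cK_A)\subseteq\overline{Y}(\cH_A,\cK_A)$ yields $X(\cH_A,\cK_A)\subseteq Y(\cH_A,\cK_A)$. This is exactly the hypothesis of part 1 of Proposition \ref{Proposi:QI_hierarchy} (with $X$, $Y$ now read as sets of free objects), so invoking that part gives $\mathbbm{R}_X(\cI)\geq\mathbbm{R}_Y(\cI)$ for the given $\cI\in\mathscr{I}(\cH_A,\cK_A)$, as required.

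Third, statement 2 follows in the same way but with the extension to enlarged input spaces. The hypothesis now reads $\overline{X}(\cH_A\otimes\cH_B,\cK_A)\subseteq\overline{Y}(\cH_A\otimes\cH_B,\cK_A)$ for every ancillary space $\cH_B$; applying the monotonicity implication at each fixed $\cH_B$ upgrades this to $X(\cH_A\otimes\cH_B,\cK_A)\subseteq Y(\cH_A\otimes\cH_B,\cK_A)$ for all $\cH_B$, which is precisely the hypothesis of part 2 of Proposition \ref{Proposi:QI_hierarchy}. That part then delivers $\overline{\mathbbm{R}}_X(\cI)\geq\overline{\mathbbm{R}}_Y(\cI)$.

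I do not expect a serious obstacle here, since the result is a corollary and the heavy lifting is done by Proposition \ref{Proposi:QI_hierarchy}. The one point requiring care is keeping the two levels of containment straight: the inclusion at the level of \emph{master classes} of instruments ($\overline{X}\subseteq\overline{Y}$) versus the inclusion at the level of \emph{collections of free objects} ($X\subseteq Y$, where each element is itself a set of instruments). The proof hinges on verifying that the former lifts to the latter, and on confirming that ``all subsets'' is genuinely the indexing used so that the transitivity $\cJ\subseteq\overline{X}\subseteq\overline{Y}$ applies verbatim to every free object $\cJ$; once this is made explicit, everything else is a direct citation of the preceding proposition.
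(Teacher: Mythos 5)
Your proposal is correct and follows exactly the paper's route: the paper's entire justification (given in the paragraph preceding the corollary) is the observation that $\overline{X}(\cH,\cK)\subseteq\overline{Y}(\cH,\cK)$ implies $X(\cH,\cK)\subseteq Y(\cH,\cK)$ when $X$ and $Y$ are the collections of all subsets of the respective master classes, followed by a direct appeal to Proposition \ref{Proposi:QI_hierarchy}. Your write-up merely makes the transitivity argument $\cJ\subseteq\overline{X}\subseteq\overline{Y}$ and the per-$\cH_B$ application for statement 2 explicit, which matches the paper's intent verbatim.
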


Now, note that from Fig. \ref{fig_venn}, we have we have $\mathscr{I}_{TP}(\cH, \cK)\subseteq\mathscr{I}_{EB}(\cH,\cK)\subseteq \mathscr{I}_{WEB}(\cH,\cK)\subseteq\mathscr{I}_{WIB}(\cH,\cK) $ and $\mathscr{I}_{TP}(\cH, \cK)\subseteq\mathscr{I}_{EB}(\cH,\cK)\subseteq \mathscr{I}_{IB}(\cH,\cK)\subseteq\mathscr{I}_{WIB}(\cH,\cK) $. Therefore, from Proposition \ref{Proposi:QI_hierarchy} and Corollary \ref{Coro:QI_hierarchy}, we have the following set of inequalities.

\begin{align}
    \mathbbm{R}_{IP}\geq \mathbbm{R}_{EP} \geq \mathbbm{R}_{SEP} \geq \mathbbm{R}_{SMIP},\\
    \overline{\mathbbm{R}}_{IP}\geq \overline{\mathbbm{R}}_{EP} \geq \overline{\mathbbm{R}}_{SEP} \geq \overline{\mathbbm{R}}_{SMIP},\\
    \mathbbm{R}_{IP}\geq \mathbbm{R}_{EP} \geq \mathbbm{R}_{MIP} \geq \mathbbm{R}_{SMIP},\\
    \overline{\mathbbm{R}}_{IP}\geq \overline{\mathbbm{R}}_{EP} \geq \overline{\mathbbm{R}}_{MIP} \geq \overline{\mathbbm{R}}_{SMIP}.
\end{align}

\section{Relation of the resource measure $\mathbbm{R}$ with advantage in an information-theoretic task}\label{Sec:inf_theo_task}
Before we explain the information-theoretic task that we study here, we would like to mention the following important point. 
It is known that resource robustness $\mathscr{R}(.)$ has an information-theoretic interpretation, in the sense that the quantity $\mathscr{R}^*(.)=1+\mathscr{R}(.)$ is directly related to the advantage provided by a resourceful instrument compared to the best free instrument possible in a guessing game\cite{chitambar_PID}. From Proposition \ref{Propsi:res_rob_and_weight} we know that
\begin{align}
   \mathscr{R}^*(.)\geq\frac{2}{2- \mathbbm{R}(.)}.
\end{align}
Thus, the resource measure $\mathbbm{R}$ is also related to the advantage provided by a resourceful instrument compared to the best free instrument possible in the same. Similarly, since the quantity $\mathscr{W}^*(.)=1+\mathscr{W}(.)$, related to weight $\mathscr{W}$, is directly related to the advantage provided by a resourceful instrument compared to the best free instrument possible in exclusion task \cite{Uola_2020_Adv}, $\mathbbm{R}(.)$ can also be related to the same through Proposition \ref{Propsi:res_rob_and_weight}. Furthermore, as Proposition \ref{Propsi:res_rob_and_weight} implies $\overline{\mathbbm{R}}(.)\leq\mathbbm{R}(.)$, a similar connection can also be established for the other resource measure $\overline{\mathbbm{R}}(.)$.

Now, we are ready to explain the information-theoretic task that we study in this section. For simplicity, we consider an arbitrary resource theory where each object is a single instrument instead of being a set of instruments containing multiple instruments. Therefore, the distance defined in Eq. \eqref{Eq:def_dist_meas_of_set_inst} is just the diamond distance. Consider two parties, Alice and Bob, equipped with the Hilbert spaces $\cH_A,\cH_B$ respectively. Here we have $\cH_A=\cH_B=\cH$. Now, consider the following task

\begin{itemize}
\item Alice and Bob share a bipartite qubit state $\rho_{AB}\in\cS(\cH_A\otimes\cH_B)$.  
\item Alice has an instrument $\mathbf{I}=\{\Lambda_x\}\in\mathscr{I}(\cH,\cK)$. She applies this instrument to her portion of $\rho_{AB}$. Conditioned on the outcome $x$ of the instrument, the obtained un-normalised state can be written as $\rho^{x}_{AB}=(\Lambda_x\otimes\mathbbm{1}_{\cH_B})(\rho_{AB})$. 
\item She then sends her output subsystem to Bob. Bob's goal is to determine the index $x$ with the highest success probability.
\item In order to accomplish that, in general, Bob performs a $d_{\Omega_{\mathbf{I}}}+1$ outcome global measurement $M=\{M(x)\}\in\mathscr{M}(\cK\otimes\cH)$  such that $x=1,2,.....,d_{\Omega_{\mathbf{I}}},0$. Here $0$ corresponds to the round where Bob is inconclusive about the outcome $x$ and simply discards it. Note that one can choose $M(0)=0$ as a special case. The average success probability is given as
\begin{align}
    P_{succ}=\sum_x\tr\left[(\Lambda_x\otimes \mathbbm{1}_{\cH_B})(\rho_{AB})M(x)\right].
\end{align}
\item Alice chooses a free instrument $\mathbf{J}^*=\{\phi_x^*\}\in\cF_{(\cH,\cK)}$ such that 
\begin{align}
    P^{free}_{succ}=&\sum_x\tr\left[(\phi_x^*\otimes \mathbbm{1}_{\cH_B})(\rho_{AB})M(x)\right]\nonumber\\
    =&\max_{\mathbf{J}\in\cF_{(\cH,\cK)}}\sum_x\tr\left[(\phi_x\otimes \mathbbm{1}_{\cH_B})(\rho_{AB})M(x)\right].\label{Eq:Succ_Prob_free}
\end{align}
Thus, the quantity 
\begin{align}
P_{succ}-P^{free}_{succ}=&\sum_x\tr\left[(\Lambda_x\otimes \mathbbm{1})(\rho_{AB})M(x)\right]\nonumber\\
&-\sum_x\tr\left[(\phi_x^*\otimes \mathbbm{1}_{\cH_B})(\rho_{AB})M(x)\right]\nonumber\\
=&\min_{\mathbf{J}\in\cF_{(\cH,\cK)}}\sum_x\tr\left[((\Lambda_x-\phi_x)\otimes \mathbbm{1})(\rho_{AB})M(x)\right],\label{Eq:Game}
\end{align}
quantifies how much the given instrument provides an advantage over the best free instrument in this information-theoretic task.
\end{itemize}

For a single instrument, the resource measure in Eq. \eqref{Eq:Def_res_meas} is written as
\begin{align}
    \mathbbm{R}(\mathbf{I})=&\min_{\mathbf{J}\in\cF_{(\cH,\cK)}}~\cD_{\Diamond}(\mathbf{I},\mathbf{J})\nonumber\\
    \leq&\cD_{\Diamond}(\mathbf{I},\mathbf{J}^*)\nonumber\\
    =&||\Gamma_{\mathbf{I}}-\Gamma_{\mathbf{J}^*}||_{\Diamond}\nonumber\\
    =&||\sum_x(\Lambda_x-\phi_x^*)\otimes\ket{x}\bra{x}||_{\Diamond}\nonumber\\
    =&\max_{\rho_{AB}}||\sum_x((\Lambda_x-\phi_x^*)\otimes\mathbbm{I}_{\cH_B})(\rho_{AB})\otimes\ket{x}\bra{x}||_{1}\nonumber\\
    =&\max_{\rho_{AB}}\max_{0\le M\le \mathbbm{1}_{\cY}}2\tr[\sum_x((\Lambda_x-\phi_x^*)\otimes\mathbbm{I}_{\cH_B})(\rho_{AB})\otimes\ket{x}\bra{x}M]\nonumber\\
    =&\max_{\rho_{AB}}\max_{0\le M\le \mathbbm{1}_{\cY}}2\sum_x\tr[((\Lambda_x-\phi_x^*)\otimes\mathbbm{I}_{\cH_B})(\rho_{AB})M(x)]\nonumber\\
\end{align}
where 
\begin{align}
    0\leq M(x)=\tr_{\cH_{\Omega_{\mathbf{I}}}}[(\mathbbm{1}_{\cK\otimes\cH_{B}}\otimes |x\rangle\langle x|)M]\leq \mathbbm{1}_{\cK\otimes\cH_B}.
\end{align}
 Here, $\cY=\cK\otimes\cH_B\otimes\cH_{\Omega_{\mathbf{I}}}$, and $\mathbf{J}^*$ is the instrument for which the maximum occurs in Eq. \eqref{Eq:Succ_Prob_free}.  In the sixth line, we have used the definition of trace distance with $M\in\cL(\cK\otimes\cH_{B})$. Next, multiplying and dividing the R.H.S by $d_{\Omega_{\mathbf{I}}}=dim(\cH_{\Omega_{\mathbf{I}}})$, the dimension of the Hilbert space $\cH_{\Omega_{\mathbf{I}}}$, we get

\begin{align}
    \mathbbm{R}(\mathbf{I})\leq&\max_{\rho_{AB}}\max_{0\le M\le \mathbbm{1}_{\cY}}2d_{\Omega_{\mathbf{I}}}\sum_x\tr[((\Lambda_x-\phi_x^*)\otimes\mathbbm{I}_{\cH_B})(\rho_{AB})\frac{M(x)}{d_{\Omega_{\mathbf{I}}}}]\nonumber\\
    =&\max_{\rho_{AB}}\max_{0\le M\le \mathbbm{1}_{\cY}}2d_{\Omega_{\mathbf{I}}}\sum_x\tr[((\Lambda_x-\phi_x^*)\otimes\mathbbm{I}_{\cH_B})(\rho_{AB})M^{\prime}(x)].
\end{align}

Now as $0\le M\le \mathbbm{1}_{\cY}$ we have
\begin{align}
    0\le M^{\prime}(x)\leq \frac{\mathbbm{1}_{\cK\otimes\cH_B}}{d_{\Omega_{\mathbf{I}}}},\qquad\sum_x M^{\prime}(x) \leq \,\mathbbm{1}_{\cK\otimes\cH_B}.\label{Eq:Res_Meas}
\end{align}
Defining $M^{\prime}(0):=(\mathbbm{1}_{\cK\otimes\cH_B}-\sum_x M^{\prime}(x))$, we see that $M^{\prime}:=\{M^{\prime}(x),M^{\prime}(0)\}$ form a valid measurement. Let us denote the set of all such measurements as defined above using Eq. \eqref{Eq:Res_Meas} as $\mathscr{M}^{\prime}(\cK\otimes\cH_B)$. It is clear that $\mathscr{M}^{\prime}(\cK\otimes\cH_B)\subseteq\mathscr{M}(\cK\otimes\cH_B)$.  Thus, we have 

\begin{align}
\frac{\mathbbm{R}(\mathbf{I})}{2d_{\Omega_{\mathbf{I}}}}\leq&\max_{\rho_{AB}}\max_{0\le M\le \mathbbm{1}_{\cY}}\sum_x\tr[((\Lambda_x-\phi_x^*)\otimes\mathbbm{I}_{\cH_B})(\rho_{AB})M^{\prime}(x)]\nonumber\\
=&\max_{\rho_{AB}}\max_{M^{\prime}\in\mathscr{M}^{\prime}(\cK\otimes\cH_B)}\sum_x\tr[((\Lambda_x-\phi_x^*)\otimes\mathbbm{I}_{\cH_B})(\rho_{AB})M^{\prime}(x)]\nonumber\\
\leq&\max_{\rho_{AB}}\max_{M^{\prime}\in\mathscr{M}(\cK\otimes\cH_B)}\sum_x\tr[((\Lambda_x-\phi_x^*)\otimes\mathbbm{I}_{\cH_B})(\rho_{AB})M^{\prime}(x)]
\end{align}

This, using Eq. \eqref{Eq:Game}, implies
\begin{align}
    \frac{\mathbbm{R}(\mathbf{I})}{2d_{\Omega_{\mathbf{I}}}}\leq\max_{\rho_{AB}}\max_{M^{\prime}\in\mathscr{M}(\cK\otimes\cH_B)}P_{succ}-P^{free}_{succ}\label{Eq:res_meas_task_lb}
\end{align}.

Note that the R.H.S. of Eq. \eqref{Eq:res_meas_task_lb} can be considered as \emph{the advantage} provided by the instrument $\mathbf{I}$ compared to the best free instrument possible in our information-theoretic task, and the resource measure $\mathbbm{R}(\mathbf{I})$ is the lower bound of the advantage up to a scaling factor $\frac{1}{2d_{\Omega_{\mathbf{I}}}}$. Furthermore, note that if $\mathbf{I}$ is a free instrument, the R.H.S. of Eq. \eqref{Eq:res_meas_task_lb} is clearly zero. But if $\mathbf{I}$ is a resourceful instrument, then $\mathbbm{R}(\mathbf{I})>0$ and therefore, the R.H.S. of Eq. \eqref{Eq:res_meas_task_lb} is lower bounded by a strictly positive number and hence, the advantage provided by $\mathbf{I}$ is nonzero. Moreover, from Proposition \ref{Propsi:res_rob_and_weight}, as we have $\overline{\mathbbm{R}}(.)\leq\mathbbm{R}(.)$, the resource measure $\overline{\mathbbm{R}}(.)$ also provides the lower bound on the advantage, up to a scaling factor, in a similar way.

\section{Conclusion}
\label{Sec:conc}
In this work, we have tried to characterize and quantify some instrument-based quantum resources, to study their hierarchies, and to construct their resource theories. We provided a detailed framework for a variety of instrument-based resource theories. Our work offers a deep insight into these instrument-based resources. In the following, we pointwise summarize our results.
\begin{enumerate}
    \item At first, we have discussed the quantification and distance measures for generic instrument-based resources.
    \item We have tried to provide a method to compute our distance measure and resource measures using SDP.
    \item We then tried to motivate and develop resource theories for various instrument-based quantum resources, highlighting their significance as valuable operational resources. Detailed descriptions are provided as follows:
    \begin{enumerate}
        \item We have tried to construct the resource theory of information-preservability, considering sets of trash-and-prepare instruments as free objects.
        \item We have tried to construct the resource theory of entanglement-preservability and the resource theory of strong entanglement-preservability, considering sets of entanglement-breaking instruments and sets of weak entanglement-breaking instruments as free objects, respectively.
        \item We have tried to construct the resource theory of incompatibility-preservability and the resource theory of strong incompatibility-preservability, considering sets of incompatibility-breaking instruments and sets of weak incompatibility-breaking instruments as free objects, respectively.
        \item The resource theory of traditional incompatibility has already been constructed in Ref. \cite{chitambar_PID}. We have shown that the distance measure $\widehat{\cD}$ is non-increasing under the free transformations of the resource theory of traditional compatibility.
        \item We have tried to construct the resource theory of parallel incompatibility considering sets of parallel compatible instruments as free objects.
    \end{enumerate}
    \item While exploring the above-mentioned resource theories, we have also studied the hierarchies among the free objects of these resource theories, which implied hierarchies among the resource measures.
    \item We have also tried to establish a relationship between our resource measures and the advantage in an information-theoretic task.

\end{enumerate}
    In short, we have tried to provide a \emph{complete framework} for a \emph{wide variety} of instrument-based resource theories. Our work opens up several research avenues. Here, we enlist some of those.

    \begin{enumerate}
        \item It is important to explore the one-shot and asymptotic conversion among resourceful objects under free transformations for all of the above-mentioned instrument-based resource theories.
        \item It is interesting to study resource-assisted transformation among resourceful objects under free transformations for all of the above-mentioned instrument-based resource theories.
        \item It is important to investigate whether at least some of the above-mentioned resource theories admit the notion of catalysis.
        \item It is also interesting to investigate whether optimal resources are equivalent under free transformations for the above-mentioned instrument-based resource theories.
        \item We know that there exists a notion of "layers of classicality" in the set of all compatible pairs of measurements, and they are non-convex \cite{heinnosari_layers_incom,arindam_layers_incom}. It will be worthwhile to explore whether analogous layers of classicality exist for both traditional and parallel incompatibility of instruments, and whether convex resource theories can be formulated for these layers.
        
    \end{enumerate}

\section{Acknowledgments}
The authors would like to thank Dr. Debashis Saha, Dr. Chandan Datta, and Prof. Sibasish Ghosh for their useful comments. AM acknowledges STARS (Grant No.
STARS/STARS-2/2023-0809), Government of India, for support.

\bibliography{references}

\appendix
\section{}
\subsection{Proof of Proposition \ref{Prop:Meas_Chan_Inst_Dist_order}}\label{App:Meas_Chan_Inst_Dist_order}
\begin{proof}
If we suppose that the maximum in Eq. \eqref{Eq:def_dist_meas_of_set_meas} occurs for $i=i^*$ then we have
    \begin{align}
        \widetilde{\cD}(\cM, \overline{\cM}):=&\overline{\cD}(\cG_{\cM},\cG_{\overline{\cM}})\nonumber\\
    =&\max_{i\in\{1,\ldots,n\}}\cD_{\Diamond}(\Gamma_{M_i},\Gamma_{\overline{M}_i})\nonumber\\
    =&\cD_{\Diamond}(\Gamma_{M_{i^*}},\Gamma_{\overline{M}_{i^*}})\nonumber\\
    =&\cD_{\Diamond}(M_{i^*},\overline{M}_{i^*})\nonumber\\
    \leq&\cD_{\Diamond}(\mathbf{I}_{i^*}, \overline{\mathbf{I}}_{i^*})\nonumber\\
    =&\cD_{\Diamond}(\hat{\Gamma}_{\mathbf{I}_{i^*}}, \hat{\Gamma}_{\overline{\mathbf{I}}_{i^*}})\nonumber\\
    \leq&\max_{i\in\{1,\ldots,n\}}\cD_{\Diamond}(\hat{\Gamma}_{\mathbf{I}_{i}}, \hat{\Gamma}_{\overline{\mathbf{I}}_{i}})\nonumber\\
    =&\widehat{\cD}(\cI, \overline{\cI}),
    \end{align}
where in the fourth line we have used Lemma \ref{Le:ins_chan_meas_dist_ineq}. In a similar way, using Eq. \eqref{Eq:def_dist_meas_of_set_chan} we can write
    \begin{align}
        \overline{\cD}(\cC,\overline{\cC}):=&\max_{i\in\{1,\ldots,n\}}\cD_{\Diamond}(\Phi_i,\overline{\Phi}_i)\nonumber\\
    =&\cD_{\Diamond}(\Phi_{i^*},\overline{\Phi}_{i^*})\nonumber\\
    \leq&\cD_{\Diamond}(\hat{\Gamma}_{\mathbf{I}_{i^*}}, \hat{\Gamma}_{\overline{\mathbf{I}}_{i^*}})\nonumber\\
    \leq&\max_{i\in\{1,\ldots,n\}}\cD_{\Diamond}(\hat{\Gamma}_{\mathbf{I}_{i}}, \hat{\Gamma}_{\overline{\mathbf{I}}_{i}})\nonumber\\
    =&\widehat{\cD}(\cI, \overline{\cI}).
    \end{align}
\end{proof}
\subsection{Proof of Theorem \ref{Th:Dist_cont_post_process}}\label{App:Dist_cont_post_process}
\begin{proof}
Consider two sets of instruments $\tilde{\cI}=\{\tilde{\mathbf{I}}^i=\{\tilde\Lambda^{i}_{y^i}\}\in\mathscr{I}(\cH,\tilde{\cK})\}^n_{i=1}$ and $\tilde{\cJ}=\{\tilde{\mathbf{J}}^i=\{\tilde\Phi^{i}_{y^i}\}\in\mathscr{I}(\cH,\tilde{\cK})\}^n_{i=1}$ and suppose that in Eq. (\ref{Eq:def_dist_meas_of_set_inst}) the maximum occurs for $i=i^*$. Then we can write
     \begin{align}
\widehat{\cD}(\tilde\cI, \tilde\cJ)=&\cD_{\Diamond}(\tilde{\mathbf{I}}^{i^*}, \tilde{\mathbf{J}}^{i^*})\nonumber\\
       =&\cD_{\Diamond}(\hat{\Gamma}_{\tilde{\mathbf{I}}^{i^*}}, \hat{\Gamma}_{\tilde{\mathbf{J}}^{i^*}}),\nonumber
    \end{align}
where 
\begin{align}
\hat{\Gamma}_{\tilde{\mathbf{I}}^{i^*}}=\sum_{y^{i^*}}\tilde{\Lambda}_{y^{i^*}}^{i^*}\otimes\ket{y^{i^*}}\bra{y^{i^*}},\label{Eq:chan_for_instr}\\
\hat{\Gamma}_{\tilde{\mathbf{J}}^{i^*}}=\sum_{y^{i^*}}\tilde{\Phi}_{y^{i^*}}^{i^*}\otimes\ket{y^{i^*}}\bra{y^{i^*}},\label{Eq:chan_for_instr_2}
\end{align}
with $j=1,2$ and $\{\ket{y^{i^*}}\}$ is an orthonormal basis in the Hilbert space $\cH_{\Omega_{\tilde{\mathbf{I}}^{i^*}}}$ and $\cH_{\Omega_{\tilde{\mathbf{J}}^{i^*}}}$. Suppose the instruments from sets $\tilde\cI$ and $\tilde\cJ$ can be post-processed from the sets $\cI=\{\mathbf{I}^i=\{\Lambda^{i}_{x^i}\}_{x^i}\in\mathscr{I}(\cH,\cK)\}_i$ and $\cJ=\{\mathbf{J}^i=\{\Phi^{i}_{x^i}\}_{x^i}\in\mathscr{I}(\cH,\cK)\}_i$ using the same sets of sets of instruments $\{\cP^i=\{\mathbf{P}^{i,x^{i}}=\{P^{i,x^{i}}_{y^i}\}\in\mathscr{I}(\cK,\tilde{\cK})\}\}$. Then
\begin{align}
\tilde\Lambda^{i}_{y^i}=\sum_{x^i}P^{i,x^{i}}_{y^i}\circ\Lambda^{i}_{x^i},\\
\tilde\Phi^{i}_{y^i}=\sum_{x^i}P^{i,x^{i}}_{y^i}\circ\Phi^{i}_{x^i}.
\end{align} 

Eq. (\ref{Eq:chan_for_instr}) and Eq. \ref{Eq:chan_for_instr_2}  can be rewritten as
\begin{align}
\hat{\Gamma}_{\tilde{\mathbf{I}}^{i^*}}=\Theta\circ\hat{\Gamma}_{\mathbf{I}^{i^*}},\\
\hat{\Gamma}_{\tilde{\mathbf{I}}^{i^*}}=\Theta\circ\hat{\Gamma}_{\mathbf{J}^{i^*}},
\end{align}

with 
\begin{align}
\hat{\Gamma}_{\mathbf{I}^{i^*}}=\sum_{x^{i^*}}\Lambda_{x^{i^*}}^{i^*}\otimes\ket{x^{i^*}}\bra{x^{i^*}},\\
\hat{\Gamma}_{\mathbf{J}^{i^*}}=\sum_{x^{i^*}}\Lambda_{x^{i^*}}^{i^*}\otimes\ket{x^{i^*}}\bra{x^{i^*}},
\end{align}
and $\{\ket{x^{i^*}}\}$ is an orthonormal basis in the Hilbert space $\cH_{\Omega_{\mathbf{I}_j^{i^*}}}$ and $\cH_{\Omega_{\mathbf{J}_j^{i^*}}}$. Here for $\Theta\in\mathscr{C}(\cK\otimes\cH_{\Omega_{\mathbf{I}^{i^*}}},\tilde{\cK}\otimes\cH_{\Omega_{\tilde{\mathbf{I}}^{i^*}}})$ and all $\rho\in\cL(\cK\otimes\cH_{\Omega_{\mathbf{I}^{i^*}}})$ we have
\begin{align}
    \Theta(\rho)=\sum_{x^{i^*},y^{i^*}}P^{i,x^{i^*}}_{y^{i^*}}(\bra{x^{i^*}}\rho\ket{x^{i^*}})\otimes\ket{y^{i^*}}\bra{y^{i^*}}.
\end{align}
We see that it is a special case of Eq. (\ref{gensupermap}). Thus using Eq. (\ref{Eq:supermap_contractive_distance}) we get
\begin{align}
    \widehat{\cD}(\tilde\cI, \tilde\cJ)=&\cD_{\Diamond}(\tilde{\mathbf{I}}^{i^*}, \tilde{\mathbf{J}}^{i^*})\nonumber\\
       =&\cD_{\Diamond}(\hat{\Gamma}_{\tilde{\mathbf{I}}^{i^*}}, \hat{\Gamma}_{\tilde{\mathbf{J}}^{i^*}})\nonumber\\
       \leq&\cD_{\Diamond}(\hat{\Gamma}_{\mathbf{I}^{i^*}}, \hat{\Gamma}_{\mathbf{J}^{i^*}})\nonumber\\
       \leq&\max_{i\in\{1,\ldots,n\}}\cD_{\Diamond}(\hat{\Gamma}_{\mathbf{I}_{i}}, \hat{\Gamma}_{\mathbf{J}_{i}})\nonumber\\
       =&\widehat{\cD}(\cI, \cJ)
\end{align}
\end{proof}
\subsection{Proof of Proposition \ref{Propsi:res_rob_and_weight}}\label{App:res_rob_and_weight}
\begin{proof}
    Let us assume that the minimum in Eq. (\ref{Eq:robustness}) occurs for $\tilde\cI=\{\tilde{\mathbf{I}}^{*i}\}$ and $\mathscr{R}(\cI)=r^*$. Then we can write
    \begin{align} &\Phi^i_a=\frac{\Lambda^i_a+r^*\tilde{\Lambda}^{*i}_a}{1+r^*}~\forall~i\nonumber\\ 
     &\hat{\Gamma}_{\mathbf{J}^i}=\frac{\hat{\Gamma}_{\mathbf{I}^i}+r^*\hat{\Gamma}_{\tilde{\mathbf{I}}^{*i}}}{1+r^*}~\forall~i\nonumber\\ &\hat{\Gamma}_{\mathbf{I}^i}-\hat{\Gamma}_{\mathbf{J}^i}=\frac{r^*(\hat{\Gamma}_{\mathbf{I}^i}-\hat{\Gamma}_{\tilde{\mathbf{I}}^{*i}})}{1+r^*} ~\forall~i\nonumber\\ &\vert\vert\hat{\Gamma}_{\mathbf{I}^i}-\hat{\Gamma}_{\mathbf{J}^i}\vert\vert_{\Diamond}=\frac{r^*}{1+r^*}\vert\vert\hat{\Gamma}_{\mathbf{I}^i}-\hat{\Gamma}_{\tilde{\mathbf{I}}^{*i}}\vert\vert_{\Diamond}~\forall~i\nonumber\\
    &\cD_{\Diamond}(\mathbf{I}^i, \mathbf{J}^i)\leq\frac{2r^*}{1+r^*}~\forall~i
    \end{align}
   where we have used the property that the diamond norm is upper-bounded by 2. Now we know
    \begin{align}
        \widehat{\cD}(\cI,\cJ)=&\max_i \cD_{\Diamond}(\mathbf{I}^i, \mathbf{J}^i)\nonumber\\
        =&\cD_{\Diamond}(\mathbf{I}^{i^*}, \mathbf{J}^{i^*})\nonumber\\
        \leq&\frac{2r^*}{1+r^*}.
    \end{align}
    where we have assumed that the maximum happens for $i=i^*$.
    
    Recalling the definition of resource measure in Eq. \eqref{Eq:Def_res_meas}
    \begin{align}
        \mathbbm{R}(\cI)=&\min_{\cJ\in\cF}~\widehat{\cD}(\cI,\cJ)\nonumber\\ =&\widehat{\cD}(\cI,\cJ^*),\nonumber\\
        \leq&\frac{2r^*}{1+r^*},
    \end{align}
    where $\cF$ is the set of free sets of quantum instruments. Remembering $\mathscr{R}(\cI)=r^*$ we get
    \begin{align}
         \mathbbm{R}(\cI)\leq\frac{2\mathscr{R}(\cI)}{1+\mathscr{R}(\cI)}.
    \end{align}

    Similarly, let us suppose in Eq. (\ref{Eq:weight}) the minimum occurs for $\tilde\cI=\{\tilde{\mathbf{I}}^{*i}\}, \tilde\cJ=\{\tilde{\mathbf{J}}^{*i}\}$ and $\mathscr{W}(\cI)=r^*$. Then we get
    \begin{align} &\Lambda^i_a=\frac{\Phi^{*i}_a+r^*\tilde{\Lambda}^{*i}_a}{1+r^*}~\forall~i\nonumber\\ &\hat{\Gamma}_{\mathbf{I}^i}=\frac{\hat{\Gamma}_{\mathbf{J}^{*i}}+r^*\hat{\Gamma}_{\tilde{\mathbf{I}}^{*i}}}{1+r^*}~\forall~i\nonumber\\ &\hat{\Gamma}_{\mathbf{I}^i}-\hat{\Gamma}_{\mathbf{J}^i}=\frac{r^*(\hat{\Gamma}_{\tilde{\mathbf{I}}^{*i}}-\hat{\Gamma}_{\mathbf{J}^{*i}})}{1+r^*}~\forall~i \nonumber\\ &\vert\vert\hat{\Gamma}_{\mathbf{I}^i}-\hat{\Gamma}_{\mathbf{J}^i}\vert\vert_{\Diamond}=\frac{r^*}{1+r^*}\vert\vert\hat{\Gamma}_{\tilde{\mathbf{I}}^{*i}}-\hat{\Gamma}_{\mathbf{J}^{*i}}\vert\vert_{\Diamond}~\forall~i\nonumber\\
    &\cD_{\Diamond}(\mathbf{I}^i, \mathbf{J}^i)\leq\frac{2r^*}{1+r^*}~\forall~i.
    \end{align}
    Following a similar procedure to the previous one, we can conclude
        \begin{align}
         \mathbbm{R}(\cI)\leq\frac{2\mathscr{W}(\cI)}{1+\mathscr{W}(\cI)}.
    \end{align}
Now from Eqs. \eqref{Eq:Def_res_meas} and \eqref{Eq:Def_res_ext_meas} it is clear that
\begin{align}
    \overline{\mathbbm{R}}(\cI)\leq\mathbbm{R}(\cI)
\end{align}
Therefore, we can conclude,
 \begin{align}
         \overline{\mathbbm{R}}(\cI)\leq\mathbbm{R}(\cI)\leq\min\Big\{\frac{2\mathscr{R}(\cI)}{1+\mathscr{R}(\cI)},\frac{2\mathscr{W}(\cI)}{1+\mathscr{W}(\cI)}\Big\}.
    \end{align}
\end{proof}
\section{}
\subsection{Proof of Theorem \ref{Th:Dist_monotone_EP}}\label{App:Dist_monotone_EP}
\begin{proof}
    According to Theorem \ref{Th:free_op_ent_break}, the free transformation of the resource theory of entanglement preservability is
     \begin{align}
       \overline{\Phi}^j_c=\sum_{k_1}q(k_1)\sum_{a,b}&\tilde{\Phi}^{j,b,k_1}_c\circ(\Phi^{a}_b\otimes\mathbbm{I}_{Q^{\prime}})\circ\Phi^{\prime * j,k_1}_{a}\nonumber\\
       +&\sum_{k_2}q(k_2)\sum_{a,b}\tilde{\Phi}^{*j,b,k_2}_c\circ(\Phi^{a}_b\otimes\mathbbm{I}_{Q})\circ\Phi^{\prime j,k_2}_{a}\nonumber\\
       +&\sum_{k_3}q(k_3)\sum_{a,b}\tilde{\Gamma}^{ j,b,k_3}_c\circ\Phi^{a}_b\circ\tilde{\Delta}^{ j,k_3}_a.
    \end{align}
    Symbolically, it can be written as $\overline{\cJ}=\sum_{k_1}q(k_1)\cV^{k_1}[\cI]+\sum_{k_2}q(k_2)\cV^{k_2}[\cI]+\sum_{k_3}q(k_3)\cV^{k_3}[\cI]:=\cW[\cI]$ where the set of instruments $\cV^{k_1}[\cI]=\{\cV^{k_1}[\cI]^{j,b}=\{\cV^{k_1}[\cI]^{j,b}_c=\sum_{a,b}\tilde{\Phi}^{j,b,k_1}_c\circ(\Phi^{a}_b\otimes\mathbbm{I}_{Q^{\prime}})\circ\Phi^{\prime * j,k_1}_{a}\}\}$, $\cV^{k_2}[\cI]=\{\cV^{k_2}[\cI]^{j,b}=\{\cV^{k_2}[\cI]^{j,b}_c=\sum_{a,b}\tilde{\Phi}^{*j,b,k_2}_c\circ(\Phi^{a}_b\otimes\mathbbm{I}_{Q})\circ\Phi^{\prime j,k_2}_{a}\}\}$ and $\cV^{k_3}[\cI]=\{\cV^{k_3}[\cI]^{j,b}=\{\cV^{k_3}[\cI]_c^{j,b}=\sum_{a,b}\tilde{\Gamma}^{ j,b,k_3}_c\circ\Phi^{a}_b\circ\tilde{\Delta}^{ j,k_3}_a\}\}$ . We can also write 
    \begin{align}
        \hat\Gamma_{\overline{\mathbf{J}}^j}(\rho)=&\sum_c\overline{\Phi}_c^j(\rho)\otimes\ket{c}\bra{c}\nonumber\\ =&\sum_c\Big(\sum_{k_1}q(k_1)\sum_{a,b}\tilde{\Phi}^{j,b,k_1}_c\circ(\Phi^{a}_b\otimes\mathbbm{I}_{Q^{\prime}})\circ\Phi^{\prime * j,k_1}_{a}\nonumber\\
       &\quad+\sum_{k_2}q(k_2)\sum_{a,b}\tilde{\Phi}^{*j,b,k_2}_c\circ(\Phi^{a}_b\otimes\mathbbm{I}_{Q})\circ\Phi^{\prime j,k_2}_{a}\nonumber\\
       &\quad+\sum_{k_3}q(k_3)\sum_{a,b}\tilde{\Gamma}^{ j,b,k_3}_c\circ\Phi^{a}_b\circ\tilde{\Delta}^{ j,k_3}_a\Big)\otimes\ket{c}\bra{c}\nonumber\\
       =&\sum_{k_1}q(k_1)\hat\Gamma_{\cV^{k_1}[\cI]^j}+\sum_{k_2}q(k_2)\hat\Gamma_{\cV^{k_2}[\cI]^j}+\sum_{k_3}q(k_3)\hat\Gamma_{\cV^{k_3}[\cI]^j}.
    \end{align}
     Consider the following quantum channels $\Theta_{pre}^{j,k_1}:\cL(\overline{\cH})\rightarrow\cL(\cH\otimes\cH_{\Omega_{\mathbf{J}^{j,k_1}}}\otimes Q^{\prime})$, $\Sigma_{\hat\Gamma_{\cI}}:\cL(\cH\otimes\cH_{\Omega_{\mathbf{J}^{j,k_1}}})\rightarrow\cL(\cK\otimes\cH_{\Omega_{\mathbf{I}^a}}\otimes\cH_{\Omega_{\mathbf{J}^{j,k_1}}})$, and $\Theta_{post}^{j,k_1}:\cL(\cK\otimes\cH_{\Omega_{\mathbf{I}^a}}\otimes\cH_{\Omega_{\mathbf{J}^{j,k_1}}}\otimes Q^{\prime})\rightarrow\cL(\overline{\cK}\otimes\cH_{\Omega_{\overline{\mathbf{J}}^{j,k_1}}})$ such that for all $\rho\in\cL(\overline{\cH})$, for all $\sigma\in\cL(\cH\otimes\cH_{\Omega_{\mathbf{J}^{j,k_1}}})$, and for all $\omega\in\cL(\cK\otimes\cH_{\Omega_{\mathbf{I}^a}}\otimes\cH_{\Omega_{\mathbf{J}^{j,k_1}}}\otimes Q^{\prime})$ we have
     \begin{align}
      \Theta_{pre}^{j,k_1}(\rho)=\mathbbm{I}_{\cH}\otimes\mathbf{SWAP}_{Q^{\prime}\leftrightarrow\cH_{\Omega_{\mathbf{J}^{j,k_1}}}}(\sum_{a''}\Phi^{\prime*j,k_1}_{a''}(\rho)\otimes\ket{a''}\bra{a''}),
        \end{align}
        where $\{\ket{a''}\}$ is the orthonormal basis of Hilbert space $\cH_{\Omega_{\mathbf{J}^{j,k_1}}}$.
         \begin{align}
            \Sigma_{\hat\Gamma_{\cI}}(\sigma)=\sum_{a'}\hat{\Gamma}_{\mathbf{I}^{a^\prime}}(\bra{a^\prime}\sigma\ket{a^\prime})\otimes\ket{a^\prime}\bra{a^\prime},
        \end{align}
        where $\{\ket{a^\prime}\}$ is the orthonormal basis of Hilbert space $\cH_{\Omega_{\mathbf{J}^{j,k_1}}}$, $\hat{\Gamma}_{\mathbf{I}^{a^\prime}}=\sum_{b^\prime}\Phi^{a^\prime}_{b^\prime}\otimes\ket{b^{\prime}}\bra{b^{\prime}}$, and $\{\ket{b^\prime}\}$ is the orthonormal basis of Hilbert space $\cH_{\Omega_{\mathbf{I}^{a^\prime}}}$.
        \begin{align}
            \Theta_{post}^{j,k_1}(\omega)=\sum_{c,a,b}\tilde\Phi^{j,b,k_1}_c(\bra{b,a}\omega\ket{b,a})\otimes\ket{c}\bra{c}.
        \end{align}
        Here $\{\ket{a}\}$ is the orthonormal basis of Hilbert space $\cH_{\Omega_{\mathbf{J}^{j,k_1}}}$, 
        $\{\ket{b}\}$ is the orthonormal basis of Hilbert space $\cH_{\Omega_{\mathbf{I}^a}}$, and $\{\ket{c}\}$ is the orthonormal basis of Hilbert space $\cH_{\Omega_{\overline{\mathbf{J}}^{j}}}$. Similarly consider
        \begin{align}
      \Theta_{pre}^{j,k_2}(\rho)=&\mathbbm{I}_{\cH}\otimes\mathbf{SWAP}_{Q\leftrightarrow\cH_{\Omega_{\mathbf{J}^{j,k_2}}}}(\sum_{a''}\Phi^{\prime j,k_2}_{a''}(\rho)\otimes\ket{a''}\bra{a''}),\nonumber\\
       \Theta_{post}^{j,k_2}(\omega)=&\sum_{c,a,b}\tilde\Phi^{*j,b,k_2}_c(\bra{b,a}\omega\ket{b,a})\otimes\ket{c}\bra{c},\nonumber\\
       \Theta_{pre}^{j,k_3}(\rho)=&\sum_{a''}\tilde{\Delta}^{ j,k_3}_{a''}(\rho)\otimes\ket{a''}\bra{a''},\nonumber\\
        \Theta_{post}^{j,k_3}(\omega)=&\sum_{c,a,b}\tilde{\Gamma}^{ j,b,k_3}_c(\bra{b,a}\omega\ket{b,a})\otimes\ket{c}\bra{c}.
        \end{align}
        
         Here we would like to mention that for the rest of the calculations, we have used the same variables $a'',a'$ and $a$ for the orthonormal bases of the Hilbert spaces $\cH_{\Omega_{\mathbf{J}^{j,k_2}}}$ and $\cH_{\Omega_{\mathbf{J}^{j,k_1}}}$ wherever necessary so that there is no confusion. It can then be easily verified that       
    \begin{align}
        \hat\Gamma_{\overline{\mathbf{J}}^j}(\rho)=\sum_{k_1} q(k_1)~\Theta^{j,k_1}_{post} &\circ(\Sigma_{\cI}\otimes\mathbbm{I}_{Q^{\prime}})\circ\Theta^{j,k_1}_{pre}\nonumber\\
       +&\sum_{k_2}q(k_2)~\Theta^{j,k_2}_{post} \circ(\Sigma_{\cI}\otimes\mathbbm{I}_{Q})\circ\Theta^{j,k_2}_{pre}\nonumber\\
       &\qquad\quad+\sum_{k_3}q(k_3)~\Theta^{j,k_3}_{post} \circ\Sigma_{\cI}\circ\Theta^{j,k_3}_{pre}.
    \end{align}
    
    Clearly, the quantum channels $\hat\Gamma_{\tilde{\cV}^{k_1}[\cI]^j}=\Theta^{j,k_1}_{post} \circ(\Sigma_{\cI}\otimes\mathbbm{I}_{Q^{\prime}})\circ\Theta^{ j,k_1}_{pre}$, $\hat\Gamma_{\cV^{k_2}[\cI]^j}=\Theta^{j,k_2}_{post} \circ(\Sigma_{\cI}\otimes\mathbbm{I}_{Q})\circ\Theta^{ j,k_2}_{pre}$, and $\hat\Gamma_{\cV^{k_3}[\cI]^j}=\Theta^{j,k_3}_{post} \circ\Sigma_{\cI}\circ\Theta^{ j,k_3}_{pre}$ .
   
   Now consider two sets of instruments $\tilde{\cI}_1$ and $\tilde{\cI}_2$ such that $\tilde{\cI}_i=\cW[\cI_i]$ for $i=1,2$. Then 
   \begin{align}
     \widehat{\cD}(\tilde\cI_1,\tilde\cI_2):=&\widehat{\cD}(\cW[\cI_1],\cW[\cI_2]),\nonumber\\
    \leq &\sum_{k_1}q(k_1)\widehat{\cD}(\cV^{k_1}[\cI_1],\cV^{k_1}[\cI_2])\nonumber\\
    &\qquad\qquad+\sum_{k_2}q(k_2)\widehat{\cD}(\cV^{k_2}[\cI_1],\cV^{k_2}[\cI_2]),\nonumber\\
    &\qquad\qquad\qquad+\sum_{k_3}q(k_3)\widehat{\cD}(\cV^{k_3}[\cI_1],\cV^{k_3}[\cI_2])\nonumber\\
    \leq &\sum_{k_1}q(k_1)\widehat{\cD}(\cI_1,\cI_2)+\sum_{k_2}q(k_2)\widehat{\cD}(\cI_1,\cI_2)\nonumber\\
    &\qquad\qquad\qquad+\sum_{k_3}q(k_3)\widehat{\cD}(\cI_1,\cI_2),\nonumber\\
    = & \widehat{\cD}(\cI_1,\cI_2).
\end{align}
Hence $\widehat{\cD}$ is monotonically non-increasing under the free transformations of entanglement preservability.
\end{proof}
\subsection{Proof of Theorem \ref{Th:Dist_monotone_SEP}}\label{App:Dist_monotone_SEP}
\begin{proof}
    The free transformation of the resource theory of strong entanglement preservability is of the form
     \begin{align}
       \overline{\Phi}^j_c=\sum_{k_1}q(k_1)\sum_{a,b}&\tilde{\Phi}^{j,b,k_1}_c\circ(\Phi^{a}_b\otimes\mathbbm{I}_{Q})\circ\Phi^{\prime  j,k_1}_{a}\nonumber\\
       &+\sum_{k_2}q(k_2)\sum_{a,b}\tilde{\Gamma}_c^{j,a,k_2}\circ\Phi^a_b\circ\tilde{\Delta}_a^{j,k_2}.
    \end{align}
    Symbolically, we represent it as $\overline{\cJ}=\sum_{k_1}q(k_1)\cV^{k_1}[\cI]+\sum_{k_2}q(k_2)\cV^{k_2}[\cI]:=\cW[\cI]$ where the CP trace non-increasing maps $\cV^{k_1}[\cI]=\{\cV^{k_1}[\cI]^{j,b}=\{\cV^{k_1}[\cI]^{j,b}_c=\sum_{a,b}\tilde{\Phi}^{j,b,k_1}_c\circ(\Phi^{a}_b\otimes\mathbbm{I}_{Q})\circ\Phi^{\prime j,k_1}_{a}\}\}$ and $\cV^{k_2}[\cI]=\{\cV^{k_2}[\cI]^{j}=\{\sum_{a,b}\tilde{\Gamma}_c^{j,a,k_2}\circ\Phi^a_b\circ\tilde{\Delta}_a^{j,k_2}\}\}$. We can also write 
    \begin{align}
        \hat\Gamma_{\overline{\mathbf{J}}^j}(\rho)=&\sum_c\overline{\Phi}_c^j(\rho)\otimes\ket{c}\bra{c}\nonumber\\ =&\sum_c\Big(\sum_{k_1}q(k_1)\sum_{a,b}\tilde{\Phi}^{j,b,k_1}_c\circ(\Phi^{a}_b\otimes\mathbbm{I}_{Q})\circ\Phi^{\prime  j,k_1}_{a}\nonumber\\
       &+\sum_{k_2}q(k_2)\sum_{a,b}\tilde{\Gamma}_c^{j,a,k_2}\circ\Phi^a_b\circ\tilde{\Delta}_a^{j,k_2}\Big)\otimes\ket{c}\bra{c}\nonumber\\
        =&\sum_{k_1}q(k_1)\hat\Gamma_{\cV^{k_1}[\cI]^j}+\sum_{k_2}q(k_2)\hat\Gamma_{\cV^{k_2}[\cI]^j}
    \end{align}
     Consider the following quantum channels $\Theta_{pre}^{j,k_1}:\cL(\overline{\cH})\rightarrow\cL(\cH\otimes\cH_{\Omega_{\mathbf{J}^{j,k_1}}}\otimes Q)$, $\Sigma_{\hat\Gamma_{\cI}}:\cL(\cH\otimes\cH_{\Omega_{\mathbf{J}^{j,k_1}}})\rightarrow\cL(\cK\otimes\cH_{\Omega_{\mathbf{I}^a}}\otimes\cH_{\Omega_{\mathbf{J}^{j,k_1}}})$, and $\Theta_{post}^{j,k_1}:\cL(\cK\otimes\cH_{\Omega_{\mathbf{I}^a}}\otimes\cH_{\Omega_{\mathbf{J}^{j,k_1}}}\otimes\cH_Q)\rightarrow\cL(\overline{\cK}\otimes\cH_{\Omega_{\overline{\mathbf{J}}^{j,k_1}}})$ such that for all $\rho\in\cL(\overline{\cH})$, for all $\sigma\in\cL(\cH\otimes\cH_{\Omega_{\mathbf{J}^{j,k_1}}})$, and for all $\omega\in\cL(\cK\otimes\cH_{\Omega_{\mathbf{I}^a}}\otimes\cH_{\Omega_{\mathbf{J}^{j,k_1}}}\otimes\cH_Q)$ we have
     \begin{align}
      \Theta_{pre}^{j,k_1}(\rho)=\mathbbm{I}_{\cH}\otimes\mathbf{SWAP}_{Q\leftrightarrow\cH_{\Omega_{\mathbf{J}^{j,k_1}}}}(\sum_{a''}\Phi^{\prime j,k_1}_{a''}(\rho)\otimes\ket{a''}\bra{a''}),
        \end{align}
        where $\{\ket{a''}\}$ is the orthonormal basis of Hilbert space $\cH_{\Omega_{\mathbf{J}^{j,k_1}}}$,
         \begin{align}
            \Sigma_{\hat\Gamma_{\cI}}(\sigma)=\sum_{a'}\hat{\Gamma}_{\mathbf{I}^{a^\prime}}(\bra{a^\prime}\sigma\ket{a^\prime})\otimes\ket{a^\prime}\bra{a^\prime},
        \end{align}
        where $\{\ket{a^\prime}\}$ is the orthonormal basis of Hilbert space $\cH_{\Omega_{\mathbf{J}^{j,k_1}}}$, $\hat{\Gamma}_{\mathbf{I}^{a^\prime}}=\sum_{b^\prime}\Phi^{a^\prime}_{b^\prime}\otimes\ket{b^{\prime}}\bra{b^{\prime}}$, and $\{\ket{b^\prime}\}$ is the orthonormal basis of Hilbert space $\cH_{\Omega_{\mathbf{I}^{a^\prime}}}$,
        \begin{align}
            &\Theta_{post}^{j,k_1}(\omega)=\sum_{c,a,b}\tilde\Phi^{j,b,k_1}_c(\bra{b,a}\omega\ket{b,a})\otimes\ket{c}\bra{c},
        \end{align}
        where $\{\ket{a}\}$ is the orthonormal basis of Hilbert space $\cH_{\Omega_{\mathbf{J}^{j,k_1}}}$, 
        $\{\ket{b}\}$ is the orthonormal basis of Hilbert space $\cH_{\Omega_{\mathbf{I}^a}}$, and $\{\ket{c}\}$ is the orthonormal basis of Hilbert space $\cH_{\Omega_{\overline{\mathbf{J}}^j}}$. Similarly considering
        \begin{align}
            \Theta_{pre}^{j,k_2}(\rho)=&\sum_{a''}\tilde{\Delta}^{ j,k_2}_{a''}(\rho)\otimes\ket{a''}\bra{a''},\nonumber\\
        \Theta_{post}^{j,k_2}(\omega)=&\sum_{c,a,b}\tilde{\Gamma}^{ j,a,k_3}_c(\bra{b,a}\omega\ket{b,a})\otimes\ket{c}\bra{c}.
        \end{align}
       Then, it can be easily verified that       
    \begin{align}
        \hat\Gamma_{\overline{\mathbf{J}}^j}(\rho)=\sum_{k_1} q(k_1)~\Theta^{j,k_1}_{post} &\circ(\Sigma_{\cI}\otimes\mathbbm{I}_{Q})\circ\Theta^{j,k_1}_{pre}\nonumber\\
       +&\sum_{k_2}q(k_2)~\Theta^{j,k_2}_{post} \circ\Sigma_{\cI}\circ\Theta^{j,k_2}_{pre}
    \end{align}
     Consider two quantum instruments of the form $\tilde{\cI}_1=\cW[\cI_1]$ and $\tilde{\cI}_2=\cW[\cI_2]$, then by using Eq. (\ref{Eq:supermap_contractive_distance}), we can write
    \begin{align}
    \widehat{\cD}(\tilde\cI_1,\tilde\cI_2):=&\widehat{\cD}(\cW[\cI_1],\cW[\cI_2]),\nonumber\\
    \leq &\sum_{k_1}q(k_1)\widehat{\cD}(\cV^{k_1}[\cI_1],\cV^{k_1}[\cI_2])\nonumber\\
    &\qquad\qquad+\sum_{k_2}q(k_2)\widehat{\cD}(\cV^{k_2}[\cI_1],\cV^{k_2}[\cI_2]),\nonumber\\
    \leq &\sum_{k_1}q(k_1)\widehat{\cD}(\cI_1,\cI_2)+\sum_{k_2}q(k_2)\widehat{\cD}(\cI_1,\cI_2)\nonumber\\
    = & \widehat{\cD}(\cI_1,\cI_2).
\end{align}
Hence $\widehat{\cD}$ is monotonically non-increasing under the free transformations of strong entanglement preservability.
\end{proof}

\subsection{Proof of Theorem \ref{Th:Dist_mono_IP}}\label{App:Dist_mono_IP}
\begin{proof}
    According to Theorem \ref{Th:free_op_incom_break}, the free transformation of incompatibility preservability is
      \begin{align}
       \overline{\Phi}^j_c=&\sum_{k_1}q(k_1)\sum_{a,b}p(a|j)\tilde{\Phi}^{j,b,k_1}_c\circ(\Phi^{a}_b\otimes\mathbbm{I}_{Q^{\prime}})\circ\Phi^{\prime * j,k_1}\nonumber\\
       &+\sum_{k_2}q(k_2)\sum_{a,b}\tilde{\Phi}^{*j,b,k_2}_c\circ(\Phi^{a}_b\otimes\mathbbm{I}_{Q})\circ\Phi^{\prime j,k_2}_{a}\nonumber\\
       &+\sum_{k_3}q(k_3)\sum_{a,b}\tilde{\Gamma}^{ j,a,k_3}_c\circ\Phi^{a}_b\circ\tilde{\Delta}^{ j,k_3}_a.
    \end{align}
    Symbolically, it can be written as $\overline{\cJ}=\sum_{k_1}q(k_1)\overline{\cV}^{k_1}[\cI]+\sum_{k_2}q(k_2)\cV^{k_2}[\cI]+\sum_{k_3}q(k_3)\cV^{k_3}[\cI]:=\cW[\cI]$ where the set of instruments $\overline{\cV}^{k_1}[\cI]:=\cV^{k_1}[\hat{\cI}]=\{\cV^{k_1}[\hat{\cI}]^{j,b}=\{\cV^{k_1}[\hat{\cI}]^{j,b}_c=\sum_{a,b}p(a|j)\tilde{\Phi}^{j,b,k_1}_c\circ(\Phi^{a}_b\otimes\mathbbm{I}_{Q^{\prime}})\circ\Phi^{\prime * j,k_1}\}\}$, $\cV^{k_2}[\cI]=\{\cV^{k_2}[\cI]^{j,b}=\{\cV^{k_2}[\cI]^{j,b}_c=\sum_{a,b}\tilde{\Phi}^{*j,b,k_2}_c\circ(\Phi^{a}_b\otimes\mathbbm{I}_{Q})\circ\Phi^{\prime j,k_2}_{a}\}\}$ and $\cV^{k_3}[\cI]=\{\cV^{k_3}[\cI]^{j,b}=\{\cV^{k_3}[\cI]_c^{j,b}=\sum_{a,b}\tilde{\Gamma}^{ j,a,k_3}_c\circ\Phi^{a}_b\circ\tilde{\Delta}^{ j,k_3}_a\}\}$ with $\hat{\cI}=\{\hat{\mathbf{I}}^j=\{\sum_a p(a|j)\Phi^a_b\}\}$. We can also write 
    \begin{align}
        \hat\Gamma_{\overline{\mathbf{J}}^j}(\rho)=&\sum_c\overline{\Phi}_c^j(\rho)\otimes\ket{c}\bra{c}\nonumber\\ =&\sum_c\Big(\sum_{k_1}q(k_1)\sum_{a,b}p(a|j)\tilde{\Phi}^{j,b,k_1}_c\circ(\Phi^{a}_b\otimes\mathbbm{I}_{Q^{\prime}})\circ\Phi^{\prime * j,k_1}\nonumber\\
       &\quad+\sum_{k_2}q(k_2)\sum_{a,b}\tilde{\Phi}^{*j,b,k_2}_c\circ(\Phi^{a}_b\otimes\mathbbm{I}_{Q})\circ\Phi^{\prime j,k_2}_{a}\nonumber\\
       &\quad+\sum_{k_3}q(k_3)\sum_{a,b}\tilde{\Gamma}^{ j,a,k_3}_c\circ\Phi^{a}_b\circ\tilde{\Delta}^{ j,k_3}_a\Big)\otimes\ket{c}\bra{c}\nonumber\\
       =&\sum_{k_1}q(k_1)\hat\Gamma_{\overline{\cV}^{k_1}[\cI]^j}+\sum_{k_2}q(k_2)\hat\Gamma_{\cV^{k_2}[\cI]^j}+\sum_{k_3}q(k_3)\hat\Gamma_{\cV^{k_3}[\cI]^j}.
    \end{align}
      We define a n-dimensional Hilbert space $\cH_{\hat{\cI}}$ with $n=|\hat{\cI}|$. Next, consider the following quantum channels $\Theta_{pre}^{j,k_1}:\cL(\overline{\cH})\rightarrow\cL(\cH\otimes\cH_{\hat{\cI}}\otimes Q^{\prime})$, $\Sigma_{\hat\Gamma_{\hat{\cI}}}:\cL(\cH\otimes\cH_{\hat{\cI}})\rightarrow\cL(\cK\otimes\cH_{\Omega_{\mathbf{I}^a}}\otimes\cH_{\hat{\cI}})$, and $\Theta_{post}^{k_1}:\cL(\cK\otimes\cH_{\Omega_{\mathbf{I}^a}}\otimes\cH_{\hat{\cI}}\otimes Q^{\prime})\rightarrow\cL(\overline{\cK}\otimes\cH_{\Omega_{\overline{\mathbf{J}}^{j,k_1}}})$ such that for all $\rho\in\cL(\overline{\cH})$, for all $\sigma\in\cL(\cH\otimes\cH_{\hat{\cI}})$, and for all $\omega\in\cL(\cK\otimes\cH_{\Omega_{\mathbf{I}^a}}\otimes\cH_{\hat{\cI}}\otimes Q^{\prime})$, we have
     \begin{align}
      \Theta_{pre}^{j,k_1}(\rho)=&\mathbbm{I}_{\cH}\otimes\mathbf{SWAP}_{Q^\prime\leftrightarrow\cH_{\hat{\cI}}}(\Phi^{\prime*j,k_1}(\rho)\otimes\ket{j}\bra{j}),
      \end{align}
      where $\{\ket{j}\}$ is the orthonormal basis of the Hilbert space $\cH_{\hat{\cI}}$,
      \begin{align}
            \Sigma_{\hat\Gamma_{\hat{\cI}}}(\sigma)=&\sum_{j'}\hat{\Gamma}_{\hat{\mathbf{I}}^{j^\prime}}(\bra{j'}\sigma\ket{j'})\otimes\ket{j^\prime}\bra{j^\prime},
        \end{align}
        where $\hat{\Gamma}_{\hat{\mathbf{I}}^{j^\prime}}=\sum_{a^\prime}p(a'|j)\hat{\Gamma}_{\mathbf{I}^{a^\prime}}$, $\hat{\Gamma}_{\mathbf{I}^{a^\prime}}=\sum_{b^\prime}\Phi^{a^\prime}_{b^\prime}\otimes\ket{b^{\prime}}\bra{b^{\prime}}$, $\{\ket{j'}\}$ is the orthonormal basis of the Hilbert space $\cH_{\hat{\cI}}$,  and $\{\ket{b^\prime}\}$ is the orthonormal basis of Hilbert space $\cH_{\Omega_{\hat{\mathbf{I}}^{a^\prime}}}$,
        \begin{align}
            \Theta_{post}^{k_1}(\omega)=\sum_{c,j'',b}\tilde\Phi^{j'',b,k_1}_c(\bra{b,j''}\omega\ket{b,j''})\otimes\ket{c}\bra{c}.
        \end{align}
        Here, $\{\ket{j''}\}$ is the orthonormal basis of the Hilbert space $\cH_{\hat{\cI}}$, $\{\ket{b}\}$ is the orthonormal basis of Hilbert space $\cH_{\Omega_{\mathbf{I}^a}}$, and $\{\ket{c}\}$ is the orthonormal basis of Hilbert space $\cH_{\Omega_{\overline{\mathbf{J}}^{j}}}$. Similarly consider
        \begin{align}
         \Sigma_{\hat\Gamma_{\cI}}(\sigma)=&\sum_{a'}\hat{\Gamma}_{\mathbf{I}^{a^\prime}}(\bra{a^\prime}\sigma\ket{a^\prime})\otimes\ket{a^\prime}\bra{a^\prime}\nonumber\\
      \Theta_{pre}^{j,k_2}(\rho)=&\mathbbm{I}_{\cH}\otimes\mathbf{SWAP}_{Q\leftrightarrow\cH_{\Omega_{\mathbf{J}^{j,k_2}}}}(\sum_{a''}\Phi^{\prime j,k_2}_{a''}(\rho)\otimes\ket{a''}\bra{a''}),\nonumber\\
       \Theta_{post}^{j,k_2}(\omega)=&\sum_{c,a,b}\tilde\Phi^{*j,b,k_2}_c(\bra{b,a}\omega\ket{b,a})\otimes\ket{c}\bra{c},\nonumber\\
       \Theta_{pre}^{j,k_3}(\rho)=&\sum_{a''}\tilde{\Delta}^{ j,k_3}_{a''}(\rho)\otimes\ket{a''}\bra{a''},\nonumber\\
        \Theta_{post}^{j,k_3}(\omega)=&\sum_{c,a,b}\tilde{\Gamma}^{ j,a,k_3}_c(\bra{b,a}\omega\ket{b,a})\otimes\ket{c}\bra{c}.
        \end{align}
        
         It can then be easily verified that       
    \begin{align}
        \hat\Gamma_{\overline{\mathbf{J}}^j}(\rho)=\sum_{k_1} q(k_1)~\Theta^{k_1}_{post} &\circ(\Sigma_{\hat{\cI}}\otimes\mathbbm{I}_{Q^{\prime}})\circ\Theta^{j,k_1}_{pre}\nonumber\\
       +&\sum_{k_2}q(k_2)~\Theta^{j,k_2}_{post} \circ(\Sigma_{\cI}\otimes\mathbbm{I}_{Q})\circ\Theta^{j,k_2}_{pre}\nonumber\\
       &\qquad\quad+\sum_{k_3}q(k_3)~\Theta^{j,k_3}_{post} \circ\Sigma_{\cI}\circ\Theta^{j,k_3}_{pre}.
    \end{align}
    
    Clearly, the quantum channels $\hat\Gamma_{\overline{\cV}^{k_1}[\cI]^j}=\Theta^{k_1}_{post} \circ(\Sigma_{\hat{\cI}}\otimes\mathbbm{I}_{Q^{\prime}})\circ\Theta^{ j,k_1}_{pre}$, $\hat\Gamma_{\cV^{k_2}[\cI]^j}=\Theta^{j,k_2}_{post} \circ(\Sigma_{\cI}\otimes\mathbbm{I}_{Q})\circ\Theta^{ j,k_2}_{pre}$, and $\hat\Gamma_{\cV^{k_3}[\cI]^j}=\Theta^{j,k_3}_{post} \circ\Sigma_{\cI}\circ\Theta^{ j,k_3}_{pre}$ . 
   
   Now consider two sets of instruments $\tilde{\cI}_1$ and $\tilde{\cI}_2$ such that $\tilde{\cI}_i=\cW[\cI_i]$ for $i=1,2$. Then 
   \begin{align}
     \widehat{\cD}(\tilde\cI_1,\tilde\cI_2):=&\widehat{\cD}(\cW[\cI_1],\cW[\cI_2]),\nonumber\\
    \leq &\sum_{k_1}q(k_1)\widehat{\cD}(\overline{\cV}^{k_1}[\cI_1],\overline{\cV}^{k_1}[\cI_2])\nonumber\\
    &\qquad\qquad+\sum_{k_2}q(k_2)\widehat{\cD}(\cV^{k_2}[\cI_1],\cV^{k_2}[\cI_2]),\nonumber\\
    &\qquad\qquad\qquad+\sum_{k_3}q(k_3)\widehat{\cD}(\cV^{k_3}[\cI_1],\cV^{k_3}[\cI_2])\nonumber\\
    \leq &\sum_{k_1}q(k_1)\widehat{\cD}(\cV^{k_1}[\hat{\cI}_1],\cV^{k_1}[\hat{\cI}_2])\nonumber\\
    &\qquad\qquad+\sum_{k_2}q(k_2)\widehat{\cD}(\cV^{k_2}[\cI_1],\cV^{k_2}[\cI_2]),\nonumber\\
    &\qquad\qquad\qquad+\sum_{k_3}q(k_3)\widehat{\cD}(\cV^{k_3}[\cI_1],\cV^{k_3}[\cI_2])\nonumber\\
    \leq &\sum_{k_1}q(k_1)\widehat{\cD}(\hat{\cI}_1,\hat{\cI}_2)+\sum_{k_2}q(k_2)\widehat{\cD}(\cI_1,\cI_2)\nonumber\\
    &\qquad\qquad\qquad+\sum_{k_3}q(k_3)\widehat{\cD}(\cI_1,\cI_2).\label{Eq:Dist_mono_IB_first}
\end{align}
Note that  
\begin{align}
    \cD_{\diamond}(\hat{\mathbf{I}}^j_1,\hat{\mathbf{I}}^j_1)\leq&\sum_a p(a|j)\cD_{\diamond}(\mathbf{I}^a_1,\mathbf{I}^a_2)\nonumber\\
    \leq&\max_a \cD_{\diamond}(\mathbf{I}^a_1,\mathbf{I}^a_2)\nonumber\\
    =&\widehat{\cD}(\cI_1,\cI_2).
\end{align}
Now as the above equation is true for any $j$ it implies that
\begin{align}
    \widehat{\cD}(\hat{\cI}_1,\hat{\cI}_2)=&\max_j\cD_{\diamond}(\hat{\mathbf{I}}^j_1,\hat{\mathbf{I}}^j_1)\nonumber\\
    \leq&\widehat{\cD}(\cI_1,\cI_2)
\end{align}
 Using this in Eq. \eqref{Eq:Dist_mono_IB_first}, we get
 \begin{align}
     \widehat{\cD}(\tilde\cI_1,\tilde\cI_2)\leq &\sum_{k_1}q(k_1)\widehat{\cD}(\hat{\cI}_1,\hat{\cI}_2)+\sum_{k_2}q(k_2)\widehat{\cD}(\cI_1,\cI_2)\nonumber\\
    &\qquad\qquad\qquad+\sum_{k_3}q(k_3)\widehat{\cD}(\cI_1,\cI_2)\nonumber\\
    \leq&\sum_{k_1}q(k_1)\widehat{\cD}(\cI_1,\cI_2)+\sum_{k_2}q(k_2)\widehat{\cD}(\cI_1,\cI_2)\nonumber\\
    &\qquad\qquad\qquad+\sum_{k_3}q(k_3)\widehat{\cD}(\cI_1,\cI_2)\nonumber\\
    =&\widehat{\cD}(\cI_1,\cI_2)
 \end{align}

Hence $\widehat{\cD}$ is monotonically non-increasing under the free transformations of incompatibility preservability.
\end{proof}

\subsection{Proof of Theorem \ref{Th:Dist_mono_SIP}}\label{App:Dist_mono_SIP}
\begin{proof}
    The free transformation of the resource theory of strong incompatibility preservability is of the form
     \begin{align}
       \overline{\Phi}^j_c=\sum_{k_1}q(k_1)\sum_{a,b}&\tilde{\Phi}^{j,b,k_1}_c\circ(\Phi^{a}_b\otimes\mathbbm{I}_{Q})\circ\Phi^{\prime  j,k_1}_{a}\nonumber\\
       &+\sum_{k_2}q(k_2)\sum_{a,b}\tilde{\Gamma}_c^{j,a,k_2}\circ\Phi^a_b\circ\tilde{\Delta}_a^{j,k_2}.
    \end{align}
    Symbolically, we represent it as $\overline{\cJ}=\sum_{k_1}q(k_1)\cV^{k_1}[\cI]+\sum_{k_2}q(k_2)\cV^{k_2}[\cI]:=\cW[\cI]$ where the set of instruments $\cV^{k_1}[\cI]=\{\cV^{k_1}[\cI]^{j,b}=\{\cV^{k_1}[\cI]^{j,b}_c=\sum_{a,b}\tilde{\Phi}^{j,b,k_1}_c\circ(\Phi^{a}_b\otimes\mathbbm{I}_{Q})\circ\Phi^{\prime j,k_1}_{a}\}\}$ and $\cV^{k_2}[\cI]=\{\cV^{k_2}[\cI]^{j}=\{\sum_{a,b}\tilde{\Gamma}_c^{j,a,k_2}\circ\Phi^a_b\circ\tilde{\Delta}_a^{j,k_2}\}\}$. We can also write 
    \begin{align}
        \hat\Gamma_{\overline{\mathbf{J}}^j}(\rho)=&\sum_c\overline{\Phi}_c^j(\rho)\otimes\ket{c}\bra{c}\nonumber\\ =&\sum_c\Big(\sum_{k_1}q(k_1)\sum_{a,b}\tilde{\Phi}^{j,b,k_1}_c\circ(\Phi^{a}_b\otimes\mathbbm{I}_{Q})\circ\Phi^{\prime  j,k_1}_{a}\nonumber\\
       &+\sum_{k_2}q(k_2)\sum_{a,b}\tilde{\Gamma}_c^{j,a,k_2}\circ\Phi^a_b\circ\tilde{\Delta}_a^{j,k_2}\Big)\otimes\ket{c}\bra{c}\nonumber\\
        =&\sum_{k_1}q(k_1)\hat\Gamma_{\cV^{k_1}[\cI]^j}+\sum_{k_2}q(k_2)\hat\Gamma_{\cV^{k_2}[\cI]^j}
    \end{align}
     Consider the following quantum channels $\Theta_{pre}^{j,k_1}:\cL(\overline{\cH})\rightarrow\cL(\cH\otimes\cH_{\Omega_{\mathbf{J}^{j,k_1}}}\otimes Q)$, $\Sigma_{\hat\Gamma_{\cI}}:\cL(\cH\otimes\cH_{\Omega_{\mathbf{J}^{j,k_1}}})\rightarrow\cL(\cK\otimes\cH_{\Omega_{\mathbf{I}^a}}\otimes\cH_{\Omega_{\mathbf{J}^{j,k_1}}})$, and $\Theta_{post}^{j,k_1}:\cL(\cK\otimes\cH_{\Omega_{\mathbf{I}^a}}\otimes\cH_{\Omega_{\mathbf{J}^{j,k_1}}}\otimes\cH_Q)\rightarrow\cL(\overline{\cK}\otimes\cH_{\Omega_{\overline{\mathbf{J}}^{j,k_1}}})$ such that for all $\rho\in\cL(\overline{\cH})$, for all $\sigma\in\cL(\cH\otimes\cH_{\Omega_{\mathbf{J}^{j,k_1}}})$, and for all $\omega\in\cL(\cK\otimes\cH_{\Omega_{\mathbf{I}^a}}\otimes\cH_{\Omega_{\mathbf{J}^{j,k_1}}}\otimes\cH_Q)$ we have
     \begin{align}
      \Theta_{pre}^{j,k_1}(\rho)=\mathbbm{I}_{\cH}\otimes\mathbf{SWAP}_{Q\leftrightarrow\cH_{\Omega_{\mathbf{J}^{j,k_1}}}}(\sum_{a''}\Phi^{\prime j,k_1}_{a''}(\rho)\otimes\ket{a''}\bra{a''}),
        \end{align}
        where $\{\ket{a''}\}$ is the orthonormal basis of Hilbert space $\cH_{\Omega_{\mathbf{J}^{j,k_1}}}$,
         \begin{align}
            \Sigma_{\hat\Gamma_{\cI}}(\sigma)=\sum_{a'}\hat{\Gamma}_{\mathbf{I}^{a^\prime}}(\bra{a^\prime}\sigma\ket{a^\prime})\otimes\ket{a^\prime}\bra{a^\prime},
        \end{align}
        where $\{\ket{a^\prime}\}$ is the orthonormal basis of Hilbert space $\cH_{\Omega_{\mathbf{J}^{j,k_1}}}$, $\hat{\Gamma}_{\mathbf{I}^{a^\prime}}=\sum_{b^\prime}\Phi^{a^\prime}_{b^\prime}\otimes\ket{b^{\prime}}\bra{b^{\prime}}$, and $\{\ket{b^\prime}\}$ is the orthonormal basis of Hilbert space $\cH_{\Omega_{\mathbf{I}^{a^\prime}}}$,
        \begin{align}
            &\Theta_{post}^{j,k_1}(\omega)=\sum_{c,a,b}\tilde\Phi^{j,b,k_1}_c(\bra{b,a}\omega\ket{b,a})\otimes\ket{c}\bra{c},
        \end{align}
        where $\{\ket{a}\}$ is the orthonormal basis of Hilbert space $\cH_{\Omega_{\mathbf{J}^{j,k_1}}}$, 
        $\{\ket{b}\}$ is the orthonormal basis of Hilbert space $\cH_{\Omega_{\mathbf{I}^a}}$, and $\{\ket{c}\}$ is the orthonormal basis of Hilbert space $\cH_{\Omega_{\overline{\mathbf{J}}^j}}$. Similarly considering
        \begin{align}
            \Theta_{pre}^{j,k_2}(\rho)=&\sum_{a''}\tilde{\Delta}^{ j,k_2}_{a''}(\rho)\otimes\ket{a''}\bra{a''},\nonumber\\
        \Theta_{post}^{j,k_2}(\omega)=&\sum_{c,a,b}\tilde{\Gamma}^{ j,a,k_3}_c(\bra{b,a}\omega\ket{b,a})\otimes\ket{c}\bra{c}.
        \end{align}
       Then, it can be easily verified that       
    \begin{align}
        \hat\Gamma_{\overline{\mathbf{J}}^j}(\rho)=\sum_{k_1} q(k_1)~\Theta^{j,k_1}_{post} &\circ(\Sigma_{\cI}\otimes\mathbbm{I}_{Q})\circ\Theta^{j,k_1}_{pre}\nonumber\\
       +&\sum_{k_2}q(k_2)~\Theta^{j,k_2}_{post} \circ\Sigma_{\cI}\circ\Theta^{j,k_2}_{pre}
    \end{align}
     Consider two quantum instruments of the form $\tilde{\cI}_1=\cW[\cI_1]$ and $\tilde{\cI}_2=\cW[\cI_2]$, then by using Eq. (\ref{Eq:supermap_contractive_distance}), we can write
    \begin{align}
    \widehat{\cD}(\tilde\cI_1,\tilde\cI_2):=&\widehat{\cD}(\cW[\cI_1],\cW[\cI_2]),\nonumber\\
    \leq &\sum_{k_1}q(k_1)\widehat{\cD}(\cV^{k_1}[\cI_1],\cV^{k_1}[\cI_2])\nonumber\\
    &\qquad\qquad+\sum_{k_2}q(k_2)\widehat{\cD}(\cV^{k_2}[\cI_1],\cV^{k_2}[\cI_2]),\nonumber\\
    \leq &\sum_{k_1}q(k_1)\widehat{\cD}(\cI_1,\cI_2)+\sum_{k_2}q(k_2)\widehat{\cD}(\cI_1,\cI_2)\nonumber\\
    = & \widehat{\cD}(\cI_1,\cI_2).
\end{align}
Hence $\widehat{\cD}$ is monotonically non-increasing under the free transformations of strong incompatibility preservability.
\end{proof}

\subsection{Proof of Theorem \ref{Th:Dist_mono_TC}}\label{App:Dist_mono_TC}
\begin{proof}
    Symbolically, the transformation in Eq. \eqref{Eq:free_operation_trad_comp} can be written as $\tilde\cJ=\cV[\cJ]$. Here $\mathbf{F} \in \mathscr{C}(\cH,\cH\otimes Q)$ and $\mathbf{K}^\lambda\in\mathscr{C}(\cK\otimes Q,\tilde\cK)$ with $ Q$ being an arbitrary auxiliary Hilbert space with $\{\ket{\lambda}\}$ being an orthonormal basis spanning $\cH_{\Lambda}$. We can also write
    \begin{align}
        \hat\Gamma_{\tilde{\mathbf{J}}^j}(\rho)=&\sum_b\tilde{\Phi}_b^j(\rho)\otimes\ket{b}\bra{b}\nonumber\\ =&\sum_b\sum_{\lambda,i,a}p(b\vert i, j,\lambda,a)q(i\vert j,\lambda)\mathbf{K}^\lambda\circ(\Phi_a^i\otimes \mathbbm{I})\nonumber\\ &\qquad\qquad\qquad\qquad\qquad\qquad\circ\mathbf{F}(\rho)\otimes\ket{b}\bra{b}\nonumber\\
        =&\hat\Gamma_{\tilde{\cV}[\cI]^j}
    \end{align}
    Consider the following quantum channels $\Theta_{pre}^j:\cL(\cH)\rightarrow\cL(\cH\otimes\cH_I\otimes Q\otimes\cH_{\Lambda})$, $\Sigma_{\hat\Gamma_{\cJ}}:\cL(\cH\otimes\cH_I)\rightarrow\cL(\cK\otimes\cH_{\Omega_{\mathbf{J}^i}}\otimes\cH_I)$, and $\Theta_{post}^j:\cL(\cK\otimes\cH_{\Omega_{\mathbf{J}^i}}\otimes\cH_I\otimes Q\otimes\cH_{\Lambda})\rightarrow\cL(\tilde\cK\otimes\cH_{\Omega_{\mathbf{\tilde{J}}^i}})$ such that for all $\rho\in\cL(\cH)$, for all $\sigma\in\cL(\cH\otimes\cH_I)$, and for all $\omega\in\cL(\cK\otimes\cH_{\Omega_{\mathbf{J}^i}}\otimes\cH_I\otimes Q\otimes\cH_{\Lambda})$ we have
      \begin{align}
      \Theta_{pre}^j(\rho)=\mathbbm{I}_{\cH}&\otimes\mathbf{SWAP}_{ Q\leftrightarrow\cH_I}\otimes\mathbbm{I}_{\cH_{\Lambda}}\nonumber\\
      &(\mathbf{F}(\rho)\otimes\sum_{i'',\lambda'}q(i''\vert j,\lambda')\ket{i''}\bra{i''}\otimes\ket{\lambda'}\bra{\lambda'}),
        \end{align}
        where $\{\ket{i''}\}$ is the orthonormal basis of Hilbert space $\cH_I$, $\{\ket{\lambda'}\}$ is the orthonormal basis of Hilbert space $\cH_{\Lambda}$.
        \begin{align}
            \Sigma_{\hat\Gamma_{\cJ}}(\sigma)=\sum_{i'}\hat{\Gamma}_{\mathbf{J}^i}(\bra{i'}\sigma\ket{i'})\otimes\ket{i'}\bra{i'}
        \end{align}
        where $\{\ket{i'}\}$ is the orthonormal basis of Hilbert space $\cH_I$, $\{\ket{a'}\}$ is the orthonormal basis of Hilbert space $\cH_{\Omega_{\mathbf{J}^i}}$ and $\mathbbm{I_{\overline{\mathfrak{R}}}}=\mathbbm{I}_{ Q}\otimes\mathbbm{I}_{\Lambda}$.
        \begin{align}
            &\Theta_{post}^j(\omega)=\sum_{b,a,i,\lambda}p(b\vert a,i,\lambda,j)\cK^\lambda(\bra{a,i,\lambda}\omega\ket{a,i,\lambda})\otimes\ket{b}\bra{b}.
        \end{align}
        Here $\{\ket{a}\}$ is the orthonormal basis of Hilbert space $\cH_{\Omega_{\mathbf{J}^i}}$, 
        $\{\ket{b}\}$ is the orthonormal basis of Hilbert space $\cH_{\Omega_{\mathbf{\tilde{J}}^i}}$, $\{\ket{i}\}$ is the orthonormal basis of Hilbert space $\cH_I$, $\{\ket{\lambda}\}$ is the orthonormal basis of Hilbert space $\cH_{\Lambda}$. 
      It can be easily verified that       
    \begin{align}
        \hat\Gamma_{\tilde{\mathbf{J}}^j}(\rho)=\Theta_{post}^j\circ(\Sigma_{\hat\Gamma_{\cJ}}\otimes\mathbbm{I}_{\overline{\mathfrak{R}}})\circ\Theta_{pre}^j(\rho)
    \end{align}
    which is of the form of Eq.(\ref{gensupermap}). Consider two quantum instruments of the form $\tilde{\cI}_1=\cV[\cI_1]$ and $\tilde{\cI}_2=\cV[\cI_2]$, then by using Eq. (\ref{Eq:supermap_contractive_distance}), we can write
    \begin{align}
    \widehat{\cD}(\tilde\cI_1,\tilde\cI_2)=&\widehat{\cD}(\cV[\cI_1],\cV[\cI_2]),\nonumber\\
    \leq & \widehat{\cD}(\cI_1,\cI_2).
\end{align}
Hence $\widehat{\cD}$ is monotonically non-increasing under the free transformations of the resource theory of traditional incompatibility.
\end{proof}

\subsection{Proof of Theorem \ref{Th:Dist_mon_PC}}\label{App:Dist_mon_PC}
\begin{proof}
    According to Theorem \ref{Th:free_op_para_comp}, the free transformation of parallel compatibility is
     \begin{align}
    \hat\Gamma_{\overline{\mathbf{J}}^j}(\rho)=&\sum_c\overline{\Phi}_c^j(\rho)\otimes\ket{c}\bra{c}\nonumber\\ =&\sum_c\Big(\sum_{k_1}q(k_1)\sum_{l,b}\tilde{\Phi}^{j,b,k_1}_c\circ(\Phi^{*l}_b\otimes\mathbbm{I}_{Q})\circ\Phi^{\prime  j,k_1}_{l}\nonumber\\
       &+\sum_{k_2}q(k_2)\sum_{b}\tilde{\Gamma}_c^{j,b,k_2}\circ\hat{\Phi}^{j}_b\circ\tilde{\Delta}^{j,k_2}\Big)\otimes \ket{c}\bra{c}\nonumber\\
       =&\sum_{k_1}q(k_1)\hat\Gamma_{\overline{\cV}^{k_1}[\cI]^j}+\sum_{k_2}q(k_2)\hat\Gamma_{\overline{\cV}^{k_2}[\cI]^j}
\end{align}
Here $\overline{\cV}^{k_1}[\cI]^j:=\hat\Gamma_{\cV^{k_1}[\cI^*]^j}=\sum_c\sum_{l,b}\tilde{\Phi}^{j,b,k_1}_c\circ(\Phi^{*l}_b\otimes\mathbbm{I}_{Q})\circ\Phi^{\prime  j,k_1}_{l}\otimes\ket{c}\bra{c}$ and $\overline{\cV}^{k_2}[\cI]^j:=\hat\Gamma_{\cV^{k_2}[\hat{\cI}]^j}=\sum_c\sum_{b}\tilde{\Gamma}_c^{j,b,k_2}\circ\hat{\Phi}^{j}_b\circ\tilde{\Delta}^{j,k_2}\otimes\ket{c}\bra{c}$.

   Symbolically, we represent $\overline{\cJ}=\sum_{k_1}q(k_1)\cV^{k_1}[\cI^*]+\sum_{k_2}q(k_2)\cV^{k_2}[\hat{\cI}]:=\cW[\cI]$ where the set of instruments $\cV^{k_1}[\cI^*]=\{\cV^{k_1}[\cI^*]^{j,b}=\{\cV^{k_1}[\cI^*]^{j,b}_c=\sum_{l,b}\tilde{\Phi}^{j,b,k_1}_c\circ(\Phi^{*l}_b\otimes\mathbbm{I}_{Q})\circ\Phi^{\prime j,k_1}_{l}\}\}$ and $\cV^{k_2}[\hat{\cI}]=\{\cV^{k_2}[\hat{\cI}]^{j}=\{\sum_{b}\tilde{\Gamma}_c^{j,b,k_2}\circ\hat{\Phi}^{j}_b\circ\tilde{\Delta}^{j,k_2}\}\}$.
   
     Consider the following quantum channels $\Theta_{pre}^{j,k_1}:\cL(\overline{\cH})\rightarrow\cL(\cH\otimes\cH_{\Omega_{\mathbf{J}^{j,k_1}}}\otimes Q)$, $\Sigma_{\hat\Gamma_{\cI}}:\cL(\cH\otimes\cH_{\Omega_{\mathbf{J}^{j,k_1}}})\rightarrow\cL(\cK\otimes\cH_{\Omega_{\mathbf{I^*}^l}}\otimes\cH_{\Omega_{\mathbf{J}^j}})$, and $\Theta_{post}^{j,k_1}:\cL(\cK\otimes\cH_{\Omega_{\mathbf{I^*}^l}}\otimes\cH_{\Omega_{\mathbf{J}^{j,k_1}}}\otimes\cH_Q)\rightarrow\cL(\overline{\cK}\otimes\cH_{\Omega_{\overline{\mathbf{J}}^j}})$ such that for all $\rho\in\cL(\overline{\cH})$, for all $\sigma\in\cL(\cH\otimes\cH_{\Omega_{\mathbf{J}^{j,k_1}}})$, and for all $\omega\in\cL(\cK\otimes\cH_{\Omega_{\mathbf{I^*}^l}}\otimes\cH_{\Omega_{\mathbf{J}^{j,k_1}}}\otimes\cH_Q)$ we have
     \begin{align}
      \Theta_{pre}^j(\rho)=\mathbbm{I}_{\overline{\cH}}\otimes\mathbf{SWAP}_{Q\leftrightarrow\cH_{\Omega_{\mathbf{J}^j}}}(\sum_{l''}\Phi'^j_{l''}(\rho)\otimes\ket{l''}\bra{l''}),
        \end{align}
        where $\{\ket{l''}\}$ is the orthonormal basis of Hilbert space $\cH_{\Omega_{\mathbf{J}^j}}$.
         \begin{align}
            \Sigma_{\hat\Gamma_{\cI^*}}(\sigma)=\sum_{l'}\hat{\Gamma}_{\mathbf{I}^{*l'}}(\bra{l'}\sigma\ket{l'})\otimes\ket{l'}\bra{l'},
        \end{align}
        where $\{\ket{l'}\}$ is the orthonormal basis of Hilbert space $\cH_{\Omega_{\mathbf{J}^j}}$, $\hat{\Gamma}_{\mathbf{I}^{*l^\prime}}=\sum_{b^\prime}\Phi^{*l^\prime}_{b^\prime}\otimes\ket{b^{\prime}}\bra{b^{\prime}}$, and $\{\ket{b'}\}$ is the orthonormal basis of Hilbert space $\cH_{\Omega_{\mathbf{I}^{*l'}}}$.
        \begin{align}
            &\Theta_{post}^j(\omega)=\sum_{c,l,b}\tilde\Phi^{l,b}_c(\bra{b,l}\omega\ket{b,l})\otimes\ket{c}\bra{c}.
        \end{align}
        Here $\{\ket{l}\}$ is the orthonormal basis of Hilbert space $\cH_{\Omega_{\mathbf{J}^j}}$, 
        $\{\ket{b}\}$ is the orthonormal basis of Hilbert space $\cH_{\Omega_{\mathbf{I}^{*l}}}$, and $\{\ket{c}\}$ is the orthonormal basis of Hilbert space $\cH_{\Omega_{\overline{\mathbf{J}}^j}}$.

        Similarly, by defining n-dimensional Hilbert space $\cH_{\hat{\cI}}$ with $n=|\hat{\cI}|$ we consider the quantum channels $\Theta_{pre}^{j,k_2}:\cL(\overline{\cH})\rightarrow\cL(\cH\otimes\cH_{\hat{\cI}})$, $\Sigma_{\hat{\Gamma}_{\hat{\cI}}}:\cL(\cH\otimes\cH_{\hat{\cI}})\rightarrow\cL(\cK\otimes\cH_{\Omega_{\hat{\mathbf{I}}^j}}\otimes\cH_{\hat{\cI}})$, and $\Theta_{post}^{k_2}:\cL(\cK\otimes\cH_{\Omega_{\hat{\mathbf{I}}^j}}\otimes\cH_{\hat{\cI}})\rightarrow\cL(\overline{\cK}\otimes\cH_{\Omega_{\overline{\mathbf{J}}^j}})$ such that for all $\rho\in\cL(\cH)$, $\sigma\in\cL(\cH\otimes\cH_{\hat{\cI}})$ and $\omega\in\cL(\cK\otimes\cH_{\Omega_{\hat{\mathbf{I}}^j}}\otimes\cH_{\hat{\cI}})$, we have
        \begin{align}
    \Theta_{pre}^{j,k_2}(\rho)=\tilde{\Delta}^{j,k_2}(\rho)\otimes\ket{j}\bra{j},
    \end{align}
    where $\{\ket{j}\}$ is the orthogonal basis of the Hilbert space $\cH_{\hat{\cI}}$,
    \begin{align}
    \Sigma_{\hat{\Gamma}_{\hat{\cI}}}(\sigma)=\sum_{j'}\hat{\Gamma}_{\hat{\mathbf{I}}^{j'}}(\bra{j'}\sigma\ket{j'})\otimes\ket{j'}\bra{j'},
    \end{align}
    where $\{\ket{j'}\}$ is the orthogonal basis of the Hilbert space $\cH_{\hat{\cI}}$, $\hat{\Gamma}_{\hat{\mathbf{I}}^{j'}}=\sum_{b'}\hat{\Phi}^{j'}_{b'}\otimes\ket{b'}\bra{b'}$, and $\{\ket{b'}\}$ is the orthogonal basis of the Hilbert space $\cH_{\Omega_{\hat{\mathbf{I}}^{j'}}}$,
    \begin{align}
    \Theta_{post}^{k_2}(\omega)=\sum_{j'',c,b}\tilde\Gamma^{j'',b,k_2}_c(\bra{b,j''}\omega\ket{b,j''})\otimes\ket{c}\bra{c}.
\end{align}
Here $\{\ket{j''}\}$ is the orthogonal basis of the Hilbert space $\cH_{\hat{\cI}}$, $\{\ket{b}\}$ is the orthogonal basis of the Hilbert space $\cH_{\Omega_{\hat{\mathbf{I}}^{j''}}}$ and $\{\ket{c}\}$ is the orthogonal basis of the Hilbert space $\cH_{\Omega_{\overline{\mathbf{J}}^j}}$. Then, it can be easily verified that       
    \begin{align}
        \hat\Gamma_{\overline{\mathbf{J}}^j}(\rho)=\sum_{k_1}q(k_1)\Theta_{post}^{j,k_1}\circ(\Sigma_{\hat\Gamma_{\cI^*}}\otimes&\mathbbm{I}_{\overline{\mathfrak{R}}})\circ\Theta_{pre}^{j,k_1}(\rho)\nonumber\\
        &+\sum_{k_2}q(k_2)\Theta^{k_2}_{post} \circ\Sigma_{\hat\Gamma_{\hat{\cI}}}\circ\Theta^{j,k_2}_{pre}
    \end{align}
    Consider two quantum instruments of the form $\tilde{\cI}_1=\cW[\cI_1]$ and $\tilde{\cI}_2=\cW[\cI_2]$, then by using Eq. (\ref{Eq:supermap_contractive_distance}), we can write
    \begin{align}
    \widehat{\cD}(\tilde\cI_1,\tilde\cI_2):=&\widehat{\cD}(\cW[\cI_1],\cW[\cI_2]),\nonumber\\
    \leq &\sum_{k_1}q(k_1)\widehat{\cD}(\overline{\cV}^{k_1}[\cI_1],\overline{\cV}^{k_1}[\cI_2])\nonumber\\
    &\qquad\qquad+\sum_{k_2}q(k_2)\widehat{\cD}(\overline{\cV}^{k_2}[\hat{\cI}_1],\overline{\cV}^{k_2}[\cI_2]),\nonumber\\
    = &\sum_{k_1}q(k_1)\widehat{\cD}(\cV^{k_1}[\cI^*_1],\cV^{k_1}[\cI^*_2])\nonumber\\
    &\qquad\qquad+\sum_{k_2}q(k_2)\widehat{\cD}(\cV^{k_2}[\hat{\cI}_1],\cV^{k_2}[\hat{\cI}_2]),\nonumber\\
    \leq &\sum_{k_1}q(k_1)\widehat{\cD}(\cI^*_1,\cI^*_2)+\sum_{k_2}q(k_2)\widehat{\cD}(\hat{\cI}_1,\hat{\cI}_2).
    \label{Eq:parallel_D_monotonicity} 
\end{align}
Note that $\hat{\Gamma}_{\hat{\mathbf{I}}_1^{j}}=\hat{\Gamma}_{\hat{\mathbf{I}}_2^{j}}$ for all $j\in\{m+1,m+2,\ldots,n\}$ (as same free transformation $\overline{\cV}^{k_2}$ is applied on both sets $\cI_1$ and $\cI_2$) implies $\cD_{\Diamond}(\hat{\Gamma}_{\hat{\mathbf{I}}_1^{j}},\hat{\Gamma}_{\hat{\mathbf{I}}_2^{j}})=0$ for all $j\in\{m+1,m+2,\ldots,n\}$ and same permutation $\pi$ is applied to both $\cI_1$ and $\cI_2$ (again as same free transformation $\overline{\cV}^{k_2}$ is applied on both sets $\cI_1$ and $\cI_2$). Then
\begin{align}
   \widehat{\cD}(\hat{\cI}_1,\hat{\cI}_2)=&\max_{j\in\{1,\ldots,n\}}\cD_{\Diamond}(\hat{\Gamma}_{\hat{\mathbf{I}}_1^{j}},\hat{\Gamma}_{\hat{\mathbf{I}}_2^{j}})\nonumber\\
   =&\max_{j\in\{1,\ldots,m\}}\cD_{\Diamond}(\hat{\Gamma}_{\hat{\mathbf{I}}_1^{j}},\hat{\Gamma}_{\hat{\mathbf{I}}_2^{j}})\nonumber\\
   &=\widehat{\cD}(\cI_1,\cI_2) 
   \label{Eq:D_Icap_I_equality}
\end{align}
Similarly, we can show that $\widehat{\cD}(\cI^*_1,\cI^*_2)=\widehat{\cD}(\cI_1,\cI_2)$. Thus, using Eq. \eqref{Eq:parallel_D_monotonicity} and Eq. \eqref{Eq:D_Icap_I_equality} we obtain
\begin{align}
    \widehat{\cD}(\tilde\cI_1,\tilde\cI_2)\leq &\sum_{k_1}q(k_1)\widehat{\cD}(\cI^*_1,\cI^*_2)+\sum_{k_2}q(k_2)\widehat{\cD}(\hat{\cI}_1,\hat{\cI}_2)\nonumber\\
    = &\sum_{k_1}q(k_1)\widehat{\cD}(\cI_1,\cI_2)+\sum_{k_2}q(k_2)\widehat{\cD}(\cI_1,\cI_2)\nonumber\\
     = & \widehat{\cD}(\cI_1,\cI_2) 
\end{align}
Hence $\widehat{\cD}$ is monotonically non-increasing under the free transformations of parallel compatibility.
\end{proof}

\section{Clarification on Remark \ref{Re:EBC_compact}}\label{App:EBC_compact}
Let us consider two instruments $\mathbf{I}=\{\Lambda_a\}_{a=1}^{\Omega_{\mathbf{I}}}\in\mathscr{I}(\cH_A,\cK_A)$ and $\mathbf{J}=\{\Phi_b\}_{b=1}^{\Omega_{\mathbf{J}}}\in\mathscr{F}(\cH_A,\cK_A)$ with $\Omega_{\mathbf{J}} \geq \Omega_{\mathbf{I}}$. The distance between these two instruments can be written as
\begin{align}
    \cD_{\Diamond}(\mathbf{I}, \mathbf{J})=&\cD_{\Diamond}(\hat{\Gamma}_{\mathbf{I}}, \hat{\Gamma}_{\mathbf{J}})\nonumber\\
    =&\Big|\Big|\sum_{a=1}^{\Omega_{\mathbf{I}}}\Lambda_a\otimes\ket{a}\bra{a}-\sum_{b=1}^{\Omega_{\mathbf{J}}}\Phi_b\otimes\ket{b}\bra{b}\Big|\Big|_{\Diamond}\nonumber\\
    =&\max_{\rho_{AB}\in\cS(\cH_A\otimes\cH_B)}\Big|\Big|\sum_{a=1}^{\Omega_{\mathbf{I}}}(\Lambda_a\otimes\mathbbm{I}_{\cH_B})(\rho_{AB})\otimes\ket{a}\bra{a}\nonumber\\
    &\qquad-\sum_{b=1}^{\Omega_{\mathbf{J}}}(\Phi_b\otimes\mathbbm{I}_{\cH_B})(\rho_{AB})\otimes\ket{b}\bra{b}~\Big|\Big|_{1}\nonumber\\
    =&\max_{\rho_{AB}\in\cS(\cH_A\otimes\cH_B)}\Big|\Big|\sum_{a=1}^{\Omega_{\mathbf{I}}}((\Lambda_a-\Phi_a)\otimes\mathbbm{I}_{\cH_B})(\rho_{AB})\otimes\ket{a}\bra{a}\nonumber\\
    &\qquad-\sum_{b={\Omega_{\mathbf{I}}}+1}^{\Omega_{\mathbf{J}}}(\Phi_b\otimes\mathbbm{I}_{\cH_B})(\rho_{AB})\otimes\ket{b}\bra{b}\Big|\Big|_{1}\nonumber\\
    =&\max_{\rho_{AB}\in\cS(\cH_A\otimes\cH_B)}\Bigg(\sum_{a=1}^{\Omega_{\mathbf{I}}}\Big|\Big|((\Lambda_a-\Phi_a)\otimes\mathbbm{I}_{\cH_B})(\rho_{AB})\Big|\Big|_1\nonumber\\
    &\qquad+\sum_{b={\Omega_{\mathbf{I}}}+1}^{\Omega_{\mathbf{J}}}\Big|\Big|(\Phi_b\otimes\mathbbm{I}_{\cH_B})(\rho_{AB})\Big|\Big|_{1}\Bigg)\nonumber\\
    =&\max_{\rho_{AB}\in\cS(\cH_A\otimes\cH_B)}\Bigg(\sum_{a=1}^{\Omega_{\mathbf{I}}}\Big|\Big|((\Lambda_a-\Phi_a)\otimes\mathbbm{I}_{\cH_B})(\rho_{AB})\Big|\Big|_1\nonumber\\
    &\qquad+\sum_{b={\Omega_{\mathbf{I}}}+1}^{\Omega_{\mathbf{J}}}\tr[(\Phi_b\otimes\mathbbm{I}_{\cH_B})(\rho_{AB})]\Bigg)\nonumber\\
    =&\max_{\rho_{AB}\in\cS(\cH_A\otimes\cH_B)}\Bigg(\sum_{a=1}^{\Omega_{\mathbf{I}}}\Big|\Big|((\Lambda_a-\Phi_a)\otimes\mathbbm{I}_{\cH_B})(\rho_{AB})\Big|\Big|_1\nonumber\\
     &\qquad+\Big|\Big|\sum_{b={\Omega_{\mathbf{I}}}+1}^{\Omega_{\mathbf{J}}}(\Phi_b\otimes\mathbbm{I}_{\cH_B})(\rho_{AB})\Big|\Big|_{1}\Bigg)\nonumber
    \end{align}
    \begin{align}
      \geq&\max_{\rho_{AB}\in\cS(\cH_A\otimes\cH_B)}\Bigg(\sum_{a=1}^{\Omega_{\mathbf{I}}-1}\Big|\Big|((\Lambda_a-\Phi_a)\otimes\mathbbm{I}_{\cH_B})(\rho_{AB})\Big|\Big|_1\nonumber\\
    &\qquad+\Big|\Big|((\Lambda_{\Omega_{\mathbf{I}}}-\Phi_{\Omega_{\mathbf{I}}}-\sum_{b=\Omega_{\mathbf{I}}+1}^{\Omega_{\mathbf{J}}}\Phi_b)\otimes\mathbbm{I}_{\cH_B})(\rho_{AB})\Big|\Big|_{1}\Bigg)\label{Eq:Dist_n_bound}
\end{align}
Here in the sixth line, we have used the fact that $(\Phi_b\otimes\mathbbm{I}_{\cH_B})(\rho_{AB})$ is positive $\forall b$, in the seventh line we have used the linearity of the trace, and the eighth line comes from the fact that the trace norm satisfies the triangle inequality. Next, we define an instrument $\mathbf{J}^{\prime}=\{\Phi^{\prime}_b\}_{b=1}^{\Omega_{\mathbf{I}}}$ such that
\begin{align}
    \Phi^{\prime}_b=&\Phi_b, \qquad\qquad\qquad \text{for}~b=1,2,\ldots,{\Omega_{\mathbf{I}}}-1\\
    \Phi^{\prime}_{\Omega_{\mathbf{I}}}=&\Phi_{\Omega_{\mathbf{I}}}+\sum_{b=n+1}^{_{\Omega_{\mathbf{J}}}}\Phi_b
\end{align}
Thus, from Eq. \ref{Eq:Dist_n_bound} we can infer that
\begin{align}
    \cD_{\Diamond}(\mathbf{I}, \mathbf{J})\geq&\max_{\rho_{AB}\in\cS(\cH_A\otimes\cH_B)}\Big(\sum_{a=1}^{\Omega_{\mathbf{I}}-1}\Big|\Big|(\Lambda_a-\Phi_a)\otimes\mathbbm{I}_{\cH_B}~\rho_{AB}\Big|\Big|_1\nonumber\\
    &\qquad\quad\quad+\Big|\Big|\Lambda_{\Omega_{\mathbf{I}}}-\Phi_{\Omega_{\mathbf{I}}}-\sum_{b=\Omega_{\mathbf{I}}+1}^{\Omega_{\mathbf{J}}}\Phi_b)\otimes\mathbbm{I}_{\cH_B}~\rho_{AB}\Big|\Big|_{1}\Big)\nonumber\\
    =&\max_{\rho_{AB}\in\cS(\cH_A\otimes\cH_B)}\Big(\sum_{a=1}^{\Omega_{\mathbf{I}}}\Big|\Big|(\Lambda_a-\Phi^{\prime}_a)\otimes\mathbbm{I}_{\cH_B}~\rho_{AB}\Big|\Big|_1\Big)\nonumber\\
    =&\max_{\rho_{AB}\in\cS(\cH_A\otimes\cH_B)}\Big|\Big|\Big(\sum_{a=1}^{\Omega_{\mathbf{I}}}(\Lambda_a-\Phi^{\prime}_a)\otimes\ket{a}\bra{a}\Big)\otimes\mathbbm{I}_{\cH_B}~\rho_{AB}\Big|\Big|_1\nonumber\\
    =&\Big|\Big|\Big(\sum_{a=1}^{\Omega_{\mathbf{I}}}(\Lambda_a-\Phi^{\prime}_a)\otimes\ket{a}\bra{a}\Big)\Big|\Big|_{\Diamond}\nonumber\\
    =&\cD_{\Diamond}(\hat{\Gamma}_{\mathbf{I}}, \hat{\Gamma}_{\mathbf{J}^{\prime}})\nonumber\\
    =&\cD_{\Diamond}(\mathbf{I}, \mathbf{J}^{\prime}).
\end{align}
 Thus we can conclude that, for a given instrument $\mathbf{I}=\{\Lambda_a\}_{a=1}^{\Omega_{\mathbf{I}}}$, with an outcome set ${\Omega_{\mathbf{I}}}$ and another instrument $\mathbf{J}=\{\Phi_b\}_{b=1}^{\Omega_{\mathbf{J}}}$ with outcome set $\Omega_{\mathbf{J}}$ with $\Omega_{\mathbf{J}}\geq {\Omega_{\mathbf{I}}}$, there always exist an instrument $\mathbf{J}^{\prime}=\{\Phi_b\}_{b=1}^{\Omega_{\mathbf{J}^{\prime}}}$ with $\Omega_{\mathbf{J}^{\prime}}=\Omega_{\mathbf{I}}$ such that
\begin{align}
    \cD_{\Diamond}(\mathbf{I}, \mathbf{J})\geq\cD_{\Diamond}(\mathbf{I}, \mathbf{J}^{\prime}).\label{Eq:Dist_montone_compact}
\end{align}

Let us consider two sets of instruments $\cI=\{\mathbf{I}_i\}$ and $\cJ=\{\mathbf{J}_i\}$. Now Eq. \ref{Eq:Dist_montone_compact} holds for every $i$. Then, there will exist another set of instruments $\cJ^{\prime}=\{\mathbf{J}_i^{\prime}\}$, with $\Omega_{\mathbf{J}_i^{\prime}}=\Omega_{\mathbf{I}_i},~\forall~i$, such that
\begin{align}
    \widehat{\cD}(\cI, \cJ)=&\max_i\cD_{\Diamond}(\mathbf{I}_i, \mathbf{J}_i)\nonumber\\
    \geq&\max_i\cD_{\Diamond}(\mathbf{I}_i, \mathbf{J}^{\prime}_i)\nonumber\\
    =&\widehat{\cD}(\cI, \cJ^{\prime})\label{Eq:Dist_montone_set_compact}
\end{align}
Now, consider the resource theory of entanglement preservability. A free set of instruments in this resource theory is a set of instruments whose CP maps are entanglement-breaking. In other words, they all have the form in Eq. \eqref{Eq:EB_Oper}. Consider such a free set $\cJ=\{\mathbf{J}_i=\{\Phi^i_a\}_a\}_i$. Here $\Phi^i_a$ is of the form in Eq. \eqref{Eq:EB_Oper} for every $a,i$. Note that, for every $i$, $\Omega_{\mathbf{J}_i}$ can be arbitrary but finite. We next denote by $\cJ^{\prime}=\{\mathbf{J}^{\prime}_i=\{\Phi^{\prime i}_a\}_{a=1}^{\Omega_{\mathbf{J}^{\prime}_i}}\}_i$, where $\Omega_{\mathbf{J}^{\prime}_i}=\Omega_{\mathbf{I}_i}~\forall ~i$, those set of instruments that are related to the instruments in the set $\cJ$ such that for every $i$ 

\begin{align}
    \Phi^{\prime i}_a=&\Phi_a, \qquad\qquad\qquad \text{for}~a=1,2,\ldots,{\Omega_{\mathbf{I}_i}}-1\\
    \Phi^{\prime}_{\Omega_{\mathbf{I}_i}}=&\Phi_{\Omega_{\mathbf{I}_i}}+\sum_{a=\Omega_{\mathbf{I}_i}+1}^{\Omega_{\mathbf{J}_i}}\Phi_a.
\end{align}

From this definition, we note that if $\Phi^i_a$ are of the form in Eq. \eqref{Eq:EB_Oper} for all $a,i$, then $\Phi^{\prime i}_a$ are also of the same form, and hence, entanglement-breaking, for all $a,i$. Thus we can conclude that in the resource theory of entanglement-preservability, if $\cJ$ is a free set of instruments then $\cJ^{\prime}$ is also a free set of instruments. Combining this fact with Eq. \eqref{Eq:Dist_montone_set_compact}, for a given set of instruments $\cI=\{\mathbf{I}_i\}$, we can restrict the optimization process in the definition of the resource measures $\mathbbm{R}$ in Eq. \eqref{Eq:Def_res_meas} to those sets of free sets of instruments in which free sets of instruments contain instruments with a fixed outcome set $\Omega_{\mathbf{I}_i}$, where $i$ indexes the instruments in that free set of instruments. As all such free sets of instruments are compact, the minimum can always be achieved. Similar logic holds for the resource measure $\overline{\mathbbm{R}}$ in Eq. \eqref{Eq:Def_res_ext_meas} also.

As one can easily provide logic to arrive at the same conclusion for all other instrument-based resource theories considered in this paper, we will not explicitly provide those details here.
\end{document}